\newcolumntype{L}[1]{>{\raggedright\let\newline\\\arraybackslash\hspace{0pt}}p{#1}}
\newcommand{\mbf}{\mathbf}
\newcommand{\mbb}{\mathbb}
\newcommand{\mc}{\mathcal}
\renewcommand{\ol}{\overline}
\newcommand{\tr}{\textrm{Tr}}
\newcommand{\id}{\textrm{id}}
\newcommand{\ket}[1]{|#1\rangle}
\newcommand{\bra}[1]{\langle #1|}
\newcommand{\ip}[2]{\langle#1| #2\rangle}
\newcommand{\op}[2]{|#1\rangle\langle #2|}
\newcommand{\1}{\mathds{1}}
\newcommand{\msf}{\mathsf}
\newcommand{\ML}{\text{ML}}
\newcommand{\wt}{\widetilde}
\DeclareMathOperator{\rank}{rank}
\DeclareMathOperator{\conv}{conv}
\definecolor{cool_green}{rgb}{0.0, 0.5, 0.0}
\theoremstyle{definition}
\newtheorem{definition}{Definition}
\newtheorem{lemma}{Lemma}
\newtheorem{proposition}{Proposition}
\newtheorem{theorem}{Theorem}
\newtheorem*{remark}{Remark}
\begin{document}

\title{Certifying the Classical Simulation Cost of a Quantum Channel}

\author{Brian Doolittle and Eric Chitambar}

\date{\today}

\begin{abstract}
    A fundamental objective in quantum information science is to determine the cost in classical resources of simulating a particular quantum system.
    The classical simulation cost is quantified by the signaling dimension which specifies the minimum amount of classical communication needed to perfectly simulate a channel's input-output correlations when unlimited shared randomness is held between encoder and decoder.
    This paper provides a collection of device-independent tests that place lower and upper bounds on the signaling dimension of a channel.  Among them, a single family of tests is shown to determine when a noisy classical channel can be simulated using an amount of communication strictly less than either its input or its output alphabet size.
    In addition, a family of eight Bell inequalities is presented that completely characterize when any four-outcome measurement channel, such as a Bell measurement, can be simulated using one communication bit and shared randomness.
    Finally, we bound the signaling dimension for all partial replacer channels in $d$ dimensions.
    The bounds are found to be tight for the special case of the erasure channel.
\end{abstract}

\maketitle

\section{Introduction}

The transmission of quantum states between devices is crucial for many quantum network protocols.
In the near-term, quantum memory limitations will restrict quantum networks to ``prepare and measure'' functionality \cite{wehner2018quantum}, which allows for quantum communication between separated parties but requires measurement immediately upon reception.
Prepare and measure scenarios exhibit quantum advantages for tasks that involve distributed information processing \cite{BuhrmanComplexity2010} or establishing nonlocal correlations which cannot be reproduced by bounded classical communication and shared randomness \cite{Vicente2017shared}. These nonlocal correlations lead to quantum advantages in random access codes \cite{ambainis2008quantum,tavakoli2015}, randomness expansion \cite{li2011semi}, device self-testing \cite{tavakoli2018}, semi-device-independent key distribution \cite{pawlowski2011semi}, and dimensionality witnessing \cite{brunner2008testing,hendrych2012experimental}.

The general communication process is depicted in Fig. \ref{Fig:simulation}(a) with Alice (the sender) and Bob (the receiver) connected by some quantum channel $\mc{N}^{A\to B}$.  Alice encodes a classical input $x\in\mc{X}$ into a quantum state $\rho_x$ and sends it through the channel to Bob, who then measures the output using a positive-operator valued measure (POVM) $\{\Pi_y\}_{y\in\mc{Y}}$ to obtain a classical message $y\in\mc{Y}$.
The induced classical channel, denoted by $\mbf{P}_\mc{N}$, has transition probabilities
\begin{equation}
\label{Eq:channel-induce}
    P_{\mc{N}}(y|x)=\tr\Big[\Pi_y\mc{N}\big(\rho_x\big)\Big].
\end{equation}
\begin{figure}[b] 
    \centering
    \includegraphics[width=.48\textwidth]{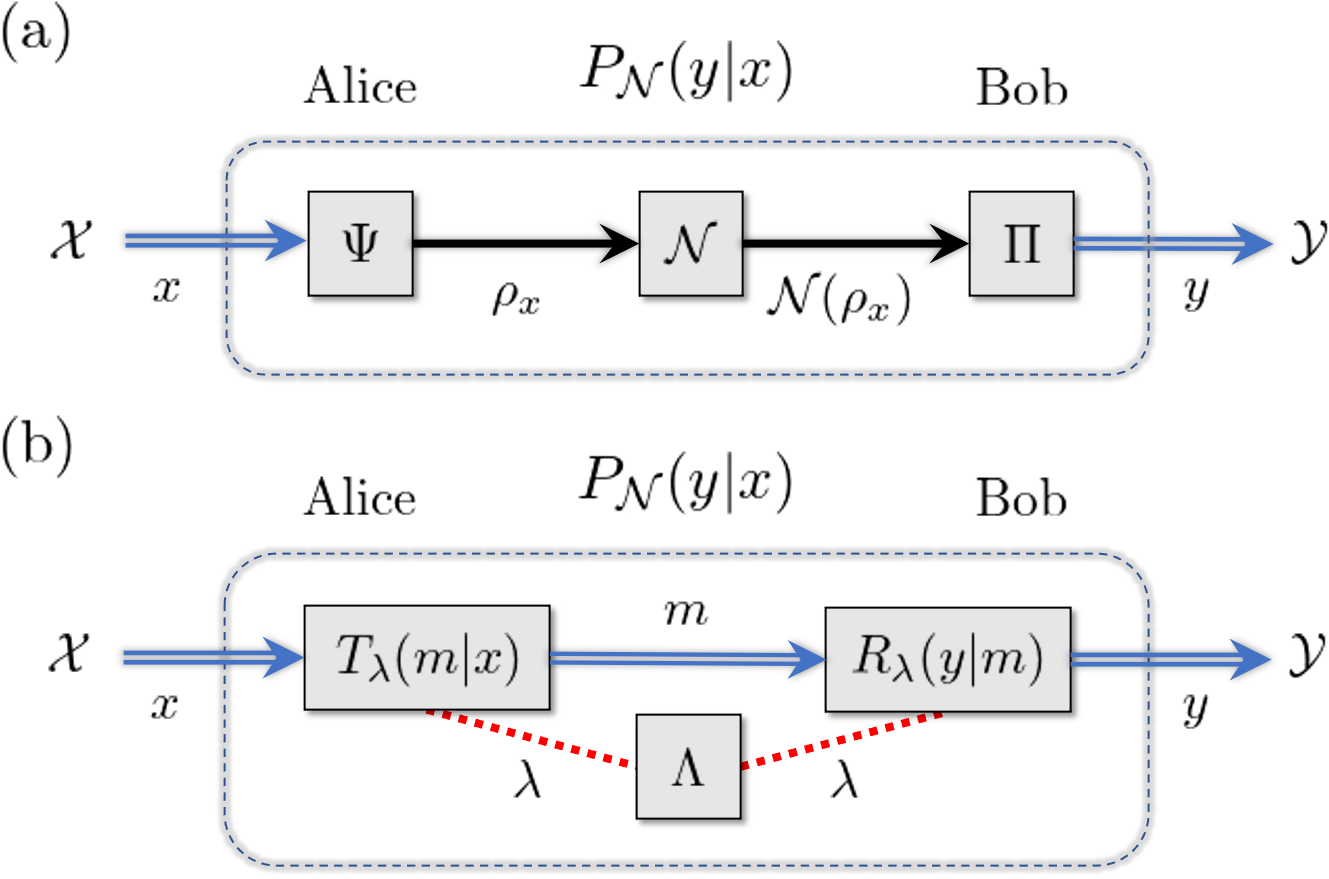}
 \caption{\linespread{1}\selectfont{\small
    A general classical communication process. We represent classical information as blue double lines, quantum information as black solid lines, and shared randomness as dotted red lines.  (a)
    A classical channel $\mbf{P}_{\mc{N}}$ is generated from a quantum channel $\mc{N}$ via Eq. \eqref{Eq:channel-induce}.
    A classical-quantum encoder $\Psi$ maps the classical input $x\in\mc{X}$ into a quantum state $\rho_x$.
    A quantum-classical decoder $\Pi$ implements POVM $\{\Pi_y\}_{y\in\mc{Y}}$
.    (b) Channel $\mbf{P}_{\mc{N}}$ is simulated using shared randomness and a noiseless classical channel via Eq. \eqref{Eq:d-simulate}.
    Alice encodes input $x$ into classical message $m$ with probability $T_\lambda(m|x)$ while Bob decodes message $m$ into output $y$ with probability $R_\lambda(y|m)$.
    The protocol is coordinated using a shared random value $\lambda$ drawn from sample space $\Lambda$ with probability $q(\lambda)$.
}}
 \label{Fig:simulation}
\end{figure}

\noindent A famous result by Holevo implies that the communication capacity of
$\mbf{P}_\mc{N}$ is limited by $\log_2d$, where $d$ is the input Hilbert space dimension of $\mc{N}$ \cite{holevo1973bounds}; hence a noiseless classical channel transmitting $d$ messages has a capacity no less than $\mbf{P}_\mc{N}$.

However, channel capacity is just one figure of merit, and there may be other features of a $\mbf{P}_{\mc{N}}$ that do not readily admit a classical simulation.
The strongest form of simulation is an exact replication of the transition probabilities $P_{\mc{N}}(y|x)$ for any set of states $\{\rho_x\}_{x\in\mc{X}}$ and POVM $\{\Pi_y\}_{y\in\mc{Y}}$.
This problem falls in the domain of zero-error quantum information theory \cite{Korner-1998a, Duan-2009a, Cubitt-2011b, Cubitt-2011a, Duan-2016a}, which considers the classical and quantum resources needed to perfectly simulate a given channel.
Unlike the capacity, a zero-error simulation of $\mbf{P}_\mc{N}$ typically requires additional communication beyond the input dimension of $\mc{N}$.
For example, a noiseless qubit channel $\id_2$ can generate channels $\mbf{P}_{\id_2}$ that cannot be faithfully simulated using a one bit of classical communication \cite{Vicente2017shared}.  

The simulation question becomes more interesting if ``static'' resources are used for the channel simulation \cite{Devetak-2004a, Devetak-2008a}, in addition to the ``dynamic'' resource of noiseless classical communication.  For example, shared randomness is a relatively inexpensive classical resource that Alice and Bob can use to coordinate their encoding and decoding maps used in the simulation protocol shown in Fig. \ref{Fig:simulation}(b).  Using shared randomness, a channel can be exactly simulated with a forward noiseless communication rate that asymptotically approaches the channel capacity; a fact known as the Classical Reverse Shannon Theorem \cite{Bennett2002_rev_shannon}.  More powerful static resources such as shared entanglement or non-signaling correlations could also be considered \cite{Cubitt-2011a, Wang-2020a, Fang-2020a}. 

While the Classical Reverse Shannon Theorem describes many-copy channel simulation, this work focuses on zero-error channel simulation in the single-copy case.  The minimum amount of classical communication (with unlimited shared randomness) needed to perfectly simulate every classical channel $\mbf{P}_{\mc{N}}$ having the form of Eq. \eqref{Eq:channel-induce} is known as the signaling dimension of $\mc{N}$ \cite{Dall'Arno_no_hypersignaling_2017}.
Significant progress in understanding the signaling dimension was made by Frenkel and Weiner who showed that every $d$-dimensional quantum channel requires no more than $d$ classical messages to perfectly simulate \cite{Frenkel2015}.
This result is a ``fine-grained'' version of Holevo's Theorem for channel capacity mentioned above.
However, the Frenkel-Weiner bound is not tight in general.
For example, consider the completely depolarizing channel on $d$ dimensions, $\mc{D}(\rho)=\mbb{I}/d$.
For any choice of inputs $\{\rho_x\}_x$ and POVM $\{\Pi_y\}_y$, the Frenkel-Weiner protocol yields a simulation of $\mbf{P}_\mc{D}$ that uses a forward transmission of $d$ messages.  However, this is clearly not optimal since $\mbf{P}_{\mc{D}}$ can be reproduced with no forward communication whatsoever; Bob just samples from the distribution $P(y)=\tr[\Pi_y]/d$.
A fundamental problem is then to understand when a noisy classical channel sending $d$ messages from Alice to Bob actually requires $d$ noiseless classical messages for zero-error simulation.  As a main result of this paper, we provide a family of simple tests that determine when this amount of communication is needed.  In other words, we characterize the conditions in which the simulation protocol of Frenkel and Weiner is optimal for the purposes of sending $d$ messages over a $d$-dimensional quantum channel.

This work pursues a device-independent certification of signaling dimension similar to previous approaches used for the device-independent dimensionality testing of classical and quantum devices \cite{Gallego-2010a, Dall'Arno-2012a, Ahrens-2012a, brunner2013dim_test, Dall'Arno-2017b}.
Specifically, we obtain Bell inequalities that stipulate necessary conditions on the signaling dimension of $\mc{N}$ in terms of the probabilities $P_\mc{N}(y|x)$, with no assumptions made about the quantum states $\{\rho_x\}_x$, POVM $\{\Pi_y\}_y$, or channel $\mc{N}$ \cite{Dall'Arno-2017a}. 
Complementary results have been obtained by Dall'Arno \textit{et al.} who approached the simulation problem from the quantum side and characterized the set of channels $\mbf{P}_\mc{N}$ that can be obtained using binary encodings for special types of quantum channels $\mc{N}$ \cite{Dall'Arno-2017a}.  
In this paper, we compute a wide range of Bell inequalities using the adjacency decomposition technique \cite{christof2001decomposition}, recovering prior results of Frenkel and Weiner \cite{Frenkel2015} and generalizing work by Heinosaari and Kerppo \cite{Heinosaari2019}.  For certain cases we prove that these inequalities are complete, i.e. providing both necessary and sufficient conditions for signaling dimension.
As a further application, we compute bounds for the signaling dimension of partial replacer channels.
Proofs for our main results are found in the Appendix while our supporting software is found on Github \cite{SignalingDimension.jl}.
 
\section{Signaling Polytopes}

We begin our investigation by reviewing the structure of channels that use noiseless classical communication and shared randomness.  Let $\mc{P}^{n\to n'}$ denote the family of channels having input set $\mc{X}=[n]:=\{1,\cdots,n\}$ and output set $\mc{Y}=[n']$.
A channel $\mbf{P} \in \mc{P}^{n \to n'}$ is represented by an $n'\times n$ column stochastic matrix, and we thus identify $\mc{P}^{n\to n'}$ as a subset of $\mbb{R}^{n'\times n}$, the set of all $n'\times n$ real matrices.  In general we refer to a column (or row) of a matrix as being stochastic if its elements are non-negative and sum to unity, and a column (resp. row) stochastic matrix has only stochastic columns (resp. rows).  The elements of a real matrix $\mbf{G}\in \mbb{R}^{n'\times n}$ are denoted by $G_{y,x}$, while those of a column stochastic matrix $\mbf{P}\in\mc{P}^{n\to n'}$ are denoted by $P(y|x)$ to reflect their status as conditional probabilities.  The Euclidean inner product between $\mbf{G},\mbf{P}\in\mbb{R}^{n'\times n}$ is expressed as $\langle\mbf{G},\mbf{P}\rangle:=\sum_{x,y}G_{y,x}P(y|x)$, and for any $\mbf{G}\in\mbb{R}^{n'\times n}$ and $\gamma\in\mbb{R}$, we let the tuple $(\mbf{G},\gamma)$ denote the linear inequality $\langle\mbf{G},\mbf{P}\rangle\leq \gamma$.

Consider now a scenario in which Alice and Bob have access to a noiseless channel capable of sending $d$ messages. They can use this channel to simulate a noisy channel by applying pre- and post-processing maps.
If they coordinate these maps using a shared random variable $\lambda$ with probability mass function $q(\lambda)$, then they can simulate any channel $\mbf{P}$ that decomposes as

\begin{equation}
\label{Eq:d-simulate}
     P(y|x)=\sum_\lambda q(\lambda)\sum_{m\in[d]}R_\lambda(y|m)T_\lambda(m|x),
\end{equation}

\noindent where $m\in[d]$ is the message sent from Alice to Bob and $T(m|x)$ (resp. $R(y|m)$) is an element of Alice's encoder $\mbf{T}\in\mc{P}^{n\to d}$ (resp. Bob's decoder $\mbf{R}\in\mc{P}^{d\to n'}$).

\begin{definition} \label{Def-signaling-polytope}
For given positive integers $n$, $n'$, and $d$, the set of all channels satisfying Eq. \eqref{Eq:d-simulate} constitute the \textit{signaling polytope}, denoted by $\mc{C}_d^{n\to n'}$.
\end{definition}

\noindent The signaling polytope $\mc{C}_d^{n\to n'}$ is a convex polytope of dimension $n(n'-1)$ whose vertices $\mbf{V}\in\mc{P}^{n \to n'}$ have 0/1 matrix elements and $\rank(\mbf{V}) \leq d$.  We define $\langle\mbf{G},\mbf{P}\rangle\leq\gamma$ as a Bell inequality for $\mc{C}_d^{n\to n'}$ if $\mc{C}_d^{n\to n'}\subset\{\mbf{P}\in\mc{P}^{n \to n'}\;|\;\langle \mbf{G},\mbf{P}\rangle\leq \gamma\}$, and it is a ``tight'' Bell inequality if the equation $\langle\mbf{G},\mbf{P}\rangle=\gamma$ is also solved by $n(n'-1)$ affinely independent vertices.  When the latter holds, the solution space to $\langle\mbf{G},\mbf{P}\rangle=\gamma$ is called a facet of $\mc{C}_d^{n\to n'}$.
The Weyl-Minkowski Theorem ensures that a complete set of tight Bell inequalities $ \{\langle\mbf{G}_k,\mbf{P}\rangle\leq\gamma_k\}_{k=1}^r$ exists such that $\mbf{P}\in \mc{C}_d^{n\to n'}$ \textit{iff} it satisfies all inequalities in this set \cite{Ziegler-2012a}.
Additional details about signaling polytopes are found in Appendix \ref{Appendix-polytope-structure}. 

Having introduced signaling polytopes, we can now define the signaling dimension of a channel.
This terminology is adopted from recent work by Dall'Arno \textit{et al.} \cite{Dall'Arno_no_hypersignaling_2017} who defined the signaling dimension of a system in generalized probability theories; 
an analogous quantity without shared randomness has also been studied by Heinosaari \textit{et al.} \cite{Heinosaari-2020a}.  In what follows, we assume that $\mc{N}:\mc{S}(A)\to\mc{S}(B)$ is a completely positive trace-preserving (CPTP) map, with $\mc{S}(A)$ denoting the set of density operators (i.e. trace-one positive operators) on system $A$, and similarly for $\mc{S}(B)$.

\begin{definition}\label{def-signaling-dimension}
Let $\mc{P}_{\mc{N}}^{ n \to n'}$ be the set of all classical channels $\mbf{P}_{\mc{N}}\in\mc{P}^{n \to n'}$ generated from 
$\mc{N}$ via Eq. \eqref{Eq:channel-induce}.  The $n\to n'$ signaling dimension of $\mc{N}$, denoted by $\kappa^{n\to n'}(\mc{N})$, is the smallest $d$ such that $\mc{P}_{\mc{N}}^{ n \to n'}\subset\mc{C}_d^{n\to n'}$.  The signaling dimension of a channel $\mc{N}$, denoted by $\kappa(\mc{N})$, is the smallest $d$ such that $\mc{P}_{\mc{N}}^{n\to n'}\!\!\subset\mc{C}_d^{n\to n'}$ for all $ n,n'$.
\end{definition}

For any channel $\mc{N}$, a trivial upper bound on the $n \to n'$ signaling dimension is given by

\begin{equation}
\label{Eq:classical-cardinality-bound}
\kappa^{n\to n'}(\mc{N})\leq\min\{n,n'\}.
\end{equation}

\noindent Indeed, when this bound is attained, Alice and Bob can simulate any $\mbf{P}\in\mc{P}^{n \to n'}$: either Alice applies channel $\mbf{P}$ on her input and sends the output to Bob, or she sends the input to Bob and he  applies $\mbf{P}$ on his end.
In Theorem \ref{Thm:main1} we provide necessary and sufficient conditions for when this trivial upper bound is attained.
For a quantum channel $\mc{N}$, the trivial upper bound is

\begin{equation}
\label{Eq:bound-dimension}
    \kappa(\mc{N})\leq\min\{d_A,d_B\},
\end{equation}

\noindent where $d_A$ and $d_B$ are the Hilbert space dimensions of Alice and Bob's systems.  This bound is a direct consequence of Frenkel and Weiner's result \cite{Frenkel2015}, which can be restated in our terminology as $\kappa({\text{id}_d}) =d$, where $\text{id}_d$ is the noiseless channel on a $d$-dimensional quantum system.
To prove Eq. \eqref{Eq:bound-dimension}, Alice can either send the states $\{\rho_x\}_x$ to Bob who then performs the POVM $\{\mc{N}^\dagger(\Pi_y)\}_y$, or she can send the states $\{\mc{N}(\rho_x)\}_x$ to Bob who then perfoms the POVM $\{\Pi_y\}_y$.  Here $\mc{N}^\dagger$ denotes the adjoint map of $\mc{N}$.  
Another relationship we observe is
\begin{equation}
    \label{Eq:bound-carethedory}
    \kappa^{n\to n'}(\mc{N})=\kappa^{n\to d_B^2}(\mc{N})
\qquad \forall \; n'\geq d_B^2.
\end{equation}
This follows from Carath\'{e}odory's Theorem \cite{Barvinok-2002a}, which implies that every POVM on a $d_B$-dimensional system can be expressed as a convex combination of POVMs with no more than $d_B^2$ outcomes \cite{Davies-1978a}.  Since shared randomness is free, Alice and Bob can always restrict their attention to POVMs with no more than $d_B^2$ outcomes for the purposes of simulating any channel in $\mc{P}_\mc{N}^{n\to n'}$ when $n'\geq d_B^2$.

The notion of signaling dimension also applies to noisy classical channels.  A classical channel from set $\mc{X}$ to $\mc{Y}$ can be represented by a CPTP map $\mc{N}:\mc{S}(\mbb{C}^{|\mc{X}|})\to\mc{S}(\mbb{C}^{|\mc{Y}|})$ that completely dephases its input and output in fixed orthonormal bases $\{\ket{x}\}_{x\in\mc{X}}$ and $\{\ket{y}\}_{y\in\mc{Y}}$, respectively.  The transition probabilities of $\mc{N}$ are then given by
Eq. \eqref{Eq:channel-induce} as $P_{\mc{N}}(y|x) = \tr\big[\op{y}{y}\mc{N}\big(\op{x}{x}\big)\big]$.
The channel $\mc{N}$ can be used to generate another channel $\ol{\mc{N}}$ with input and output alphabets $\ol{\mc{X}}$ and $\ol{\mc{Y}}$ by performing a pre-processing map $\mbf{T}:\ol{\mc{X}}\to\mc{X}$ and post-processing map $\mbf{R}:\mc{Y}\to\ol{\mc{Y}}$, thereby yielding the channel  $\mbf{P}_{\ol{\mc{N}}}=\mbf{R}\mbf{P}_\mc{N}\mbf{T}$.  When this relationship holds, $\mbf{P}_{\ol{\mc{N}}}$ is said to be \textit{ultraweakly majorized} by $\mbf{P}_\mc{N}$ \cite{Heinosaari2019, Heinosaari-2020a},  and the signaling dimension of $\mbf{P}_{\ol{\mc{N}}}$ is no greater than that of $\mbf{P}_\mc{N}$ \cite{Cubitt-2011a}.

In practice, the channel connecting Alice and Bob may be unknown or not fully characterized.  This is the case in most experimental settings where unpredictable noise affects the encoded quantum states.  In such scenarios it is desirable to ascertain certain properties of the channel without having to perform full channel tomography, a procedure that requires trust in the state preparation device on Alice's end and the measurement device on Bob's side.  A device-independent approach infers properties of the channel by analyzing the observed input-output classical correlations $P(y|x)$ obtained as sample averages over many uses of the memoryless channel \cite{Dall'Arno-2017a}.  The Bell inequalities introduced in the next section can be used to certify the signaling dimension of the channel: if the correlations $P(y|x)$ are shown to violate a Bell inequality of $\mc{C}_d^{n\to n'}$, then the signaling dimension $\kappa^{n\to n'}(\mc{N})> d$.
If these correlations arise from some untrusted quantum channel $\mc{N}^{A\to B}$, by Eq. \eqref{Eq:bound-dimension} it then follows that $\min\{d_A,d_B\}>d$.  Hence a device-independent certification of signaling dimension leads to a device-independent certification of the physical input/output Hilbert spaces of the channel connecting Alice and Bob.

\section{Bell Inequalities for Signaling Polytopes}

In this section we discuss Bell inequalities for signaling polytopes.  Since signaling polytopes are invariant under the relabelling of inputs and outputs, all discussed inequalities describe a family of inequalities where each element is obtained by a permutation of the inputs and/or outputs.
Additionally, a Bell inequality for one signaling polytope can be lifted to a polytope having more inputs and/or outputs  \cite{Pironio2005,Rosset2014} (see Fig. \ref{Fig:input-output-lifting}).  Formally, a Bell inequality $\langle\mbf{G},\mbf{P}\rangle\leq\gamma$ is said to be input lifted to $\langle\mbf{G}'',\mbf{P}\rangle\leq\gamma$ if $\mbf{G}''\in\mbb{R}^{n'\times m}$ is obtained from $\mbf{G}\in\mbb{R}^{n'\times n}$ by padding it with $(m-n)$ all-zero columns.  On the other hand, a Bell inequality $\langle\mbf{G},\mbf{P}\rangle\leq\gamma$ is said to be output lifted to $\langle\mbf{G}',\mbf{P}\rangle\leq\gamma$ if $\mbf{G}'\in \mbb{R}^{m'\times n}$ is obtained from $\mbf{G}\in\mbb{R}^{n'\times n}$ by copying rows; \textit{i.e.}, there exists a surjective function $f:[m']\to [n']$ such that $G'_{y,x}=G_{f(y),x}$ for all $y\in[m']$ and $x\in[n]$.
Note that $m'>n'$ and $m>n$ in these examples.

\begin{figure}[t] 
    \centering
    \begin{tabular}{c c c c}
    (a) & $\quad\mbf{G} = \begin{bmatrix}
        1 & 0 & 0 \\ 0 & 1 & 0 \\ 0 & 0 & 1 \\
    \end{bmatrix}$ & $\to$ & $\mbf{G}'' = \begin{bmatrix}
        1 & 0 & 0 & 0 \\ 0 & 1 & 0 & 0 \\ 0 & 0 & 1 & 0 \\
    \end{bmatrix}$ \\
    \hfill\\
    (b) & $\quad\mbf{G} = \begin{bmatrix}
        1 & 0 & 0 \\ 0 & 1 & 0 \\ 0 & 0 & 1 \\
    \end{bmatrix}$ & $\to$ & $\mbf{G}' = \begin{bmatrix}
        1 & 0 & 0 \\ 1 & 0 & 0 \\ 0 & 1 & 0 \\ 0 & 0 & 1 \\
    \end{bmatrix}$
    \end{tabular}
 \caption{\linespread{1}\selectfont{\small
    (a) Input and (b) output liftings of $\mbf{G}=\mbb{I}_3$.
}}
\label{Fig:input-output-lifting}
\end{figure}

To obtain polytope facets, it is typical to first enumerate the vertices, then use a transformation technique such as Fourier-Motzkin elimination to derive the facets \cite{Ziegler-2012a}.
Software such as PORTA \cite{PORTA, XPORTA.jl} assists in this computation, but the large number of vertices leads to impractical run times.
To improve efficiency, we utilize the adjacency decomposition technique which heavily exploits the permutation symmetry of signaling polytopes \cite{ christof2001decomposition} (see Appendix \ref{Appendix-Adjacency}).
Our software and computed facets are publicly available on Github \cite{SignalingDimension.jl} while a catalog of general tight Bell inequalities is provided in Appendix \ref{Appendix-facet-classes}.
We now turn to a specific family of Bell inequalities motivated by our computational results.

\subsection{Ambiguous Guessing Games}

For $k\in[0,n']$ and $d\leq \min\{n,n'\}$, let $\mbf{G}^{n, n'}_{k,d}$ be any $n'\times n$ matrix such that (i) $k$ rows are stochastic with 0/1 elements, and (ii) the remaining $(n'-k)$ rows have $1/(n-d+1)$ in each column.  As explained below, it will be helpful to refer to rows of type (i) as  ``guessing rows'' and rows of type (ii) as ``ambiguous rows.''
For example, if $n=n'=6$, $k=5$, and $d=2$, then up to a permutation of rows and columns we have
\begin{equation}
    \mbf{G}^{6,6}_{5,2} = \begin{bmatrix}
        1 & 0 & 0 & 0 & 0 & 0 \\
        1 & 0 & 0 & 0 & 0 & 0 \\
        0 & 1 & 0 & 0 & 0 & 0 \\
        0 & 1 & 0 & 0 & 0 & 0 \\
        0 & 0 & 1 & 0 & 0 & 0 \\
        \tfrac{1}{5} & \tfrac{1}{5} & \tfrac{1}{5} & \tfrac{1}{5} & \tfrac{1}{5} & \tfrac{1}{5} \\
    \end{bmatrix}.
\end{equation}
\noindent  
For any channel $\mbf{P}\in\mc{C}_d^{n\to n'}$, the Bell inequality 

\begin{align}
\label{Eq:Ambiguous}
    \langle\mbf{G}^{n, n'}_{k,d},\mbf{P}\rangle\leq d
\end{align}

\noindent is satisfied. To prove this bound, suppose without loss of generality that the first $k$ rows of $\mbf{G}^{n, n'}_{k,d}$ are guessing rows.
Let $\mbf{V}$ be any vertex of $\mc{C}_d^{n\to n'}$ where $t$ of its first $k$ rows are nonzero.  If $t=d$, then clearly Eq. \eqref{Eq:Ambiguous} holds.  Otherwise, if $t<d$, then  $\langle\mbf{G}^{n, n'}_{k,d},\mbf{V}\rangle\leq t+(n-t)/(n-d+1)\leq d$, where the last inequality follows after some algebraic manipulation.

Equation \eqref{Eq:Ambiguous} can be interpreted as the score of a guessing game that Bob plays with Alice.  Suppose that Alice chooses a channel input $x\in[n]$ with uniform probability and sends it through a channel $\mbf{P}$.  Based on the channel output $y$, Bob guesses the value of $x$.  Formally, Bob computes $\hat{x}=f(y)$ for some guessing function $f$, and if $\hat{x}=x$ then he receives one point.
In this game, Bob may also declare Alice's input as being ambiguous or indistinguishable, meaning that $f:[d]\to [n]\cup\{?\}$ with $``?"$ denoting Bob's declaration of the ambiguous input.  However, whenever Bob declares $``?"$ he only receives $1/(n-d+1)$ points.
Then, Eq. \eqref{Eq:Ambiguous} says that whenever $\mbf{P}\in\mc{C}_d^{n\to n'}$ Bob's average score is bounded by $\frac{d}{n}$.
Note, there is a one-to-one correspondence between each $\mbf{G}^{n,n'}_{k,d}$ and the particular guessing function $f$ that Bob performs.
If $y$ labels a guessing row of $\mbf{G}^{n,n'}_{k,d}$, then $f(y)=\hat{x}$, where $\hat{x}$ labels the only nonzero column of row $y$.  On the other hand, if $y$ labels an ambiguous row, then $f(y)=``?"$.

We define the $(k,d)$-\textit{ambiguous polytope} $\mc{A}_{k,d}^{n\to n'}$ as the collection of all channels $\mbf{P}\in\mc{P}^{n\to n'}$ satisfying Eq. \eqref{Eq:Ambiguous} for every $\mbf{G}^{n,n'}_{k,d}$.
Naturally, $\mc{C}_d^{n \to n'}\subset \mc{A}_{k,d}^{n \to n'}$ for all $k\in[0,n']$, therefore, if $\mbf{P}\notin\mc{A}_{k,d}^{n \to n'}$, then $\mbf{P}\notin \mc{C}_d^{n \to n'}$.
Based on the discussion of the previous paragraph, it is easy to decide membership of $\mc{A}_{k,d}^{n\to n'}$.

\begin{proposition}
\label{Prop:ambiguous-condition}
    A channel $\mbf{P}\in\mc{P}^{n\to n'}$ belongs to $\mc{A}_{k,d}^{n\to n'}$ \textit{iff}
\begin{equation}
\label{Eq:Prop-ambiguous-form}
    \max_{\pi\in\msf{S}_{n'}}\sum_{i=1}^k\Vert \mbf{r}_{\pi(i)}\Vert_\infty+\frac{1}{n-d+1}\sum_{i=k+1}^{n'}\Vert \mbf{r}_{\pi(i)}\Vert_1\leq d,
\end{equation}

\noindent where the maximization is taken over all permutations on $[n']$, $\mbf{r}_i$ denotes the $i^{th}$ row of $\mbf{P}$, $\Vert\mbf{r}_i\Vert_\infty$ is the largest element in $\mbf{r}_i$, and $\Vert\mbf{r}_i\Vert_1$ is the row sum of $\mbf{r}_i$.
\end{proposition}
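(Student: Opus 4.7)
The plan is to show directly that $\max\langle\mbf{G}^{n,n'}_{k,d},\mbf{P}\rangle$ taken over the entire parametric family of gain matrices $\mbf{G}^{n,n'}_{k,d}$ equals the left-hand side of Eq.~\eqref{Eq:Prop-ambiguous-form}; the proposition then follows immediately from the definition of $\mc{A}_{k,d}^{n\to n'}$, which requires Eq.~\eqref{Eq:Ambiguous} to hold for \emph{every} such matrix.

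First I would parameterize the family. Any $\mbf{G}^{n,n'}_{k,d}$ is fixed by two independent pieces of data: (i) a choice of which $k$ of the $n'$ rows are guessing rows (encoded by a permutation $\pi\in\msf{S}_{n'}$ with $\pi(1),\ldots,\pi(k)$ indexing the guessing rows and $\pi(k+1),\ldots,\pi(n')$ indexing the ambiguous rows), and (ii) in each guessing row $\pi(i)$, the column $c_i\in[n]$ containing the single $1$. Substituting into $\langle\mbf{G}^{n,n'}_{k,d},\mbf{P}\rangle=\sum_{y,x}G_{y,x}P(y|x)$ and separating the guessing from ambiguous contributions gives
\begin{equation*}
\langle\mbf{G}^{n,n'}_{k,d},\mbf{P}\rangle=\sum_{i=1}^{k}P(\pi(i)\mid c_i)+\frac{1}{n-d+1}\sum_{i=k+1}^{n'}\Vert\mbf{r}_{\pi(i)}\Vert_1,
\end{equation*}
where the ambiguous contribution depends only on $\pi$. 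For fixed $\pi$, the parameters $c_1,\ldots,c_k$ are independent, so the maximum over (ii) is attained by placing each $1$ at a column achieving the row maximum, giving $\sum_{i=1}^k\Vert\mbf{r}_{\pi(i)}\Vert_\infty$. Maximizing the remaining expression over $\pi\in\msf{S}_{n'}$ reproduces exactly the left-hand side of Eq.~\eqref{Eq:Prop-ambiguous-form}.

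There is no real obstacle here beyond tracking the decoupling of the maximization: because the optimal column choice in each guessing row depends only on its own row of $\mbf{P}$, and because only the set (not the order) of guessing rows affects the value, the joint optimum factors cleanly over $\pi$ and over the $c_i$'s. With this equivalence in hand, $\mbf{P}\in\mc{A}_{k,d}^{n\to n'}$ — i.e.\ Eq.~\eqref{Eq:Ambiguous} holds for every $\mbf{G}^{n,n'}_{k,d}$ — is equivalent to this maximum being at most $d$, which is Eq.~\eqref{Eq:Prop-ambiguous-form}.
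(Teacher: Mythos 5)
Your proof is correct and is essentially the explicit version of what the paper treats as immediate: the paper's Proposition~\ref{Prop:ambiguous-condition} is asserted with only the remark that membership in $\mc{A}_{k,d}^{n\to n'}$ is ``easy to decide'' from the guessing-game interpretation, and your decomposition $\langle\mbf{G}^{n,n'}_{k,d},\mbf{P}\rangle=\sum_{i=1}^{k}P(\pi(i)\mid c_i)+\frac{1}{n-d+1}\sum_{i>k}\Vert\mbf{r}_{\pi(i)}\Vert_1$, followed by the observation that the $c_i$ and $\pi$ maximizations decouple, is precisely the calculation the paper leaves implicit.
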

The maximization on the LHS of Eq. \eqref{Eq:Prop-ambiguous-form} can be performed efficiently using the following procedure.  For each row $\mbf{r}_i$ we assign a pair $(a_i, b_i)$ where $a_i=\Vert\mbf{r}_i\Vert_\infty$ and $b_i=\frac{1}{n-d+1}\Vert\mbf{r}_i\Vert_1$.  Define $\delta i=a_i - b_i$, and relabel the rows of $\mbf{P}$ in non-increasing order of the $\delta_i$.  
Then according to this sorting, we have an ambiguous guessing game score of $\sum_{i=1}^k a_i +\sum_{i=k+1}^{n'} b_i$,
which we claim attains the maximum on the LHS of Eq. \eqref{Eq:Prop-ambiguous-form}.  Indeed, for any other row permutation $\pi$, the guessing game score is given by
\begin{align}
&\sum_{\substack{i\in\{1,\cdots,k\}\\\pi(i)\in\{1,\cdots,k\}}} a_i+\sum_{\substack{i\in\{1,\cdots,k\}\\\pi(i)\in\{k+1,\cdots,n'\}}} b_i\notag\\
+&\sum_{\substack{i\in\{k+1,\cdots,n'\}\\\pi(i)\in\{1,\cdots,k\}}} a_i+\sum_{\substack{i\in\{k+1,\cdots,n'\}\\\pi(i)\in\{k+1,\cdots,n'\}}}b_i.\end{align}
Hence the difference in these two scores is
\begin{align}
    \sum_{\substack{i\in\{1,\cdots,k\}\\\pi(i)\in\{k+1,\cdots,n'\}}} (a_i-b_i)-\sum_{\substack{i\in\{k+1,\cdots,n'\}\\\pi(i)\in\{1,\cdots,k\}}} (a_i-b_i)\geq 0,
\end{align}
where the inequality follows from the fact that we have ordered the indices in non-increasing order of $(a_i-b_i)$, and the number of terms in each summation is the same since $\pi$ is a bijection.

A special case of the ambiguous guessing games arises when $k=n'$.  Then up to a normalization factor $\frac{1}{n}$, we interpret the LHS of Eq. \eqref{Eq:Prop-ambiguous-form} as the success probability when Bob performs maximum likelihood estimation of Alice's input value $x$ given his outcome $y$ (i.e. he chooses the value $x$ that maximizes $P(y|x)$).  We hence define $\mc{M}_d^{n\to n'}:=\mc{A}_{n',d}^{n\to n'}$ as the \textit{maximum likelihood (ML) estimation polytope}.  Using Proposition \ref{Prop:ambiguous-condition} we see that 
\begin{equation}
    \mbf{P}\in \mc{M}_d^{n\to n'}\quad\Leftrightarrow\quad  \sum_{y=1}^{n'}\max_{x\in[n]}P(y|x)\leq d.
\end{equation}

An important question is whether the ambiguous guessing Bell inequalities of Eq. \eqref{Eq:Ambiguous} are tight for a signaling polytope $\mc{C}_d^{n\to n'}$.  In general this will not be case.  For instance, $\langle\mbf{G}^{n, n'}_{k,d},\mbf{P}\rangle\leq d$ is trivially satisfied whenever $k=0$.  Nevertheless, in many cases we can establish tightness of these inequalities.  A demonstration of the following facts is carried out in Appendix \ref{proof-of-prop-ambiguous-guessing-facets}.

\begin{proposition}
\label{prop-ambiguous-guessing-facets}
\begin{enumerate}
    \item[]
    \item[(i)] For $\min\{n,n'\} > d > 1$ and $k=n'$, Eq. \eqref{Eq:Ambiguous} is a tight Bell inequality of $\mc{C}_d^{n\to n'}$ \textit{iff} $\mbf{G}^{n,n'}_{k,d}$ can be obtained by performing input/output liftings and row/column permutations on an $m\times m$ identity matrix $\mbb{I}_m$, with $\min\{n,n'\} \geq m > d$.
    \item[(ii)] For $n'>k\geq n>d>1$, Eq. \eqref{Eq:Ambiguous} is a tight Bell inequality of $\mc{C}_d^{n\to n'}$ \textit{iff} $\mbf{G}^{n,n'}_{k,d}$ can be obtained from the $(n+1)\times n$ matrix
    \begin{equation}
    \label{Eq:ambiguous-canonical}
        \begin{bmatrix}&&\mbb{I}_n&&\\\tfrac{1}{n-d+1}&\cdots&\tfrac{1}{n-d+1}&\cdots \tfrac{1}{n-d+1}\end{bmatrix}
        \end{equation}
        by performing output liftings and row/column permutations.  
    \end{enumerate}
\end{proposition}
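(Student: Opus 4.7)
The plan is to prove each ``iff'' by characterizing the saturating vertices of $\mc{C}_d^{n\to n'}$ for the equality $\langle\mbf{G}^{n,n'}_{k,d},\mbf{P}\rangle=d$ and then either constructing a facet (sufficiency) or detecting a drop in face dimension (necessity). Recall that vertices of $\mc{C}_d^{n\to n'}$ correspond to deterministic maps $g:[n]\to[n']$ with $|g([n])|\leq d$, and that a tight Bell inequality requires $n(n'-1)$ affinely independent saturating vertices.

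For Part (i), with $k=n'$ the matrix $\mbf{G}$ is a 0/1 stochastic-row matrix and is thus determined by a function $f:[n']\to[n]$ whose image has size $m$. By grouping rows with equal $f$-value and relabeling columns so the image is $\{1,\dots,m\}$, any such $\mbf{G}$ is automatically an input/output lifting of $\mbb{I}_m$, so the asserted structural form is free. The bound $\langle\mbf{G},\mbf{P}\rangle=\sum_{x\in f([n'])}\sum_{y:f(y)=x}P(y|x)\leq m$ holds for every $\mbf{P}\in\mc{P}^{n\to n'}$ by column stochasticity, so when $m\leq d$ the hyperplane $\langle\mbf{G},\mbf{P}\rangle=d$ fails to touch the polytope, ruling out tightness; this yields necessity of $m>d$. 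For sufficiency, I will first prove $\mbb{I}_m$ is a facet of $\mc{C}_d^{m\to m}$ by exhibiting saturating vertices as maps $g:[m]\to[m]$ that fix a $d$-subset of $[m]$ and send the other $m-d$ inputs into it, then verifying affine independence of $m(m-1)$ such maps via a direct combinatorial basis construction. Next, I will argue that both input and output liftings preserve facetness in this setting: each added zero column or duplicated output row contributes exactly enough new saturating vertices to match the growth in polytope dimension.

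For Part (ii), the ambiguous row gives the saturation constraint $a+b/(n-d+1)=d$, where $a=|\{x:g(x)=x\}|$ and $b=|\{x:g(x)=n+1\}|$ on the canonical $(n+1)\times n$ matrix. Combined with $|g([n])|\leq d$, this forces saturating $g$ to lie in one of two families: $(a,b)=(d,0)$, fixing a $d$-subset of $[n]$ with the rest mapped into it, or $(a,b)=(d-1,n-d+1)$, fixing a $(d-1)$-subset with the rest mapped to the ambiguous row $n+1$. I will then construct $n^2-1$ affinely independent vertices drawn from these two families by starting from a canonical saturating vertex and toggling a single input's image at a time, jumping between families when necessary. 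Output lifting is incorporated by allowing duplicated rows to interchange with their originals in the vertex construction. For necessity, if some standard basis column $i\in[n]$ never appears among the guessing rows, then $P(y|i)$ admits a free direction within the saturating face and drops its dimension below $n(n'-1)-1$; moreover, the coefficient $1/(n-d+1)$ on the ambiguous row is uniquely pinned by requiring the two vertex families to saturate simultaneously, so no other ambiguous-row value can appear in a tight inequality.

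The main obstacle is verifying affine independence of the saturating vertex collections, especially in Part (ii) where the two vertex families interact through shared $(d-1)$-subsets and produce combinatorially linked vectors. My plan is to exploit the permutation symmetry of the signaling polytope under row and column relabelings to reduce the verification to a small canonical block whose rank can be computed directly, then extend by symmetry to the full affinely independent set, tracking orbit cardinalities by standard combinatorial identities.
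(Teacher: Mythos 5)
Your high-level strategy for Part (i) tracks the paper's: factor out the observation that every $0/1$ row-stochastic $\mbf{G}^{n,n'}_{n',d}$ is a lifted permuted $\mbb{I}_m$, prove facetness of $\mbb{I}_m$ in $\mc{C}_d^{m\to m}$ by a direct vertex construction (this is the paper's Proposition \ref{prop-ml-facet}), and invoke lifting preservation. However, your necessity argument has a concrete gap at $m=d$. You claim $m\leq d$ implies the hyperplane $\langle\mbf{G},\mbf{P}\rangle=d$ ``fails to touch the polytope,'' but when $m=d$ the bound $\langle\mbf{G},\mbf{P}\rangle\leq m$ is actually attained (take any vertex whose column supports land inside the fibres of $f$). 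The hyperplane \emph{is} supporting at $m=d$; what fails is the dimension of the resulting face, since the $d$ columns in $f([n'])$ are forced to have a rigid structure, leaving fewer than $n(n'-1)$ affinely independent saturating vertices. The paper handles $m<d$ and $m=d$ as distinct cases for exactly this reason.

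For Part (ii), your two-family characterization of saturating vertices $(a,b)\in\{(d,0),(d-1,n-d+1)\}$ is correct and is in fact the same vertex decomposition the paper uses (implicitly, via its Proposition on ambiguous guessing facets, which splits saturating vertices into those maximizing the $\mbb{I}_n$ block and those using the ambiguous row). But your plan differs structurally: the paper first proves the unscaled $\mbf{G}_?^{n',d}$ is a facet, then rescales by $1/(n'-d)$ and output-lifts; you would reprove facetness directly on the rescaled canonical matrix. The net content is equivalent, but two concrete issues remain in your sketch. First, the target count is $n(n'-1)=n^2$ affinely independent vertices for the canonical $(n+1)\times n$ case, not $n^2-1$; a facet of dimension $n^2-1$ requires $n^2$ affinely independent points. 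Second, the affine independence argument — which you correctly identify as the main obstacle — is entirely deferred; the ``toggle one input at a time, jump between families'' idea is not automatically affine-independence-preserving, since the $(a,b)=(d-1,n-d+1)$ vertices share the same $d-1$ fixed coordinates and their differences can become linearly dependent. The paper resolves this by reducing to a concrete 0/1-vector lemma. Finally, your necessity argument for Part (ii) via a ``free direction'' is suspect as stated: a supporting hyperplane always contains degenerate directions (that is what a hyperplane is), so this by itself does not preclude a facet. The relevant obstruction is, as the paper argues, a shortfall in the number of affinely independent saturating vertices when some column carries no guessing-row weight, which requires a counting argument rather than a direction argument.
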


\noindent Note that the input/output liftings are used to manipulate the identity matrix $\mbb{I}_m$ and the matrix of Eq. \eqref{Eq:ambiguous-canonical} into an $n'\times n$ matrix $\mbf{G}_{k,d}^{n,n'}$.  The tight Bell inequalities described in Proposition \ref{prop-ambiguous-guessing-facets}(i) completely characterize the ML polytope $\mc{M}_d^{n\to n'}$.  For this reason, we refer to any $\mbf{G}^{n,n'}_{k,d}$ satisfying the conditions of Proposition \ref{prop-ambiguous-guessing-facets}(i) as a maximum likelihood (ML) facet (see Appendix \ref{Appendix-ml-facets}). Likewise, we refer to any $\mbf{G}^{n,n'}_{k,d}$ satisfying the conditions of Proposition \ref{prop-ambiguous-guessing-facets}(ii) as an ambiguous guessing facet (see Appendix \ref{Appendix-ambiguous-guessing-facets}).

\subsection{Complete Sets of Bell Inequalities}

In general, we are unable to identify the complete set of tight Bell inequalities that bound each signaling polytope $\mc{C}_d^{n \to n'}$.
However, we analytically solve the problem in special cases.
\begin{theorem}
\label{Thm:main1}
Let $n$ and $n'$ be arbitrary integers.
\begin{enumerate}
    \item[(i)] If $d=n'-1$, then $\mc{C}^{n\to n'}_d=\mc{M}_d^{n\to n'}$.
    \item[(ii)] If $d=n-1$, then $\mc{C}_d^{n \to n'} = \displaystyle\bigcap_{k=n}^{n'} \mc{A}_{k,d}^{n\to n'}$.
\end{enumerate}
\end{theorem}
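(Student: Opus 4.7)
The inclusions $\mc{C}_d^{n\to n'}\subseteq\mc{M}_d^{n\to n'}$ in (i) and $\mc{C}_d^{n\to n'}\subseteq\bigcap_k\mc{A}_{k,d}^{n\to n'}$ in (ii) follow immediately from the validity of Eq.~(\ref{Eq:Ambiguous}) already verified after its statement, so the content of the theorem is the reverse inclusions. My plan for part (i) rests on the following coupling lemma: two probability distributions $\mu,\nu$ on $[n']$ admit a joint coupling $(Y_1,Y_2)$ with $Y_1\neq Y_2$ almost surely if and only if $\mu(z)+\nu(z)\leq 1$ for every $z\in[n']$. Necessity is a union bound, since $\{Y_1=z\}$ and $\{Y_2=z\}$ are disjoint events in any such coupling. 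Sufficiency reduces to feasibility of a bipartite transportation problem on $[n']\times[n']$ with the diagonal removed; the only nontrivial Hall cut is the singleton $S=\{y\}$, which asks exactly for $\mu(y)\leq 1-\nu(y)$.

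Given $\mbf{P}\in\mc{M}_{n'-1}^{n\to n'}$, set $m_y:=\max_x P(y|x)$; the hypothesis $\sum_y m_y\leq n'-1$ rearranges to $\sum_y(1-m_y)\geq 1$, so there exists a probability distribution $\lambda$ on $[n']$ with $\lambda(y^*)\leq 1-m_{y^*}$ for every $y^*$. For each input $x$, applying the coupling lemma to $(\lambda,P(\cdot|x))$ produces a joint distribution $Q(y^*,y\,|\,x)$ with $Q(y^*,y^*|x)=0$. Defining $\mbf{P}^{(y^*)}(y|x):=Q(y^*,y|x)/\lambda(y^*)$ (and declaring $\mbf{P}^{(y^*)}$ arbitrary when $\lambda(y^*)=0$) gives a column-stochastic channel whose $y^*$-th row vanishes identically. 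Such a channel has rank at most $n'-1$ and hence lies in $\mc{C}_{n'-1}^{n\to n'}$, since any element of $\mc{P}^{n\to n'-1}$ lifts (via a fixed zero row at position $y^*$) to a convex combination of $0/1$ vertices of rank $\leq n'-1$. The identity $\mbf{P}=\sum_{y^*}\lambda(y^*)\mbf{P}^{(y^*)}$ then finishes part (i).

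For part (ii) I would induct on $n'$. The base case $n'=n$ is immediate from part (i), since $\bigcap_{k=n}^{n}\mc{A}_{k,n-1}^{n\to n}=\mc{M}_{n-1}^{n\to n}=\mc{C}_{n-1}^{n\to n}$. For $n'>n$, the plan is to apply the coupling construction above at $d=n'-1$ to write $\mbf{P}=\sum_{y^*}\lambda(y^*)\mbf{P}^{(y^*)}$, identifying each reduced channel $\mbf{P}^{(y^*)}$ with an element of $\mc{P}^{n\to n'-1}$. I would then argue that for a well-chosen $\lambda$ and coupling $Q$, every $\mbf{P}^{(y^*)}$ satisfies each $\mc{A}_{k,n-1}^{n\to n'-1}$ inequality with $n\leq k\leq n'-1$; the inductive hypothesis then decomposes each $\mbf{P}^{(y^*)}$ as a convex combination of non-injective $0/1$ channels in $\mc{P}^{n\to n'-1}$, which embed into the vertex set of $\mc{C}_{n-1}^{n\to n'}$ and reassemble into the desired decomposition of $\mbf{P}$.

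The main obstacle is the reduction step in part (ii): verifying that, after the row-eliminating decomposition, every reduced channel inherits the ambiguous inequalities at the lower row-count. One natural idea is to relate a tight inequality for $\mc{A}_{k,n-1}^{n\to n'-1}$ to appropriate output liftings of $\mc{A}_{k,n-1}^{n\to n'}$ and $\mc{A}_{k+1,n-1}^{n\to n'}$, and then to average these liftings weighted by $\lambda$; the difficulty is ensuring that a single $\lambda$ works uniformly across every choice of guessing/ambiguous row pattern and every eliminated row $y^*$. Should the direct reduction prove too rigid, a fallback is a facet-enumeration approach: combine the canonical forms listed in Proposition~\ref{prop-ambiguous-guessing-facets} with the adjacency decomposition technique invoked in Section~III to rule out any additional facet classes of $\mc{C}_{n-1}^{n\to n'}$ beyond the ambiguous and ML families.
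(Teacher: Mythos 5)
Your part (i) is correct and takes a genuinely different route from the paper. The paper argues via extreme points: it shows (Lemma \ref{lem:thm-1-i}) that every extreme point of $\mc{M}_{n'-1}^{n\to n'}$ with $\phi(\mbf{V})=n'-1$ must already be a $0/1$ channel of rank $<n'$, by exhibiting valid perturbations whenever the matrix has non-extremal entries. Your argument is constructive: from the hypothesis $\sum_y m_y\leq n'-1$ you build a shared-randomness distribution $\lambda$ with $\lambda(y^*)\leq 1-m_{y^*}$, invoke a Strassen/Hall-type coupling lemma (the singleton cut being exactly $\mu(z)+\nu(z)\leq 1$, which follows from $m_z\geq P(z|x)$), and decompose $\mbf{P}=\sum_{y^*}\lambda(y^*)\mbf{P}^{(y^*)}$ into channels each missing one output, hence each trivially in $\mc{C}_{n'-1}^{n\to n'}$. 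This is clean, gives an explicit simulation protocol rather than just an existence statement, and the coupling lemma is correctly stated and correctly applied (column-stochasticity of $\mbf{P}^{(y^*)}$ and the reconstitution $\sum_{y^*}Q(y^*,y|x)=P(y|x)$ both check out). What the paper's perturbation approach buys, by contrast, is a uniform method that extends to the harder part (ii); what yours buys is transparency and an operational protocol for part (i).

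For part (ii) there is a genuine gap, and you have identified it yourself. The row-elimination decomposition $\mbf{P}=\sum_{y^*}\lambda(y^*)\mbf{P}^{(y^*)}$ only guarantees that the $\lambda$-average of the reduced channels equals $\mbf{P}$; it gives no control on any individual $\mbf{P}^{(y^*)}$, and there is no reason for a generic $\mbf{P}^{(y^*)}$ (restricted to $[n']\setminus\{y^*\}$) to lie in $\bigcap_{k=n}^{n'-1}\mc{A}_{k,n-1}^{n\to (n'-1)}$ just because $\mbf{P}$ lies in $\bigcap_{k=n}^{n'}\mc{A}_{k,n-1}^{n\to n'}$. The coupling has freedom (both in $\lambda$ and in each $Q(\cdot,\cdot|x)$), but you would have to show that this freedom can always be exercised to make every $\mbf{P}^{(y^*)}$ satisfy every ambiguous inequality simultaneously, and the averaging-of-liftings idea you sketch does not obviously pin down a single $\lambda$ that works for every guessing/ambiguous row pattern. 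The paper handles this by a direct extreme-point analysis (Propositions \ref{Prop:partial-ML} and \ref{Prop:full-ML}): it classifies the possible structure of an extreme point $\mbf{V}$ with $\phi(\mbf{V})=n-1$ and exhibits a valid perturbation in each case, together with a row-splitting device (Eq.~\eqref{Eq:lemma-ambigious-matrix}) to reduce the cases where some ambiguous-row inequality with $k<n'$ is tight to the ML case. Your proposed fallback --- facet enumeration via the adjacency decomposition --- is a computational certificate for fixed small $(n,n')$ and is not a proof for arbitrary dimensions. So part (ii) as written is an outline with a hole at its central step, not a proof.
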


\noindent  In other words, to decide whether a channel can be simulated by an amount of classical messages strictly less than the input/output alphabets, it suffices to consider the ambiguous guessing games.  Moreover, by Eq. \eqref{Eq:Prop-ambiguous-form} it is simple to check if these conditions are satisfied for a given channel $\mbf{P}$.  A proof of Theorem \ref{Thm:main1} is found in Appendix \ref{appendix-proof-thm-main1}.

\begin{figure}[b]
    \centering
    \begin{tabular}{c c}
         (a) $2 \geq \begin{bmatrix}
            1 & 0 & 0 & 0 & 0 & 0 \\
            1 & 0 & 0 & 0 & 0 & 0 \\
            0 & 1 & 0 & 0 & 0 & 0 \\
            0 & 0 & 1 & 0 & 0 & 0 \\
        \end{bmatrix}$ & (b) $2 \geq \begin{bmatrix}
            1 & 0 & 0 & 0 & 0 & 0 \\
            0 & 1 & 0 & 0 & 0 & 0 \\
            0 & 0 & 1 & 0 & 0 & 0 \\
            0 & 0 & 0 & 1 & 0 & 0 \\
        \end{bmatrix}$ \\
        \hfill \\
        (c) $3 \geq \begin{bmatrix}
            1 & 1 & 0 & 0 & 0 & 0 \\
            1 & 0 & 1 & 0 & 0 & 0 \\
            0 & 1 & 1 & 0 & 0 & 0 \\
            0 & 0 & 0 & 1 & 0 & 0 \\
        \end{bmatrix}$ & (d) $5 \geq \begin{bmatrix}
            1 & 1 & 1 & 0 & 0 & 0 \\
            1 & 0 & 0 & 1 & 1 & 0 \\
            0 & 1 & 0 & 1 & 0 & 1 \\
            0 & 0 & 1 & 0 & 1 & 1 \\
        \end{bmatrix}$ \\
        \hfill \\
        (e) $4 \geq \begin{bmatrix}
            2 & 0 & 0 & 0 & 0 & 0 \\
            0 & 2 & 0 & 0 & 0 & 0 \\
            0 & 0 & 2 & 0 & 0 & 0 \\
            1 & 1 & 1 & 0 & 0 & 0 \\
        \end{bmatrix}$ & (f) $4 \geq \begin{bmatrix}
            2 & 0 & 0 & 0 & 0 & 0 \\
            0 & 2 & 0 & 0 & 0 & 0 \\
            0 & 0 & 1 & 1 & 0 & 0 \\
            1 & 1 & 1 & 0 & 0 & 0 \\
        \end{bmatrix}$\\
        \hfill \\
        (g) $4 \geq \begin{bmatrix}
            2 & 0 & 0 & 0 & 0 & 0 \\
            0 & 1 & 0 & 1 & 0 & 0 \\
            0 & 0 & 1 & 0 & 1 & 0 \\
            1 & 1 & 1 & 0 & 0 & 0 \\
        \end{bmatrix}$ & (h) $4 \geq \begin{bmatrix}
            1 & 0 & 0 & 1 & 0 & 0 \\
            0 & 1 & 0 & 0 & 1 & 0 \\
            0 & 0 & 1 & 0 & 0 & 1 \\
            1 & 1 & 1 & 0 & 0 & 0 \\
        \end{bmatrix}$
    \end{tabular}
    \caption{\linespread{1}\selectfont{\small Generator facets for the $\mc{C}_2^{6 \to 4}$ signaling polytope. 
    Each inequality is expressed as $\gamma \geq \mbf{G}$ where the inner product $\langle \mbf{G} , \mathbf{P}\rangle$ is implied.
    (a) ML facet input/output lifted from $\mc{C}_2^{3,3}$.
    (b) ML facet output lifted from $\mc{C}_2^{4 \to 4}$.
    (c) Anti-guessing facet output lifted from $\mc{C}_2^{4 \to 4}$.
    (d) $k$-guessing facet of $\mc{C}_2^{6 \to 4}$. (e) Ambiguous guessing facet output lifted from $\mc{C}_2^{3 \to 4}$.
    (f-h) Rescalings of the $\mc{C}_2^{3 \to 4}$ ambiguous guessing facet output lifted to $\mc{C}_2^{6 \to 4}$.
    General forms of these tight Bell inequalities are derived in Appendix \ref{Appendix-facet-classes}.
    }}
    \label{Fig:6-2-4_facets}
\end{figure}

We also characterize the  $\mc{C}^{n\to 4}_2$ signaling polytope.  As an application, this case can be used to understand the classical simulation cost of performing Bell measurements on a two-qubit system, since this process induces a classical channel with four outputs.
\begin{theorem}
\label{Thm:main2}
For any integer $n$, a channel $\mbf{P}\in \mc{P}^{n\to 4}$ belongs to $\mc{C}^{n\to 4}_2$ \textit{iff} it satisfies the eight Bell inequalities depicted in Fig. \ref{Fig:6-2-4_facets} and all their input/output permutations. 
\end{theorem}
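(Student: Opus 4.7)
My plan is to prove Theorem \ref{Thm:main2} in two directions: (i) each of the eight inequalities in Fig. \ref{Fig:6-2-4_facets} is a facet of $\mc{C}_2^{n\to 4}$ (for a suitable $n$, hence for all larger $n$ by input lifting), and (ii) any other facet of $\mc{C}_2^{n\to 4}$ lies in the input/output permutation orbit of one of the eight generators.

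For direction (i), I would rely on the structural fact that every vertex of $\mc{C}_2^{n\to 4}$ is a column-stochastic $0/1$ matrix of rank at most $2$, hence parameterized by a bipartition $S_1\sqcup S_2=[n]$ together with a pair of outputs $(y_1,y_2)\in[4]^2$. Under this parameterization, $\langle\mbf{G},\mbf{V}\rangle$ collapses to $\sum_{x\in S_1} G_{y_1,x}+\sum_{x\in S_2} G_{y_2,x}$, so checking validity reduces to a finite optimization over bipartitions for each choice of output pair. For the ambiguous guessing forms (a), (b), (e)--(h), the bound $\gamma$ follows immediately from Proposition \ref{Prop:ambiguous-condition}. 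For the anti-guessing facet (c) and the intrinsic six-input facet (d), I would verify the bound by casework across the unordered output pairs. Tightness is then established either via the standard input-lifting principle for the seven facets that descend from smaller polytopes, or by explicitly exhibiting $3n$ affinely independent saturating vertices in the case of (d).

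For direction (ii), my plan combines a finite computer-assisted enumeration with an inductive lifting argument. Using the adjacency decomposition technique of Appendix \ref{Appendix-Adjacency}, I would compute the complete facet list of $\mc{C}_2^{n\to 4}$ for $n\leq 6$ and verify that every facet lies in the permutation orbit of one of the eight generators. For $n\geq 7$, I would argue by induction that every facet of $\mc{C}_2^{n\to 4}$ is the input lifting of a facet of $\mc{C}_2^{n-1\to 4}$, which reduces to the base case.

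The main obstacle is this inductive lifting step. Because a vertex is specified by a bipartition of $[n]$ together with an output pair from $[4]$, the set of saturating vertices forms a highly constrained combinatorial family. The heart of the argument is to show that for $n\geq 7$, no facet-defining matrix $\mbf{G}$ can have every column irredundant: the requirement that $3n$ affinely independent saturating vertices exist forces two input columns of $\mbf{G}$ to play identical roles under some permutation, so one can be set to zero after a suitable normalization, identifying the facet as an input lifting. I expect to establish this via a pigeonhole argument on the column patterns, leveraging the permutation symmetry that also underlies the adjacency decomposition method. Making this counting argument rigorous -- effectively bounding the number of distinct column roles a $d=2$, $n'=4$ facet can support -- is the delicate technical core of the proof, and is the step I expect to require the most care.
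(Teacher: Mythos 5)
Your high-level plan matches the paper's structure: establish the eight inequalities as facets, compute facets of $\mc{C}_2^{n\to 4}$ for $n\leq 6$, and argue that every facet for $n\geq 7$ is an input lifting from $n\leq 6$. Direction (i) is handled in essentially the same way (modulo the fact that matrices (f)--(h) in Fig.~\ref{Fig:6-2-4_facets} are \emph{rescaled} ambiguous guessing facets with an extra column and are not immediately covered by Proposition \ref{Prop:ambiguous-condition}; their validity needs a separate check). The real issue is direction (ii).

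There is a genuine gap in your lifting step, and the mechanism you sketch would not work as stated. You propose a pigeonhole argument showing that for $n\geq 7$ two columns of a facet matrix $\mbf{G}$ must ``play identical roles,'' so that one ``can be set to zero.'' But two columns of a facet inequality will generically not be equal, nor will they be related by any permutation; the most you can hope for from a pigeonhole on support patterns is that two columns share the \emph{same set of zero rows}, which by itself permits no zeroing. Moreover, before any pigeonhole can begin you need a finite list of possible column patterns, which you never establish: a priori a column of $\mbf{G}$ can have up to four nonzero entries and the set of patterns with real-valued entries is unbounded. The paper closes this gap with a genuine three-stage algebraic reduction. Lemma \ref{Lem:reduce} first shows that any valid inequality for $\mc{C}_d^{n\to n'}$ can be replaced by a tighter one whose columns each have at most $n'-d=2$ nonzero entries; this is the step that makes the support pattern fall into only $\binom{4}{2}=6$ classes, and you omit it entirely. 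Proposition \ref{Prop:reduction-1} then merges two columns with the same two-element support --- not by zeroing one, but by the explicit linear transfer $(a,c,0,0)^{T},(b,d,0,0)^{T}\mapsto(a-c,0,0,0)^{T},(b+c,d+c,0,0)^{T}$ with $a\geq c\geq 0$ --- which leaves one column singleton-supported. Proposition \ref{Prop:reduction-2} then splits any inequality with two singleton columns on the same row into two inequalities, each with one fewer nonzero column, and a final absorption step eliminates the residual singleton columns. Only after all three reductions does one conclude that $\mc{C}_2^{n\to 4}$ is cut out by inequalities with at most six nonzero columns, from which the input-lifting claim for $n\geq 7$ follows. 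Your proposal identifies that a counting argument is needed but supplies neither the column-entry cap (Lemma \ref{Lem:reduce}) that makes a finite count available nor the merging mechanism (Propositions \ref{Prop:reduction-1}--\ref{Prop:reduction-2}) that actually reduces the column count; as written, the inductive step is not a proof but a restatement of the goal.
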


\noindent Remarkably, this result shows that no new facet classes for $\mc{C}^{n\to 4}_2$ are found when $n > 6$.
Consequently, to demonstrate that a channel $\mbf{P}\in \mc{P}^{n\to 4}$ requires more than one bit for simulation, it suffices to consider input sets of size no greater than six.  For $n<6$, the facet classes of $\mc{C}^{n\to 4}_2$ are given by the facets in Fig. \ref{Fig:6-2-4_facets} having $(6-n)$ all-zero columns.
We conjecture that in general, no more than $\binom{n'}{d}$ inputs are needed to certify that a channel $\mbf{P}\in\mc{P}^{n\to n'}$ has a signaling dimension larger than $d$.  A proof of Theorem \ref{Thm:main2} is found in Appendix \ref{appendix-proof-of-thm-2}.

\subsection{The Signaling Dimension of Replacer Channels}\label{section-replacer-channel}

In the device-independent scenario, Alice and Bob make minimal assumptions about the channel $\mc{N}^{A\to B}$ connecting them; they simply try to lower bound the dimensions of $\mc{N}$ using input-output correlations $P_{\mc{N}}(y|x)$.  Applying the results of the previous section, if $\langle\mbf{G},\mbf{P}\rangle\leq \gamma$ is a Bell inequality for $\mc{C}_d^{n\to n'}$ and 
\begin{equation}\label{eq:quantum-optimization}
    \max_{\{\rho_x\}_x,\{\Pi_y\}_y} \sum_{y,x} G_{y,x} \tr\Big[\Pi_y \mc{N}\big(\rho_x\big)\Big]>\gamma,
\end{equation}

\noindent then $\min\{d_A,d_B\}\geq \kappa(\mc{N})>d$.  Eq. \eqref{eq:quantum-optimization} describes a conic optimization problem that can be analytically solved only in special cases \cite{cv-2021a}.  Hence deciding whether a given quantum channel can violate a particular Bell inequality is typically quite challenging.

Despite this general difficulty, we nevertheless establish bounds for the signaling dimension of partial replacer  channels.
A $d$-dimensional partial replacer channel has the form
\begin{equation} \label{Eq:replacer-channel}
    \mc{R}_{\mu}(X)= \mu X+(1-\mu)\tr[X]\sigma,
\end{equation}
\noindent where $1 \geq \mu \geq 0$ and $\sigma$ is some fixed density matrix.  The partial depolarizing channel $\mc{D}_{\mu}$ corresponds to $\sigma$ being the maximally mixed state whereas the partial erasure channel $\mc{E}_\mu$ corresponds to $\sigma$ being an erasure flag $\op{E}{E}$ with $\ket{E}$ being orthogonal to $\{\ket{1},\cdots,\ket{d}\}$.

\begin{theorem} \label{thm-replacer-bounds}
The signaling dimension of a partial replacer channel is bounded by
\begin{equation}  \label{Eq:replacer-bounds}
\lceil \mu d+1-\mu\rceil\leq\kappa(\mc{R}_\mu)\leq\min\{d,\lceil\mu d+1\rceil\}.
\end{equation}
 Moreover, for the partial erasure channel, the upper bound is tight for all $\mu\in[0,1]$.
\end{theorem}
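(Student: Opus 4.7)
The plan is to sandwich $\kappa(\mc{R}_\mu)$ between a Bell-inequality lower bound and an explicit vertex-decomposition upper bound, then tighten the lower bound for the erasure case. For the lower bound $\lceil \mu d + 1 - \mu\rceil \leq \kappa(\mc{R}_\mu)$, I would apply the maximum-likelihood Bell inequality $\langle \mbb{I}_d, \mbf{P}\rangle \leq d'$, valid for every $\mbf{P}\in\mc{C}_{d'}^{d\to d}$, which is the instance of Eq.~\eqref{Eq:Ambiguous} with $n=n'=k=d$ and $\mbf{G}=\mbb{I}_d$. Choosing $\rho_x=\op{x}{x}$ and $\Pi_y=\op{y}{y}$ in any fixed orthonormal basis of the $d$-dimensional input space gives $P_{\mc{R}_\mu}(y|x)=\mu\delta_{xy}+(1-\mu)\sigma_{yy}$, whose diagonal sum is $\mu d + (1-\mu)\tr[\sigma] = \mu d + 1 - \mu$. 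Since $\kappa(\mc{R}_\mu)$ is integer-valued, this forces $\kappa(\mc{R}_\mu)\geq\lceil\mu d+1-\mu\rceil$.

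For the upper bound, $\kappa(\mc{R}_\mu)\leq d$ is immediate from Eq.~\eqref{Eq:bound-dimension}. To prove the sharper bound $\kappa(\mc{R}_\mu)\leq d':=\lceil\mu d+1\rceil$, I would decompose any induced channel $\mbf{P}_{\mc{R}_\mu}=\mu\mbf{Q}+(1-\mu)\mbf{p}\mbf{1}^{T}$ (with $Q(y|x)=\tr[\Pi_y\rho_x]$ and $p_y=\tr[\Pi_y\sigma]$) into a convex combination of vertices of rank at most $d'$. By Frenkel and Weiner, $\mbf{Q}=\sum_i q_i\mbf{V}_i$ with each $\mbf{V}_i$ a rank-$\leq d$ vertex mapping input fibers $X_1,\ldots,X_k$ ($k\leq d$) to distinct outputs $u_1,\ldots,u_k$. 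For each $\mbf{V}_i$ I would introduce the auxiliary 0/1 vertex $\mbf{W}_{S,\tilde y}$ (parametrized by $S\subseteq[k]$ and $\tilde y\in[n']$) that sends $X_j\to u_j$ for $j\in S$ and $X_j\to\tilde y$ for $j\notin S$; its rank is at most $|S|+1$. Sampling $\tilde y\sim\mbf{p}$ independently of $S$, and drawing $S$ from a mixture of uniform distributions on size-$\lfloor\mu k\rfloor$ and size-$\lceil\mu k\rceil$ subsets of $[k]$ with weights adjusted so that $\mbb{P}(j\in S)=\mu$ for every $j$, a column-by-column check yields $\mbb{E}[\mbf{W}_{S,\tilde y}]=\mu\mbf{V}_i+(1-\mu)\mbf{p}\mbf{1}^{T}$ with every supporting vertex of rank at most $\lceil\mu k\rceil+1\leq\lceil\mu d+1\rceil=d'$. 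Summing over $i$ then places $\mbf{P}_{\mc{R}_\mu}$ in $\mc{C}_{d'}^{n\to n'}$.

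To establish tightness for the partial erasure channel, I would strengthen the lower bound via the ambiguous-guessing facet of Proposition~\ref{prop-ambiguous-guessing-facets}(ii). When $d'=\lceil\mu d+1\rceil\leq d$, take $\mbf{G}^{d,d+1}_{d,d'-1}$, whose guessing block is $\mbb{I}_d$ and whose single ambiguous row has constant entry $1/(d-d'+2)$. Applying the preparations $\rho_x=\op{x}{x}$ for $x\in[d]$ and the POVM $\{\op{1}{1},\ldots,\op{d}{d},\op{E}{E}\}$ on $\mc{E}_\mu$ gives $P_{\mc{E}_\mu}(y|x)=\mu\delta_{xy}$ for $y\in[d]$ and $P_{\mc{E}_\mu}(E|x)=1-\mu$, so the Bell value equals $\mu d + d(1-\mu)/(d-d'+2)$. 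Writing $a=d-d'+1\geq 1$ and $\eta=d'-1-\mu d\in[0,1)$, this rewrites as $(d'-1)+(a+\eta)/(a+1)$, strictly exceeding the bound $d'-1$ since $a(1-\eta)>0$. Hence $\kappa(\mc{E}_\mu)\geq d'$. In the complementary regime $\mu>1-1/d$ (where $\lceil\mu d+1\rceil>d$), the same argument applied to the test polytope $\mc{C}_{d-1}^{d\to d+1}$ yields $\kappa(\mc{E}_\mu)\geq d$, matching $\min\{d,\lceil\mu d+1\rceil\}$ in every regime.

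The chief obstacle is verifying the vertex construction in the upper bound: one must carefully check that averaging $\mbf{W}_{S,\tilde y}$ over both the random subset $S$ (with the mixed subset-size sampling required when $\mu d$ is non-integer) and the noise output $\tilde y\sim\mbf{p}$ reproduces $\mu\mbf{V}_i+(1-\mu)\mbf{p}\mbf{1}^{T}$ exactly, while keeping every supporting vertex of rank at most $d'$. Once this is in place, the lower-bound and erasure-tightness steps reduce to direct algebraic calculations in the ambiguous-guessing framework already developed in earlier sections.
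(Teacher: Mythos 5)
Your proof is correct and follows essentially the same route as the paper's: the maximum-likelihood Bell inequality on an orthonormal preparation/measurement pair for the lower bound, a Frenkel--Weiner decomposition combined with random thinning of the $d$ messages for the $\lceil\mu d + 1\rceil$ upper bound, and the ambiguous guessing game $\mbf{G}^{d,d+1}_{d,d'-1}$ applied to the induced channel $\mbf{P}_{\mc{E}_\mu}\in\mc{P}^{d\to(d+1)}$ for tightness of the erasure channel. The only substantive difference is cosmetic: you realize the upper-bound simulation as an explicit convex decomposition into rank-$\leq d'$ vertices $\mbf{W}_{S,\tilde y}$, mixing subset sizes $\lfloor\mu k\rfloor$ and $\lceil\mu k\rceil$ so that $\mathbb{P}(j\in S)=\mu$ exactly, whereas the paper fixes $|\nu|=\lceil\mu d\rceil$ and compensates with an auxiliary Bernoulli coin on Bob's side --- equivalent bookkeeping of the same protocol. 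One small slip in the erasure calculation: with $a=d-d'+1$ and $\eta=d'-1-\mu d$, the Bell value rewrites as $(d'-1)+a(1-\eta)/(a+1)$ rather than $(d'-1)+(a+\eta)/(a+1)$; your stated conclusion ``since $a(1-\eta)>0$'' is the right one, so the argument goes through.
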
   

\begin{proof}
We first prove the upper bound in Eq. \eqref{Eq:replacer-bounds}.  The trivial bound $\kappa(\mc{R}_\mu)\leq d$ was already observed in Eq. \eqref{Eq:bound-dimension}.  To show that $\kappa(\mc{R}_\mu)\leq \lceil\mu d+1\rceil$, let $\{\rho_x\}_x$ be any collection of inputs and $\{\Pi_y\}_y$ a POVM.  Then 
 \begin{equation}
     \label{Eq:Depolarizing-classical-channel}
 P_{\mc{R}_\mu}(y|x)=\mu P(y|x)+(1-\mu)S(y),
\end{equation}
 where $P(y|x)=\tr[\Pi_y\rho_x]$ and $S(y)=\tr[\Pi_y\sigma]$.  From Ref. \cite{Frenkel2015}, we know that $P(b|x)$ can be decomposed like Eq. \eqref{Eq:d-simulate}.  Substituting this into Eq. \eqref{Eq:Depolarizing-classical-channel} yields
 \begin{align}
 P_{\mc{R}_\mu}(u|x)&=\sum_\lambda q(\lambda)\sum_{m=1}^dR_\lambda(m|x)\notag\\
     &\qquad\times[\mu T_\lambda(y|m)+(1-\mu)S(y)].
 \end{align}
 For $r=\lceil\mu d +1\rceil$, let $\nu$ be a random variable uniformly distributed over $\{\binom{d}{r-1}\}$, which is the collection of all subsets of $[d]$ having size $r-1$.  For a given $\lambda$, $\nu$, and input $x$, Alice performs the channel $\mbf{T}_\lambda$.  If $m\in\nu$, Alice sends message $m'=m$; otherwise, Alice sends message $m'=0$.  Upon receiving $m'$, Bob does the following: if $m'\not=0$ he performs channel $\mbf{R}_\lambda$ with probability $\tfrac{\mu d}{r-1}$ and samples from distribution $S(y)$ with probability $1-\tfrac{\mu d}{r-1}$; if $m'\not=0$ he samples from $S(y)$ with probability one. Since $Pr\{m\in\nu\}=\frac{r-1}{d}$, this protocol faithfully simulates $P_{\mc{R}_\mu}$.  To establish the lower bound in Eq. \eqref{Eq:Depolarizing-classical-channel}, suppose that Alice sends orthogonal states $\{\ket{1},\cdots,\ket{d}\}$
  and Bob measures in the same basis.  Then 
  \begin{equation}
      \sum_{i=1}^d\bra{i}\mc{R}_\mu(\op{i}{i})\ket{i}=d\mu+(1-\mu),
  \end{equation}
  which will violate Eq. \eqref{Eq:Ambiguous} for the ML polytope $\mc{M}^{d\to d}_{r}$ whenever $r< \mu d+(1-\mu)$.  Hence any zero-error simulation will require at least $\lceil \mu d+1-\mu\rceil$ classical messages.  For the erasure channel, this lower bound can be tightened by considering the score for other ambiguous games, as detailed in Appendix \ref{Appendix-replacer}.
\end{proof}

\section*{Discussion}

In this work, we have presented the signaling dimension of a channel as its classical simulation cost.
In doing so, we have advanced a device-independent framework for certifying the signaling dimension of a quantum channel as well as its input/output dimensions.  While this work focuses on communication systems, our framework also applies to computation and memory tasks.  

The family of ambiguous guessing games includes the maximum likelihood facets, which say that $\sum_{y=1}^{n'}\max_{x\in[n]}P(y|x)\leq d$ for all $\mbf{P}\in\mc{C}_d^{n\to n'}$.  Since the results of Frenkel and Weiner imply that $\mc{P}_{\mc{N}}^{n\to n'}\subset \mc{C}_d^{n\to n'}$ whenever $d \geq \min\{d_A,d_B\}$ for channel $\mc{N}^{A\to B}$ \cite{Frenkel2015}, it follows that
\begin{equation}
     \max_{\substack{\{\rho_x\}_{x\in[n]}\\\{\Pi_y\}_{y\in[n]}}} \sum_{x=1}^n \tr\Big[\Pi_x \mc{N}\big(\rho_x\big)\Big]\leq d,
\end{equation}
an observation also made in Ref. \cite{brunner2013dim_test}.    Despite the simplicity of this bound, in general it is too loose to certify the input/output Hilbert space dimensions of a channel.  For example, consider the $50:50$ erasure channel $\mc{E}_{1/2}$ acting on a $d_A=3$ system.
It can be verified  that $\mc{P}_{\mc{E}_{1/2}}^{n\to n'}\subset \mc{M}_2^{n\to n'}$, \textit{i.e.} $\sum_x\tr\big[\Pi_x \mc{E}_{1/2}(\rho_x)\big]\leq 2$ for all $\{\rho_x\}_x$ and $\{\Pi_y\}_y$.
Hence maximum likelihood estimation yields the lower bound $\kappa(\mc{E}_{1/2}) \geq 2$.
On the other hand, the classical channel 

\begin{equation}
    \mbf{P_{\mc{E}_{1/2}}} = \begin{bmatrix}
        0.5 & 0 & 0 \\
        0 & 0.5 & 0 \\
        0 & 0 & 0.5 \\
        0.5 & 0.5 & 0.5 \\
    \end{bmatrix}
\end{equation}

\noindent generated by orthonormal input states $\{\ket{1},\ket{2},\ket{3}\}$ and a measurement in the orthonormal basis $\{\ket{1},\ket{2},\ket{3},\ket{E}\}$ violates Eq. \eqref{Eq:Prop-ambiguous-form} for the $\mc{A}_{3,2}^{3\to 4}$ ambiguous polytope.
Hence $\mbf{P}_{\mc{E}_{1/2}} \notin \mc{A}_{3,2}^{3\to 4}$ and it follows that $\kappa^{3 \to 4}(\mc{E}_{1/2}) \geq 3$.
Therefore, the ambiguous guessing game certifies the qutrit nature of the input space whereas maximum likelihood estimation does not.

Our results can be extended in two key directions.
First, our characterization of the signaling polytope is incomplete.
Novel Bell inequalities, lifting rules, and complete sets of facets can be derived beyond those discussed in this work.
Such results would help improve the signaling dimension bounds and the efficiency of computing Bell inequalities.
Second, the signaling dimension specifies the classical cost of simulating a quantum channel, but not the protocol that achieves the classical simulation.
Such a simulation protocol would apply broadly across the field of quantum information science and technology.

\subsection*{Supporting Software}

This work is supported by SignalingDimension.jl \cite{SignalingDimension.jl}.
This software package includes our signaling polytope computations, numerical facet verification, and signaling dimension certification examples.
SignalingDimension.jl is publicly available on Github and written in the Julia programming language \cite{Julia-2017}.
The software is documented, tested, and reproducible on a laptop computer.
The interested reader should review the software documentation as it elucidates many details of our work.
\vspace{12pt} 

\subsection*{Acknowledgements}

We thank Marius Junge for enlightening discussions during the preparation of this paper.  We acknowledge NSF Award \# 2016136  for supporting this work. 

\bibliography{references}

\onecolumngrid

\newpage

\appendix

\section{Notation Glossary}

\begin{table}[h]
    \centering
    \def\arraystretch{1.6}
    \begin{tabular}{|L{0.13\textwidth}|L{0.28\textwidth}|L{0.55\textwidth}|}
        \hline 
        \textbf{\large Notation} & \textbf{\large Terminology} & \textbf{\large Definition} \\
        \hhline{|=|=|=|}
        $\mc{P}^{n \to n'}$ & Set of Classical Channels & The subset of $\mbb{R}^{n'\times n}$ containing column stochastic matrices. \\
        \hline
        $\mbf{P}$ & Classical Channel & An element of $\mc{P}^{n \to n'}$ that represents a classical channel with $n$ inputs and $n'$ outputs. \\
        \hline
        $\mc{N}$ & Quantum Channel & A completely positive trace-preserving map. \\
        \hline 
        $\mc{P}_{\mc{N}}^{n \to n'}$ & Set of Classical Channels Generated from $\mc{N}$ &  The subset of $\mc{P}^{n \to n'}$ which decomposes as Eq. \eqref{Eq:channel-induce} for some quantum channel
        $\mc{N}$. \\
        \hline 
        $\mc{C}_d^{n \to n'}$ & Signaling Polytope & The subset of $\mc{P}^{n \to n'}$ containing channels that decomposes as Eq. \eqref{Eq:d-simulate} (see Def. \ref{Def-signaling-polytope}).  \\
        \hline
        $(\mbf{G},\gamma)$ & Linear Bell Inequality & A tuple describing the linear inequality $\langle \mbf{G},\mbf{P}\rangle \leq \gamma$ where $\mbf{G}\in\mbb{R}^{n'\times n}$, $\gamma\in\mbb{R}$, and $\mbf{P}\in\mc{P}^{n \to n'}$. \\
        \hline
        $\kappa^{n \to n'}(\mc{N})$ & The $n\to n'$ Signaling Dimension of $\mc{N}$ & The smallest integer $d$ such that $\mc{P}_{\mc{N}}^{n \to n'}\subset\mc{C}_d^{n \to n'}$ (see Def. \ref{def-signaling-dimension}).\\
        \hline
        $\kappa(\mc{N})$ & The Signaling Dimension of $\mc{N}$ & The smallest integer $d$ such that $\mc{P}_{\mc{N}}^{n \to n'}\subset\mc{C}_d^{n \to n'}$ for all positive integers $n$ and $n'$ (see Def. \ref{def-signaling-dimension}). \\
        \hline
        $(\mbf{G}^{n,n'}_{k,d},d)$ & Ambiguous Guessing Game & A signaling polytope Bell inequality where $\mbf{G}^{n,n'}_{k,d}\in\mbb{R}^{n'\times n}$ has $k$ rows that are row stochastic with 0/1 elements and $(n'-k)$ rows with each column containing $1/(n-d+1)$.\\
        \hline
        $\mc{A}_{k,d}^{n \to n'}$ & Ambiguous Polytope & The subset of $\mc{P}^{n \to n'}$ which is tightly bound by inequalities of the form $(\mbf{G}^{n,n'}_{k,d}, d)$. \\
        \hline 
        $\mc{M}_d^{n \to n'}$ & Maximum Likelihood Estimation Polytope & The subset of $\mc{P}^{n \to n'}$ defined as the ambiguous polytope $\mc{A}^{n \to n'}_{k,d}$ where $k=n'$. \\
        \hline
        $\mc{R}_\mu$ & Partial Replacer Channel & A quantum channel that replaces the input state $\rho_x$ with quantum state $\sigma$ with probability $(1-\mu)$. \\
        \hline
        $\mc{E}_\mu$ & Partial Erasure Channel & A partial replacer channel that replaces the  input with $\sigma = \op{E}{E}$ where $\ket{E}$ is orthogonal to the input Hilbert space. \\
        \hline
        $\mc{V}_d^{n \to n'}$ & Signaling Polytope Vertices & The subset of $\mc{C}_d^{n \to n'}$ containing classical channels with 0/1 elements.\\
        \hline 
        $\mc{F}_d^{n \to n'}$ & Signaling Polytope Facets & The complete set of tight Bell inequalities for  $\mc{C}_d^{n \to n'}$.\\
        \hline 
        $\mc{G}_d^{n \to n'}$ & Signaling Polytope Generator Facets & The subset of $\mc{F}_d^{n \to n'}$ containing a representative of each facet class in $\mc{F}_d^{n \to n'}$ (see Appendix \ref{Appendix-generator-facets}). \\
        \hline
        $(\mbf{G}_{\text{K}}^{n',k},\gamma_{\text{K}}^{n',k,d})$ & $k$-Guessing Facet & Tight Bell inequality for signaling polytopes (see Appendix \ref{appendix-k-guessing-facets}).   \\
        \hline
        $(\mbf{G}_{\text{ML}}^{n'},d)$ & Maximum Likelihood Facet & Tight Bell inequality for signaling polytopes (see Appendix \ref{Appendix-ml-facets}). \\
        \hline
        $(\mbf{G}_{?}^{n',d}, \gamma^{n',d}_{?})$ & Ambiguous Guessing Facet & Tight Bell inequality for signaling polytopes (see Appendix \ref{Appendix-ambiguous-guessing-facets}). \\
        \hline
        $(\mbf{G}_{\text{A}}^{\varepsilon,m'}, \gamma_{\text{A}}^{\varepsilon,d})$ & Anti-Guessing Facet & Tight Bell inequality for signaling polytopes (see Appendix \ref{appendix-anti-guessing-facets}). \\
        \hline
    \end{tabular}
    \caption{\linespread{1}\selectfont{\small Notation used throughout this work.}}
\end{table}

\section{Signaling Polytope Structure} \label{Appendix-polytope-structure} 

In this section we provide details about the structure of signaling polytopes (see Definition \ref{Def-signaling-polytope}).
The signaling polytope, denoted by $\mc{C}_d^{n \to n'}$, is a subset of $\mc{P}^{n \to n'}$.
Therefore, a channel $\mbf{P}\in\mc{C}_d^{n \to n'}$ has matrix elements $P(y|x)$ subject to the constraints of non-negativity $P(y|x) \geq 0$ and normalization $\sum_{y\in[n']}P(y|x) = 1$ for all $y\in[n']$ and $x\in[n]$.
Furthermore, since channels $\mbf{P}\in\mc{C}_d^{n \to n'}$ are permitted the use of shared randomness, the set $\mc{C}_d^{n \to n'}$ is convex.

In the two extremes of communication, the signaling polytope admits a simple structure.
For maximum communication, $d=\min\{n,n'\}$, any  channel $\mbf{P}\in\mc{P}^{n \to n'}$ can be realized, hence $\mc{C}_{\min\{n,n'\}}^{n \to n'} = \mc{P}^{n \to n'}$.
For no communication, $d=1$, Bob's output $y$ is independent from Alice's input $x$  meaning that $P(y|x) = P(y|x')$ for any choice of $x,x'\in[n]$ and $y\in[n']$.
This added constraint simplifies the signaling polytope $\mc{C}_1^{n \to n'}$ to $\mc{P}^{1 \to n'}$ which is formally an $n'$-simplex \cite{Ziegler-2012a}.
For all other cases, $\min\{n,n'\} > d > 1$, the signaling polytope $\mc{C}_d^{n \to n'}$ takes on a more complicated structure.

\subsection{Vertices}

The vertices of the signaling polytope are denoted by $\mc{V}_d^{n \to n'}$.
Signaling polytopes are convex and therefore described as the convex hull of their vertices, $\mc{C}_d^{n \to n'} = \conv(\mc{V}_d^{n \to n'})$.
As noted in the main text, a vertex $\mbf{V}\in\mc{V}_d^{n \to n'}$ is an $n'\times n$ column stochastic matrices with 0/1 elements and rank $\rank(\mbf{V})\leq d$.
For instance,

\begin{equation}
    \mbf{V}=\begin{bmatrix}1&0&0&1\\0&0&0&0\\0&1&1&0\end{bmatrix}
\end{equation}

\noindent is a vertex $\mbf{V}\in\mc{V}_2^{4\to 3}$.
Naturally, each vertex $\mbf{V}\in\mc{V}_d^{n \to n'}$ has no more than $d$ nonzero rows.
A straightforward counting argument shows that $\mc{V}^{n\to n'}_d$ contains $\sum_{c=1}^d \genfrac{\{}{\}}{0pt}{1}{n}{c}\binom{n'}{c} c!$ vertices (see Supplemental Material of Ref. \cite{Dall'Arno_no_hypersignaling_2017}), where $\genfrac{\{}{\}}{0pt}{1}{n}{c}$ denotes Stirling's number of the second kind and $\binom{n'}{c}$ a binomial coefficient.
An important observation is that number of vertices in $\mc{V}_d^{n \to n'}$ grows exponentially in  the number of inputs, $n$, and factorially in the number of outputs, $n'$.
The large number of  vertices represents a key challenge in characterizing the signaling polytope.

\subsection{Polytope Dimension}

The dimension of the signaling polytope $\dim(\mc{C}_d^{n \to n'})\leq \dim(\mc{P}^{n \to n'}) = n(n'-1)$.
This upper bound follows from the facts that $\mc{C}_d^{n \to n'} \subseteq \mc{P}^{n \to n'}$ and any $\mbf{P} \in\mc{P}^{n \to n'}$ must satisfy $n$ normalization constraints, one for each column of $\mbf{P}$.
Naively, $\mc{P}^{n \to n'}\subset\mbb{R}^{n'\times n}$ where $\dim(\mbb{R}^{n'\times n}) = n(n')$, however, the $n$ normalization constraints restrict $\mc{P}^{n \to n'}$ to $\dim(\mc{P}^{n \to n'}) = n(n'-1)$. 
To evaluate the dimension of $\mc{C}_d^{n \to n'}$ with greater precision, the number of affinely independent vertices in $\mc{V}_d^{n \to n'}$ can be counted where $\dim(\mc{C}_d^{n \to n'})$ is one less than the number of affinely independent vertices.
When $d\geq 2$, one can count $n(n'-1)+1$ affinely independent vertices in $\mc{V}_d^{n \to n'}$, therefore, $\dim(\mc{C}^{n\to n'}_d)=n(n'-1)$.
In the remaining case of $d=1$, each of the $n'$ vertices are affinely independent and $\dim(\mc{C}_1^{n \to n'}) = n'-1$.
This result is not surprising because, as noted before, $\mc{C}_1^{n \to n'} = \mc{P}^{1 \to n'}$ and $\dim(\mc{P}^{1 \to n'}) = n'-1$.

\subsection{Facets}

A linear Bell inequality is represented as a tuple $(\mbf{G},\gamma)$ with $\mbf{G}\in \mbb{R}^{n'\times n}$ and $\gamma\in\mbb{R}$ where the inequality $\langle\mbf{G},\mbf{P}\rangle = \sum_{x,y}G_{y,x}P(y|x) \leq \gamma$ is formed by the Euclidean inner product with a channel $\mbf{P}\in\mc{P}^{n \to n'}$.
For convenience, we identify two  polyhedra of channels 

\begin{align}
    \mc{C}(\mbf{G},\gamma)&:=\{\mbf{P}\in\mc{P}^{n\to n'}\;|\;\langle\mbf{G},\mbf{P}\rangle\leq \gamma\},
    \\
    \mc{F}(\mbf{G},\gamma)&:=\{\mbf{P}\in\mc{P}^{n\to n'}\;|\;\langle\mbf{G},\mbf{P}\rangle= \gamma\}.
\end{align}

\lemma \label{lemma-facet-proof}

An inequality $(\mbf{G},\gamma)$ is a tight Bell inequality of the $\mc{C}_d^{n \to n'}$ signaling polytope \textit{iff}

    \begin{enumerate}
        \item \label{facet-proof-bound-condition}  $\mc{C}_d^{n \to n'}\subset \mc{C}(\mbf{G},\gamma) $;
        \item \label{facet-proof-aff-ind-condition}$\dim\left(\mc{C}^{n\to n'}_d\cap \mc{F}(\mbf{G},\gamma)\right)=\dim\left(\mc{C}_d^{n \to n'}\right)-1$.
    \end{enumerate}

\noindent Condition \ref{facet-proof-bound-condition} requires that Bell inequality $(\mbf{G},\gamma)$ contains all channels $\mbf{P}\in\mc{C}_d^{n \to n'}$ while Condition \ref{facet-proof-aff-ind-condition} requires that inequality $(\mbf{G},\gamma)$ is both a proper half-space and a \textit{facet} of $\mc{C}_d^{n \to n'}$.
Tight Bell inequalities and facets are closely related and described by the same inequality $(\mbf{G},\gamma)$.
The key difference is that a tight Bell inequality is a half-space inequality $\langle\mbf{G},\mbf{P}\rangle \leq \gamma$ whereas a facet is the polytope $\mc{C}_d^{n \to n'}\cap\mc{F}(\mbf{G},\gamma)$.
The complete set of signaling polytope facets is denoted by $\mc{F}_d^{n \to n'}$ and the signaling polytope is simply the intersection of all tight Bell inequalities $(\mbf{G}_m,\gamma_m)\in\mc{F}_d^{n \to n'}$,

\begin{equation} 
    \mc{C}_d^{n \to n'}  = \bigcap_{m=1}^r \mc{C}(\mbf{G}_m,\gamma_m),
\end{equation}

\noindent The number of facet inequalities $r$ is typically larger than the set of vertices $\mc{V}_d^{n \to n'}$ presenting another challenge in the characterization of signaling polytopes.

\begin{remark}
    A given Bell inequality $(\mbf{G},\gamma)\in\mc{F}_d^{n \to n'}$ does not have a unique form.
    Therefore, it is convenient to establish a \textit{normal form} for a given facet inequality \cite{christof2001decomposition}.
    First, observe that multiplying an inequality $(\mbf{G},\gamma)$ by a scalar $a\in\mbb{R}$ does not change the inequality, that is, $\mc{C}(\mbf{G}, \gamma) = \mc{C}(a\mbf{G},a(\gamma))$.
    Second, observe that the vertices in $\mc{V}_d^{n \to n'}$ have 0/1 elements and the rational arithmetic in Fourier-Motzkin elimination \cite{Ziegler-2012a,PORTA} results in the matrix coefficients of $\mbf{G}$ being rational.
    Therefore, there exists a rational scalar $a$ such that $a G_{y,x}$ and $a\gamma$ are integers for all $x\in[n]$ and $y\in[n']$.
    Third, observe that the normalization and non-negativity constraints for channels $\mbf{P}\in\mc{P}^{n \to n'}$ allows the equivalence between the following two inequalities
    
    \begin{equation}\label{Eq:facet-normal-form-normalization}
        \gamma \geq \langle \mbf{G},\mbf{P} \rangle \Longleftrightarrow
         \gamma + 1 \geq \langle \mbf{G},\mbf{P} \rangle + \sum_{y\in[n']}G_{y,x'}P(y|x')
    \end{equation}
    
    \noindent for any $x'\in[n]$.
    Therefore, it is always possible to find a form of inequality $(\mbf{G},\gamma)$ where $G_{y,x}\geq 0$ for all $y\in[n']$ and $x\in[n]$.
    Hence we define a normal form for any tight Bell inequality $(\mbf{G},\gamma)\in\mc{F}_d^{n \to n'}$:
    
    \begin{itemize}
        \item Inequality $(\mbf{G},\gamma)$ is scaled such that $\gamma$ and all $G_{y,x}$ are integers with a greatest common factor of 1.
        \item Normalization constraints are added or subtracted from all columns using Eq. \eqref{Eq:facet-normal-form-normalization} such that $G_{y,x}\geq 0$ and the smallest element in each column of $\mbf{G}$ is zero.
    \end{itemize}
\end{remark}

\subsection{Permutation Symmetry} \label{Appendix-permutation-symmetry}

The input and output values $x$ and $y$ are merely labels for a channel $\mbf{P}\in\mc{P}^{n \to n'}$, therefore, swapping labels $x\leftrightarrow x'$ and $y \leftrightarrow y'$ where $x,x'\in[n]$ and $y,y'\in[n']$ does not affect $\mc{P}^{n \to n'}$ \cite{Rosset2014}.
The relabeling operation is implemented using elements from the set of doubly stochastic $k\times k$ permutation matrices $S_k$.
For example,

\begin{equation} \label{eq:channel-permutation}
    \mbf{P}' = \pi_{\mc{Y}}\mbf{P}\pi_{\mc{X}}, \quad \text{where} \quad \mbf{P},\mbf{P}'\in\mc{P}^{n \to n'},
\end{equation}

\noindent $\pi_\mc{X}\in S_{n}$, and $\pi_\mc{Y}\in S_{n'}$.
Note that permuting the rows or columns of a matrix cannot change the rank of a matrix, therefore, if $\mbf{V}\in\mc{V}_d^{n \to n'}$ and $\mbf{V}' = \pi_\mc{Y}\mbf{V}\pi_\mc{X}$, then $\mbf{V}'\in\mc{V}_d^{n \to n'}$.
It follows that this permutation symmetry holds for any channel in the signaling polytope, $\mbf{P},\mbf{P}'\in\mc{C}_d^{n \to n'}$ where $\mbf{P}'$ is a permutation of $\mbf{P}$.
Likewise, a facet inequality $(\mbf{G},\gamma)\in\mc{F}_d^{n \to n'}$ can be permuted into a new facet inequality $(\mbf{G}',\gamma)\in\mc{F}_d^{n \to n'}$ where $\mbf{G}'=\pi_\mc{Y}\mbf{G}\pi_\mc{X}$.

\subsection{Generator Facets} \label{Appendix-generator-facets}

Permutation symmetry motivates the notion of a \textit{facet class} defined as a collection of facet inequalities formed by taking all permutations of a canonical facet $(\mbf{G}^{\star},\gamma)\in\mc{F}_d^{n \to n'}$ which we refer to as a \textit{generator facet}.
The canonical facet is arbitrary thus we define the generator facet as the \textit{lexicographic normal form} \cite{christof2001decomposition,Rosset2014} of the facet class.
The set of generator facets, denoted by $\mc{G}_d^{n \to n'}:=\{(\mbf{G}_i^{\star},\gamma_i)\}_{i=1}^{r'}$, is the subset of $\mc{F}_d^{n \to n'}$ containing the generator facet of each facet class bounding $\mc{C}_d^{n \to n'}$.
Since the number of input and output permutations scale as factorials of $n$ and $n'$ respectively, the set of generator facets is considerably smaller than $\mc{F}_d^{n \to n'}$ and therefore, provides a convenient simplification to $\mc{F}_d^{n \to n'}$.
To recover the complete set of facets from $\mc{G}_d^{n \to n'}$, we take all row and column permutations of each generator facet $(\mbf{G}^{\star},\gamma)\in\mc{G}_d^{n \to n'}$.
As a final remark, we note that $\mc{V}_d^{n \to n'}$ can also be reduced to a set of generator vertices, however, this set is not required for our current discussion of signaling polytopes.







\section{Adjacency Decomposition} \label{Appendix-Adjacency}

This section provides an overview of the adjaceny decomposition technique \cite{christof2001decomposition}.
In our work, we use an adjacency decomposition algorithm to compute the generator facets of the signaling polytope.
Our implementation can be found in our supporting software \cite{SignalingDimension.jl}.
The adjacency decomposition provides a few key advantages in the computation of Bell inequalities:

\begin{enumerate}
    \item The algorithm stores only the generator facets $\mc{G}_d^{n  \to n'}$ instead of the complete set of facets $\mc{F}_d^{n \to n'}$. This considerably reduces the required memory.
    \item New generator facets are derived in each iteration of the computation, hence, the algorithm does not need to run to completion to provide value.
    \item The algorithm can be widely parallelized \cite{christof2001decomposition}.
\end{enumerate}

\subsection{Adjacency Decomposition Algorithm}

The adjacency decomposition is an iterative algorithm which requires as input the signaling polytope vertices $\mc{V}_d^{n \to n'}$ and a seed generator facet $(\mbf{G}^{\star}_{\text{seed}},\gamma_{\text{seed}})\in\mc{G}_d^{n \to n'}$.
The algorithm maintains a list of generator facets $\mc{G}_{\text{list}}$ where each facet $(\mbf{G}^{\star},\gamma)\in\mc{G}_{\text{list}}$ is marked either as \textit{considered} or \textit{unconsidered}.
The generator facet is defined as the lexicographic normal form of the facet class \cite{christof2001decomposition,Rosset2014}.
Before the algorithm begins, $(\mbf{G}^{\star}_{\text{seed}},\gamma_{\text{seed}})$ is added to $\mc{G}_{\text{list}}$ and marked as \textit{unconsidered}.
In each iteration, the algorithm proceeds as follows \cite{christof2001decomposition}:

\begin{enumerate}
    \item An \textit{unconsidered} generator facet $(\mbf{G}^{\star},\gamma)\in\mc{G}_{\text{list}}$ is selected.
    \item All facets adjacent to $(\mbf{G}^{\star},\gamma)$ are computed.
    \item Each adjacent facet is converted into its lexicographic normal form.
    \item Any new generator facets identified are marked as \textit{unconsidered} and added to $\mc{G}_{\text{list}}$.
    \item Facet $(\mbf{G}^{\star},\gamma)$ is marked as \textit{considered}.
\end{enumerate}

\noindent The procedure repeats until all facets in $\mc{G}_{\text{list}}$ are marked as \textit{considered}.
If run to completion, then $\mc{G}_{\text{list}}=\mc{G}_d^{n \to n'}$ and all generator facets of the signaling polytope $\mc{C}_d^{n \to n'}$ are identified.
The algorithm is guaranteed to find all generator facets due to the permutation symmetry of the signaling polytope.
By this symmetry, any representative of a given facet class has the same fixed set of facet classes adjacent to it.
For the permutation symmetry to hold for all facets in the signaling polytope, there cannot be two disjoint sets of generator facets where the members of one set do not lie adjacent to the members of the other.

The inputs of the adjacency decomposition are easy to produce computationally.
A seed facet can always be constructed using the lifting rules for signaling polytopes (see Fig. \ref{Fig:input-output-lifting}) and
the signaling polytope vertices $\mc{V}_d^{n \to n'}$ can be easily computed (see supporting software \cite{SignalingDimension.jl}).
Note, however, that the exponential growth of $\mc{V}_d^{n \to n'}$ eventually hinders the performance of the adjacency decomposition algorithm.

\subsection{Facet Adjacency}

A key step in the adjacency decomposition algorithm is to compute the set of facets adjacent to a given facet $(\mbf{G},\gamma)$.
In this section, we define facet adjacency and outline the method used to compute the adjacent facets.

\begin{lemma} Two facets $(\mbf{G}_1,\gamma_1),(\mbf{G}_2,\gamma_2)\in\mc{F}_d^{n \to n'}$ are adjacent \textit{iff} they share a \textit{ridge} $\mc{H}$ defined as:

\begin{enumerate}
    \item $\mc{H} := \mc{F}(\mbf{G}_1,\gamma_1)\cap\mc{F}(\mbf{G}_2,\gamma_2)\cap\mc{C}_d^{n \to n'}$,
    \item where $\dim(\mc{H}) = \dim(\mc{C}_d^{n \to n'}) - 2$.
\end{enumerate}

\end{lemma}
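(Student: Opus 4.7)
The plan is to recognize this lemma as the standard characterization of facet adjacency in the face lattice of a convex polytope, specialized to the signaling polytope $\mc{C}_d^{n\to n'}$, and to verify that each ingredient transfers cleanly to this setting.

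First I would establish that $\mc{H}$ is always a face of $\mc{C}_d^{n\to n'}$, regardless of whether the two facets are adjacent. Since each $(\mbf{G}_i, \gamma_i)$ is a facet inequality of $\mc{C}_d^{n\to n'}$, the affine hyperplane $\mc{F}(\mbf{G}_i, \gamma_i)$ is a supporting hyperplane, and $\mc{F}(\mbf{G}_i, \gamma_i) \cap \mc{C}_d^{n\to n'}$ is a facet in the usual polytopal sense. Because the intersection of any finite collection of faces of a convex polytope is again a face, $\mc{H}$ is a face of $\mc{C}_d^{n\to n'}$ and therefore possesses a well-defined dimension.

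Next I would handle the two implications separately using the convention that two facets of a convex polytope are \emph{adjacent} exactly when their intersection is a ridge, i.e.\ a face of codimension two. Under this convention, the forward direction is essentially tautological: if the facets are adjacent then their intersection, which is precisely $\mc{H}$, must satisfy $\dim(\mc{H}) = \dim(\mc{C}_d^{n\to n'}) - 2$. For the reverse direction I would invoke the \emph{diamond property} of polytope face lattices: every codimension-two face of a convex polytope is contained in exactly two facets. If $\dim(\mc{H}) = \dim(\mc{C}_d^{n\to n'}) - 2$, then $\mc{H}$ is a ridge; since it is contained in both $\mc{F}(\mbf{G}_1,\gamma_1)\cap\mc{C}_d^{n\to n'}$ and $\mc{F}(\mbf{G}_2,\gamma_2)\cap\mc{C}_d^{n\to n'}$, the uniqueness clause of the diamond property forces these to be exactly the two facets containing $\mc{H}$, and hence they are adjacent.

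The main obstacle is ensuring that the diamond property legitimately applies to $\mc{C}_d^{n\to n'}$, which requires that the polytope be full-dimensional; this is established in Appendix \ref{Appendix-polytope-structure} for $d \geq 2$. The degenerate case $d = 1$ reduces to the simplex $\mc{P}^{1\to n'}$, where every pair of distinct facets intersects in a face of codimension two, so the lemma holds trivially. Consequently the entire argument reduces to a careful bookkeeping of standard polytope theory once the face nature of $\mc{H}$ is pinned down.
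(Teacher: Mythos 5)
The paper states this lemma without proof; in context it functions as the \emph{definition} of facet adjacency (and a pointer to standard polytope theory, cf.\ Christof and Reinelt), so there is no proof in the paper to compare against. Your argument is correct as a justification of the standard fact, but it is a bit heavier than it needs to be in two respects. First, the diamond-property appeal in the reverse direction is redundant under your own stated convention: once you observe that $\mc{H} = \bigl(\mc{F}(\mbf{G}_1,\gamma_1)\cap\mc{C}_d^{n\to n'}\bigr) \cap \bigl(\mc{F}(\mbf{G}_2,\gamma_2)\cap\mc{C}_d^{n\to n'}\bigr)$ is exactly the intersection of the two facets and is itself a face, the hypothesis $\dim(\mc{H}) = \dim(\mc{C}_d^{n\to n'}) - 2$ directly says that this intersection is a ridge, which is what ``adjacent'' means in your convention -- the diamond property would only be doing work if adjacency were defined some other way. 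Second, the worry about full-dimensionality is misplaced: the diamond property is a combinatorial statement about the face lattice of any polytope and does not depend on how the polytope sits in its ambient space, so the $d=1$ case (where $\mc{C}_1^{n\to n'}$ is a simplex of lower dimension than $\mc{P}^{n\to n'}$) does not require separate handling.
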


\noindent A ridge can be understood as a facet of the facet polytope $\mc{C}_d^{n\to n'}\cap\mc{F}(\mbf{G},\gamma)$.
Therefore, to compute the ridges of a given facet $(\mbf{G},\gamma)\in\mc{F}_d^{n \to n'}$ we take the typical approach for computing facets.
Namely, the set of vertices $\{\mbf{V}\in\mc{V}_d^{n \to n'}| \; \langle\mbf{G},\mbf{V}\rangle = \gamma \}$ is constructed and PORTA \cite{PORTA,XPORTA.jl} is used to compute the ridges of $(\mbf{G},\gamma)$.
A facet adjacent to $(\mbf{G},\gamma)$ is computed from each ridge using a \textit{rotation algorithm} described by Christof and Reinelt \cite{christof2001decomposition}.
Given the signaling polytope vertices $\mc{V}_d^{n\to n'}$, this procedure computes the complete set of facets adjacent to $(\mbf{G},\gamma)$.

\section{Tight Bell Inequalities} \label{Appendix-facet-classes}

In this section we discuss the general forms for each of the signaling polytope facets in Fig. \ref{Fig:6-2-4_facets}.
Each facet class is described by a generator facet (see Appendix \ref{Appendix-permutation-symmetry}) where
all permutations and input/output liftings of these generator facets are also tight Bell inequalities.
To prove that an inequality $(\mbf{G},\gamma)$ is a facet of $\mc{C}_d^{n \to n'}$, both conditions of Lemma \ref{lemma-facet-proof} must hold.
The proofs contained by this section verify Condition \ref{facet-proof-aff-ind-condition} of Lemma \ref{lemma-facet-proof} by constructing a set of $\dim(\mc{C}_d^{n \to n'})= n(n'-1)$ affinely independent $\{\mbf{V}\in\mc{V}_d^{n \to n'}|\; \langle\mbf{G},\mbf{V}\rangle = \gamma\}$.
These enumerations are verified numerically in our supporting software \cite{SignalingDimension.jl}.
To assist with the enumeration of affinely independent vertices, we introduce a simple construction for affinely independent vectors with 0/1 elements.

\begin{lemma} \label{lemma-aff-ind-enumeration}
    Consider an $n$-element binary vector $\vec{b}_k\in\{0,1\}^n$ with $n_0$  null elements and $n_1$ unit elements where $n_0+n_1 = n$.
    A set of $n$ affinely independent vectors $\{\vec{b}_k\}_{k=1}^n$ is constructed as follows:

    \begin{itemize}
        \item Let $\vec{b}_1$ be the binary vector where the first $n_0$ elements are null and the next $n_1$ elements are unit values.
        \item For $k\in[2,n_0+1]$, $\vec{b}_k$ is derived from $\vec{b}_1$ by swapping the unit element at index $(n_0+1)$ with the null element at index $(k-1)$.
        \item For $k\in[n_0+2,n]$, $\vec{b}_k$ is derived from $\vec{b}_1$ by swapping the null element at index $n_0$ with the unit element at index $k$.
    \end{itemize}

    \noindent For example, when $n=5$, $n_0=2$, and $n_1=3$ the enumeration yields
    
    \begin{align}
        \left\{ \vec{b}_1 = [0,0,1,1,1], \quad \vec{b}_2 = [1,0,0,1,1], \quad \vec{b}_3 = [0,1,0,1,1], \quad \vec{b}_4 = [0,1,1,0,1], \quad \vec{b}_5 = [0,1,1,1,0] \right\}.
    \end{align}
    
    \begin{proof}
        To verify the affine independence of $\{\vec{b}\}_{k=1}^n$ it is sufficient to show the linear independence of $\{\vec{b}_1 - \vec{b_k} \}_{k=2}^n$.
        Note that each $(\vec{b}_1-\vec{b}_k)$ has two nonzero elements, one of which occurs at an index that is zero for all $(\vec{b}_1 - \vec{b}_{k'})$ where $k\neq k'$.
        Therefore, the vectors in $\{\vec{b}_1 - \vec{b_k} \}_{k=2}^n$ are linearly independent and  $\{\vec{b}_k\}_{k=1}^n$ is affinely independent.
    \end{proof}

        
        
\end{lemma}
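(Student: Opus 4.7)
The plan is to reduce affine independence of $\{\vec{b}_k\}_{k=1}^n$ to linear independence of the $n-1$ difference vectors $\{\vec{b}_1-\vec{b}_k\}_{k=2}^n$, which is the standard reformulation. The construction is tailored so that each difference vector has exactly two nonzero coordinates (one $+1$ and one $-1$), which makes the linear independence check transparent.

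First I would write out the explicit form of each difference vector by unpacking the swap rules. For $k\in[2,n_0+1]$, the vector $\vec{b}_k$ is obtained from $\vec{b}_1$ by moving the unit at position $n_0+1$ down to position $k-1$, so $\vec{b}_1-\vec{b}_k$ has a $-1$ at position $k-1$ and a $+1$ at position $n_0+1$. For $k\in[n_0+2,n]$, $\vec{b}_k$ is obtained by moving the null at position $n_0$ out to position $k$, so $\vec{b}_1-\vec{b}_k$ has a $-1$ at position $n_0$ and a $+1$ at position $k$. Collecting these cases gives a compact table of supports.

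Next I would exhibit, for each $k$, a coordinate that is nonzero in $\vec{b}_1-\vec{b}_k$ but vanishes in every other $\vec{b}_1-\vec{b}_{k'}$. For $k\in[2,n_0]$ take the coordinate $k-1$; for $k=n_0+1$ take the coordinate $n_0+1$; for $k\in[n_0+2,n]$ take the coordinate $k$. A quick scan of the supports above confirms that these indices are indeed distinguishing. Consequently, if $\sum_{k=2}^{n}c_k(\vec{b}_1-\vec{b}_k)=0$, reading off the distinguishing coordinate for each $k$ forces $c_k=0$, proving linear independence and hence affine independence of the original family.

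The only mildly subtle point, and the one I would be most careful about, is the boundary case $k=n_0+1$: this difference vector shares the coordinate $n_0$ with the differences from the second range (which share no other coordinates among themselves either, since their $+1$ positions are all distinct and lie in $[n_0+2,n]$). The argument resolves this by using $n_0+1$ rather than $n_0$ as the distinguishing coordinate for $k=n_0+1$, since $n_0+1$ never appears as a nonzero position for any $k\in[n_0+2,n]$. Everything else is routine bookkeeping, and the proof would be no more than a few lines.
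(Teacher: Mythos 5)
Your approach is the same as the paper's (reduce to linear independence of the differences, then pivot on a coordinate unique to each difference), and the paper's own proof is equally brief. However, there is a genuine gap precisely at the step you single out as the ``mildly subtle point,'' and your fix for it is incorrect. You claim $n_0+1$ is a distinguishing coordinate for $\vec{b}_1-\vec{b}_{n_0+1}$ because it never appears as a nonzero position for $k\in[n_0+2,n]$. That much is true, but it is also the $+1$ position for \emph{every} $k\in[2,n_0+1]$: each difference in the first range equals $-\vec{e}_{k-1}+\vec{e}_{n_0+1}$. So coordinate $n_0+1$ is nonzero for $n_0$ distinct vectors, not one. Likewise, coordinate $n_0$ is the $-1$ position both for $k=n_0+1$ and for every $k\in[n_0+2,n]$. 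So neither support coordinate of $\vec{b}_1-\vec{b}_{n_0+1}$ is unique, and reading off either coordinate from $\sum_k c_k(\vec{b}_1-\vec{b}_k)=0$ does not immediately force $c_{n_0+1}=0$. (To be fair, the paper's one-line proof carries the same unexamined claim; you simply made the error visible by spelling it out.)

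The lemma is nevertheless true and the repair is short. Coordinates $k-1\in[1,n_0-1]$ are unique to the differences with $k\in[2,n_0]$, and coordinates $k\in[n_0+2,n]$ are unique to the differences with $k\in[n_0+2,n]$; so a zero linear combination immediately forces $c_k=0$ for every $k\neq n_0+1$. Substituting back leaves $c_{n_0+1}(\vec{b}_1-\vec{b}_{n_0+1})=0$, and since this difference vector is nonzero, $c_{n_0+1}=0$ as well. Equivalently, look at coordinate $n_0$ \emph{after} the second-range coefficients have been zeroed out. Either way, the elimination must be sequenced: $k=n_0+1$ has no pivot of its own and must be handled last.
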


\subsection{\textit{k}-Guessing Facets} \label{appendix-k-guessing-facets}

Consider a guessing game with $k$ correct answers out of $n'$ possible answers.
In this game, Alice has $n=\binom{n'}{k}$ inputs where each value $x$ corresponds to a unique set of $k$ correct answers.
Given an input $x\in[n]$, Alice signals to Bob using a message $m\in[d]$ and Bob makes a guess $y\in[n']$.
A correct guess scores 1 point while an incorrect guess scores 0 points.
This type of guessing game is described by Heinosaari \textit{et al.} \cite{Heinosaari2019,Heinosaari-2020a} and used to test the communication performance of a particular theory.
In this work, we treat this \textit{$k$-guessing game} as a Bell inequality $(\mbf{G}_{\text{K}}^{n',k},\gamma_{\text{K}}^{n',k,d})$ of the signaling polytope $\mc{C}_d^{n \to n'}$ where 

\begin{equation}
    \gamma_{\text{K}}^{n',k,d} = \binom{n'}{k} - \binom{n'-d}{k}
\end{equation}

\noindent and $\mbf{G}_{\text{K}}^{n',k}\in\mbb{R}^{n' \times \binom{n'}{k}}$ is a matrix with each column containing a unique distribution of $k$ unit elements and $(n'-k)$ null elements.
For example,

\begin{equation}
    \mbf{G}_{\text{K}}^{6,2} = \begin{bmatrix}
            1 & 1 & 1 & 1 & 1 & 0 & 0 & 0 & 0 & 0 & 0 & 0 & 0 & 0 & 0 \\
            1 & 0 & 0 & 0 & 0 & 1 & 1 & 1 & 1 & 0 & 0 & 0 & 0 & 0 & 0 \\
            0 & 1 & 0 & 0 & 0 & 1 & 0 & 0 & 0 & 1 & 1 & 1 & 0 & 0 & 0 \\
            0 & 0 & 1 & 0 & 0 & 0 & 1 & 0 & 0 & 1 & 0 & 0 & 1 & 1 & 0 \\
            0 & 0 & 0 & 1 & 0 & 0 & 0 & 1 & 0 & 0 & 1 & 0 & 1 & 0 & 1 \\
            0 & 0 & 0 & 0 & 1 & 0 & 0 & 0 & 1 & 0 & 0 & 1 & 0 & 1 & 1 \\
        \end{bmatrix}.
\end{equation}

\noindent This general Bell inequality for signaling polytopes was identified by Frenkel and Weiner \cite{Frenkel2015}, who showed that given a channel  $\mbf{P}\in\mc{C}_d^{n \to n'}$, the bounds of this inequality are
    
\begin{equation} \label{Eq:generalized-guessing-game}
    \binom{n'}{k} - \binom{n'-d}{k}\geq \langle \mbf{G}_{\text{K}}^{n',k}, \mbf{P}\rangle \geq  \binom{n'-d}{n'-k}.
\end{equation}

\noindent However, we only focus on the upper  bound $\gamma_{\text{K}}^{n',k,d}$.
We now show conditions for which $(\mbf{G}_{\text{K}}^{n',k},\gamma_{\text{K}}^{n',k,d})\in\mc{F}_d^{n \to n'}$.

\begin{proposition} \label{prop-k-guessing-game}
    The inequality $(\mbf{G}_{\text{K}}^{n',k},\gamma_{\text{K}}^{n',k,d})$ is a facet of $\mc{C}_d^{n \to n'}$ with $n = \binom{n'}{k}$, $n'-2 \geq k \geq 1$, and $d = n'-k$. 

    \begin{proof}
        
        To prove that $(\mbf{G}_{\text{K}}^{n',k}, \gamma_{\text{K}}^{n',k,d})$ is a facet  of $\mc{C}_d^{n \to n'}$ we construct a set of $\dim(\mc{C}_d^{n \to n'})=(n'-1)\binom{n'}{k}$ affinely independent vertices $\{\mbf{V}\in\mc{V}_d^{n \to n'} |\;\gamma_{\text{K}}^{n',k,d} = \langle  \mbf{G}_{\text{K}}^{n',k}, \mbf{V}\rangle\}$.
        Observe that separating the first row from the rest of $\mbf{G}_{\text{K}}^{n',k}$ results in a block matrix of form,
        
        \begin{equation}
            \mbf{G}_{\text{K}}^{n',k} = \left[\begin{array}{c|c}
                \vec{1} & \vec{0}  \\
                \hline
                \mbf{G}_{\text{K}}^{(n'-1),(k-1)} & \mbf{G}_{\text{K}}^{(n'-1),k}\\
            \end{array}\right],\quad e.g. \quad \mbf{G}_{\text{K}}^{5,2} = \left[\begin{array}{cccc|cccccc}
                1 & 1 & 1 & 1 & 0 & 0 & 0 & 0 & 0 & 0 \\
                \hline 
                1 & 0 & 0 & 0 & 1 & 1 & 1 & 0 & 0 & 0 \\
                0 & 1 & 0 & 0 & 1 & 0 & 0 & 1 & 1 & 0 \\
                0 & 0 & 1 & 0 & 0 & 1 & 0 & 1 & 0 & 1 \\
                0 & 0 & 0 & 1 & 0 & 0 & 1 & 0 & 1 & 1 \\
            \end{array}\right] = \left[\begin{array}{c|c}
                \vec{1} & \vec{0}  \\
                \hline
                \mbf{G}_{\text{K}}^{4,1} & \mbf{G}_{\text{K}}^{4,2}\\
            \end{array}\right],
        \end{equation}
        
        \noindent where $\vec{0}$ and $\vec{1}$ are row vectors containing zeros and ones, and we refer to $\mbf{G}_{\text{K}}^{(n'-1),(k-1)}$ and $\mbf{G}_{\text{K}}^{(n'-1),k}$ as left and right $k$-guessing blocks respectively.
        The left and right $k$-guessing blocks suggest a recursive approach to our construction of affinely independent vertices.
        Namely, we construct $\binom{n'}{k}$ vertices by targeting the first row of $\mbf{G}_{\text{K}}^{n',k}$ while Proposition \ref{prop-k-guessing-game} is recursively applied to enumerate the remaining vertices using the left and right $k$-guessing blocks.
        The recursion requires two base cases to be addressed:
        
        \begin{enumerate}
            \item When $d=2$ and  $n'=k + d$, the construction of affinely independent vertices is described in Proposition \ref{prop-k-guessing-facet-d=2}.
            \item When $k=1$, the construction of affinely independent  vertices is described in Proposition \ref{prop-ml-facet}.
        \end{enumerate}
        
        \noindent An iteration of this recursive construction proceeds as follows.

        First, we construct an affinely independent vertex for each of the $\binom{n'}{k}$ elements in the first row of $\mbf{G}_{\text{K}}^{n',k}$.
        For each index $x'_1$ in the $\vec{1}$ block, a vertex $\mbf{V}_1$ is constructed by setting all $V_1(1|x)=1$ where $x\neq x'_1$ and $V_1(y|x'_1)=1$ where $y>1$ is the smallest row index such that $G_{y,x'_1}=1$.
        The remaining rows of $\mbf{V}_1$ are filled to maximize the right $k$-guessing block.
        Then, for each index $x'_0$ in the $\vec{0}$ block, a vertex $\mbf{V}_0$ is constructed by setting $V_0(1|x'_0)=1$ and all $V_0(1|x)=1$ where $G_{1,x} = 1$.
        The remaining $(d-1)$ rows of $\mbf{V}_0$ are filled to maximize the right $k$-guessing block.
        This procedure enumerates $\binom{n'}{k}$ affinely independent vertices.
        
        Then, the remaining $(n'-2)\binom{n'}{k}$ vertices are found by individually targeting the left and right $k$-guessing blocks.
        To construct a vertex $\mbf{V}_L$ using the left block $\mbf{G}_{\text{K}}^{(n'-1)),(k-1)}$, the first row of $\mbf{V}_L$ is not used.
        The left block is then a $(k-1)$-guessing  game with $(n'-1)$ outputs where $d= (n'-1)-(k-1) = n'-k$, hence, Proposition \ref{prop-k-guessing-game} holds and  $(n'-2)\binom{n'-1}{k-1}$ affinely independent vertices are enumerated using the described recursive process.
        Note that for each vertex of form $\mbf{V}_L$, the remaining elements are filled to maximize the right $k$-guessing block $\mbf{G}_{\text{K}}^{(n'-1),k}$.
        Similarly, to construct a vertex $\mbf{V}_R$ using the right block $\mbf{G}_{\text{K}}^{(n'-1),k}$, we set all elements $V_R(1|x)=1$ where $G^{n',k}_{1,x} =1$.
        The remaining $(d-1)$ rows of $\mbf{V}_R$ are filled by  optimizing the $\mbf{G}_{\text{K}}^{(n'-1),k}$ block.
        Since $d=n'-k$ and $(d-1) = (n'-1)-k$, Proposition \ref{prop-k-guessing-game} holds, and recursively applying this procedure constructs $(n'-2)\binom{n'-1}{k}$ vertices of form $\mbf{V}_R$ using the right $k$-guessing block.
        
        Finally, vertices of forms $\mbf{V}_0,$ $\mbf{V}_1$, $\mbf{V}_L$ and $\mbf{V}_R$ are easily verified to be affinely independent.
        Summing these vertices yields $(n'-2)\binom{n'-1}{k-1} + (n'-2)\binom{n'-1}{k} +\binom{n'}{k} = (n'-1)\binom{n'}{k}$ affinely independent vertices, therefore, the  $k$-guessing Bell inequality is proven to be tight when $n'=k+d$.
    \end{proof}
\end{proposition}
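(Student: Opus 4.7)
The plan is to verify the two conditions of Lemma~\ref{lemma-facet-proof}. The upper bound $\langle \mbf{G}_{\text{K}}^{n',k}, \mbf{P}\rangle \leq \binom{n'}{k} - \binom{n'-d}{k}$ for every $\mbf{P}\in\mc{C}_d^{n\to n'}$ is exactly the Frenkel--Weiner bound recorded in Eq.~\eqref{Eq:generalized-guessing-game}, so Condition~\ref{facet-proof-bound-condition} is immediate. The whole task reduces to exhibiting $\dim(\mc{C}_d^{n\to n'}) = (n'-1)\binom{n'}{k}$ affinely independent vertices of $\mc{V}_d^{n\to n'}$ that saturate the inequality.

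My strategy is induction on $n'$, equivalently on $d = n'-k$ for fixed $k\geq 1$, exploiting the block structure of $\mbf{G}_{\text{K}}^{n',k}$ obtained by isolating the first output row. The $n=\binom{n'}{k}$ columns split according to whether the corresponding $k$-subset of $[n']$ contains the index $1$: the $\binom{n'-1}{k-1}$ ``left'' columns carry a $1$ in row $1$ and, with that row deleted, reproduce $\mbf{G}_{\text{K}}^{(n'-1),(k-1)}$, while the $\binom{n'-1}{k}$ ``right'' columns carry a $0$ in row $1$ and reproduce $\mbf{G}_{\text{K}}^{(n'-1),k}$ below. The left sub-problem has parameters $(n'-1,k-1)$ with $d$ preserved and the right sub-problem has parameters $(n'-1,k)$ with $d$ decreased by one, so both lie strictly within the inductive scope. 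Two base cases must be established separately: $k=1$, which coincides with the ML facet treated in Proposition~\ref{prop-ml-facet}, and $d=2$ (equivalently $n'=k+2$), which is handled by Proposition~\ref{prop-k-guessing-facet-d=2}.

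For the inductive step I would assemble the saturating vertex set from three disjoint families. A ``row-$1$'' family of $\binom{n'}{k}$ vertices, one per column of $\mbf{G}_{\text{K}}^{n',k}$, is constructed so that for a left column $x$ the vertex places its $1$ at some fixed row $y>1$ with $G_{y,x}=1$ and routes every other left column to row $1$, then fills the remaining at most $d-1$ active rows greedily to saturate the right block; for a right column $x$ the vertex places $V(1|x)=1$ and routes every left column to row $1$, again saturating the remaining right columns in the $d-1$ leftover active rows. A second family of $(n'-2)\binom{n'-1}{k-1}$ vertices is obtained by invoking the inductive hypothesis on the left block along rows $2,\dots,n'$, padded on the right columns to saturate $\mbf{G}_{\text{K}}^{(n'-1),k}$; a third, symmetric family of $(n'-2)\binom{n'-1}{k}$ vertices is obtained from the right block and padded on the left columns. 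By Pascal's identity the total count is $\binom{n'}{k} + (n'-2)\bigl(\binom{n'-1}{k-1}+\binom{n'-1}{k}\bigr) = (n'-1)\binom{n'}{k}$, matching $\dim(\mc{C}_d^{n\to n'})$, and each vertex uses at most $d$ nonzero rows so the rank constraint defining $\mc{V}_d^{n\to n'}$ is satisfied automatically.

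The main obstacle will be the joint affine independence across the three families. Within each of the second and third families, affine independence is transferred intact from the inductive hypothesis once the padding is chosen consistently. The delicate point is the cross-family decoupling, which I would handle by a ``leading-support'' argument: the row-$1$ family is the only one whose pairwise differences carry mass in row $1$ in a controlled, linearly independent fashion, while differences inside the second family are supported on rows $\{2,\dots,n'\}$ restricted to the left columns and differences inside the third family on the same rows restricted to the right columns. Because these column supports are disjoint, any vanishing affine combination of all $(n'-1)\binom{n'}{k}$ vertices must kill each family's contribution separately, reducing to the three inductive certificates of independence. The bookkeeping of the padding conventions---ensuring that the ``inherited'' left-block vertices really do saturate the right block and vice versa, without accidentally destroying the saturation of the overall $k$-guessing inequality---will be the most error-prone part and deserves explicit verification.
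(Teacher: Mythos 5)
Your proposal follows essentially the same route as the paper's proof: peel off the first output row to split $\mbf{G}_{\text{K}}^{n',k}$ into left $(n'-1,k-1)$ and right $(n'-1,k)$ guessing blocks, recurse with base cases $k=1$ (ML facet, Proposition~\ref{prop-ml-facet}) and $d=2$ (Proposition~\ref{prop-k-guessing-facet-d=2}), assemble a row-$1$ family plus two inherited families, and count via Pascal's identity to reach $(n'-1)\binom{n'}{k}$. The bookkeeping you flag as the error-prone part (ensuring the padding of inherited vertices preserves both saturation and affine independence) is precisely what the paper also leaves at the level of assertion, so your proposal matches the paper in substance and in level of detail.
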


\begin{proposition} \label{prop-k-guessing-facet-d=2}
    The \textit{$k$-guessing game} Bell inequality $(\mbf{G}_{\text{K}}^{n',k},\gamma_{\text{K}}^{n',k,d})$ is a tight Bell inequality of all signaling polytopes $\mc{C}_d^{n \to n'}$ with $n = \binom{n'}{k}$, $d=2$, and $k=n'-2$.    
    
    \begin{proof}
        To prove the tightness we construct a set containing $(n'-1)\binom{n'}{k}$ affinely independent vertices $\{\mbf{V}\in\mc{V}_2^{n \to n'}\; | \; \langle\mbf{G}_{\text{K}}^{n',(n'-2)},\mbf{V}\rangle = \binom{n'}{n'-2} -1 \}.$ 
        To help illustrate this proof, we use the example of $(\mbf{G}_{\text{K}}^{5,3}, \gamma_{\text{K}}^{5,3,2})$ where
            
            \begin{equation}
                \mbf{G}_{\text{K}}^{5,3} = \begin{bmatrix}
                    1 & 1 & 1 & 1 & 1 & 1 & 0 & 0 & 0 & 0 \\
                    1 & 1 & 1 & 0 & 0 & 0 & 1 & 1 & 1 & 0 \\
                    1 & 0 & 0 & 1 & 1 & 0 & 1 & 1 & 0 & 1 \\
                    0 & 1 & 0 & 1 & 0 & 1 & 1 & 0 & 1 & 1 \\
                    0 & 0 & 1 & 0 & 1 & 1 & 0 & 1 & 1 & 1 \\
                \end{bmatrix}.
            \end{equation}
    
            \noindent and $\gamma_{\text{K}}^{5,3,2} = \binom{5}{3}-1$.
            Since $d=2$, we consider vertices $\mbf{V}\in\mc{V}_d^{n \to n'}$ with $\rank(\mbf{V})=2$ where each vertex uses two rows $y$ and $y'$ where $y< y'$.
            In general, each of the $\binom{n'}{2}$ two-row selections from $\mbf{G}_{\text{K}}^{n',(n'-2)}$ have a unique column $x_0$ containing null elements both rows $y$ and $y'$.
            Therefore, for each unique pair $y$ and $y'$, two affinely independent vertices $\mbf{V}_1$ and $\mbf{V}_2$ are constructed by setting $V_1(y|x_0)=1$ and $V_2(y'|x_0)=1$ while the remaining terms are arranged such that all unit elements in row $y$ and the remaining elements in row $y'$ are selected to achieve the optimal score.
            Performing this procedure for the first two rows of $\mbf{G}_{\text{K}}^{5,3}$ ($y=1$ and $y'=2$) constructs the vertices
            
            \begin{equation}
                \mbf{V}_1 = \begin{bmatrix}
                        1 & 1 & 1 & 1 & 1 & 1 & 0 & 0 & 0 & 1 \\
                        0 & 0 & 0 & 0 & 0 & 0 & 1 & 1 & 1 & 0 \\
                        0 & 0 & 0 & 0 & 0 & 0 & 0 & 0 & 0 & 0 \\
                        0 & 0 & 0 & 0 & 0 & 0 & 0 & 0 & 0 & 0 \\
                        0 & 0 & 0 & 0 & 0 & 0 & 0 & 0 & 0 & 0 \\
                \end{bmatrix}, \quad \mbf{V}_2 = \begin{bmatrix}
                        1 & 1 & 1 & 1 & 1 & 1 & 0 & 0 & 0 & 0 \\
                        0 & 0 & 0 & 0 & 0 & 0 & 1 & 1 & 1 & 1 \\
                        0 & 0 & 0 & 0 & 0 & 0 & 0 & 0 & 0 & 0 \\
                        0 & 0 & 0 & 0 & 0 & 0 & 0 & 0 & 0 & 0 \\
                        0 & 0 & 0 & 0 & 0 & 0 & 0 & 0 & 0 & 0 \\
                \end{bmatrix}
            \end{equation}
            
            \noindent where $x_0=10$ in this example.
            Repeating this procedure for each of the $\binom{n'}{2}$ row selections produces, $2\binom{n'}{2} = 2\binom{n'}{k}$ affinely independent vertices,
            one for each null element in $\mbf{G}_{\text{K}}^{n',(n'-2)}$.
           
            The remaining vertices are constructed by selecting a target row $y\in[n'-1]$.
            In the target row, for each $x'$ where $G_{y,x'}=1$ a vertex $\mbf{V}_3$ is constructed by setting $V_3(y|x) = 1$ for all $x\neq x'$ that satisfy $G_{y,x}=1$.
            A secondary row $y'>y$ of $\mbf{V}_3$ is chosen where $y'$ is the smallest index satisfying $G_{y',x'}=1$.
            We then set $V(y'|x')=1$ while the remaining elements of $\mbf{V}_3$ are set to achieve the optimal score.
            For selected rows $y$ and $y'$, the null column at index $x_0$ is set in the target row  as $V_3(y|x_0) = 1$.
            For example, consider $\mbf{G}_{\text{K}}^{5,}$ with the target row as $y=1$ and $x'=4$ we construct the vertex,
            
            \begin{equation}
                \mbf{V}_3 = \begin{bmatrix}
                    1 & 1 & 1 & 0 & 1 & 1 & 0 & 0 & 1 & 0 \\
                    0 & 0 & 0 & 0 & 0 & 0 & 0 & 0 & 0 & 0 \\
                    0 & 0 & 0 & 1 & 0 & 0 & 1 & 1 & 0 & 1 \\
                    0 & 0 & 0 & 0 & 0 & 0 & 0 & 0 & 0 & 0 \\
                    0 & 0 & 0 & 0 & 0 & 0 & 0 & 0 & 0 & 0 \\
                \end{bmatrix}
            \end{equation}
            
            \noindent Note that all secondary row indices $y'\leq y+3$ are required to construct a vertex $\mbf{V}_3$ for each unit element in the target row $y$.
            Let $\Delta y = y'-y$, then $\sum_{\Delta y=1}^3\binom{n'-1-\Delta y}{d+1-\Delta y}$ vertices are constructed for target row $y$.
            For $y=n'-2$ and $y=n'-1$, the sum terminates at $\Delta y = 2$ and $\Delta y=1$ respectively because the vertices are only affinely independent if the secondary row has index $y' > y$.
            Thus, this process produces 
            
            \begin{align}
                \sum_{\Delta y=1}^3(n'-\Delta y)\binom{n'-1-\Delta y}{d+1-\Delta y} = (n'-3)\binom{n'}{d}
            \end{align}
            
            \noindent affinely independent vertices where the identities $\frac{l}{m}\binom{l}{m} = \binom{l-1}{m-1}$ and $\frac{l+1-m}{m}\binom{l}{m} = \binom{l}{m-1}$ are used to convert the binomial coefficients to the form $\binom{n'}{d}=\binom{n'}{k}$.
            Combining the vertices of form $\mbf{V}_1$, $\mbf{V}_2$, and $\mbf{V}_3$ yields a set of $2\binom{n'}{k} + (n'-3)\binom{n'}{k} = (n'-1)\binom{n'}{k}$ affinely independent vertices.
            Therefore, when $d=2$ and  $k=n'-2$, $(\mbf{G}_{\text{K}}^{n',(n'-2)},\binom{n'}{n'-2}-1)$ is a tight Bell inequality of the $\mc{C}_2^{\binom{n'}{k}\to n'}$ signaling polytope.
    \end{proof}
\end{proposition}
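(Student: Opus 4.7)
The plan is to first note that the upper bound $\langle \mbf{G}_{\text{K}}^{n', n'-2}, \mbf{P}\rangle \leq \binom{n'}{2} - 1$ is already available from the general $k$-guessing inequality in Eq.~\eqref{Eq:generalized-guessing-game}, so by Lemma~\ref{lemma-facet-proof} the entire task reduces to exhibiting $\dim(\mc{C}_2^{n\to n'}) = (n'-1)\binom{n'}{2}$ affinely independent vertices of $\mc{V}_2^{n\to n'}$ that achieve equality. The key structural observation is that any rank-$2$ vertex is determined by a choice of two output rows $\{y_1,y_2\}\subset[n']$ together with a function $f:[n]\to\{y_1,y_2\}$ assigning each input to one of those rows. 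Using the bijection between columns of $\mbf{G}_{\text{K}}^{n',n'-2}$ and size-$2$ subsets of $[n']$, each pair $\{y_1,y_2\}$ picks out a unique ``blocked'' column $x_0$ whose two zero entries coincide with $\{y_1,y_2\}$; this column is forced to contribute $0$, while every remaining column $x$ can contribute $1$ by assigning it to whichever of $y_1,y_2$ satisfies $G_{y_i,x}=1$. Hence equality vertices abound for each pair.

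First, I would fix a canonical equality vertex for each pair by, say, routing ties toward the lower-indexed of $\{y_1,y_2\}$, and then record for every pair the two natural equality vertices $\mbf{V}^{y_1y_2}_{+}$ and $\mbf{V}^{y_1y_2}_{-}$ that differ only in whether the blocked column $x_0$ is assigned to $y_1$ or to $y_2$. This produces $2\binom{n'}{2}$ equality vertices. Their affine independence is easy to read off: the differences $\mbf{V}^{y_1y_2}_{+}-\mbf{V}^{y_1y_2}_{-}$ are supported on pairwise disjoint columns (one column per pair), and the $+$ family is itself independent via an analogous column-targeted comparison.

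Next I would construct the missing $(n'-3)\binom{n'}{2}$ vertices by iterating a target row $y\in[n'-1]$ and, for each input $x$ with $G_{y,x}=1$, producing a vertex that occupies the row pair $\{y,y'\}$ for a minimal secondary row $y'>y$ with $G_{y',x}=1$. Since the zero support of any column has size two, only the gaps $\Delta y = y'-y\in\{1,2,3\}$ give legitimate new vertices, and each target row $y$ contributes $\sum_{\Delta y=1}^{3}\binom{n'-1-\Delta y}{d+1-\Delta y}$ candidates (with the natural truncation $\Delta y\leq n'-1-y$ at the bottom rows). The standard binomial identities $\tfrac{\ell}{m}\binom{\ell}{m}=\binom{\ell-1}{m-1}$ and $\tfrac{\ell+1-m}{m}\binom{\ell}{m}=\binom{\ell}{m-1}$ should then collapse the total count to $(n'-3)\binom{n'}{2}$, exactly covering the deficit. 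Affine independence across the whole collection would be verified by an incremental pivot argument: each newly admitted vertex alters a coordinate $(y,x)$ untouched by all previously accepted vertices, so no linear dependence can form.

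The hard part, as usual for tightness results of this kind, will be the bookkeeping rather than any single clever step. Because the recursion used for $d\geq 3$ in Proposition~\ref{prop-k-guessing-game} is unavailable (reducing $d$ leads to the trivial regime $d=1$), the construction must be tailored directly to the $d=2$ geometry. The main obstacle is designing the canonical choices and the ``target row, secondary row'' variations so that (i) every produced vertex still saturates $\binom{n'}{2}-1$, (ii) the three subfamilies indexed by $\Delta y\in\{1,2,3\}$ do not silently overlap, and (iii) the incremental pivot certificate for affine independence can be read off without casework clashes between different target rows. Handling the edge cases $y=n'-2$ and $y=n'-1$, where fewer secondary rows remain available, is where I expect most of the effort to concentrate.
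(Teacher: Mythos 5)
Your proposal takes essentially the same route as the paper: the same partition into $2\binom{n'}{2}$ vertices indexed by row pairs (with the blocked column $x_0$ assigned to either row) plus $(n'-3)\binom{n'}{2}$ vertices built from a target row and a minimal secondary row at gap $\Delta y\in\{1,2,3\}$, the same per-row count $\sum_{\Delta y=1}^3\binom{n'-1-\Delta y}{d+1-\Delta y}$, and the same binomial-identity collapse to $(n'-3)\binom{n'}{2}$. Your affine-independence outline (pivoting on a coordinate untouched by earlier vertices, with edge-case care at $y=n'-2,n'-1$) matches what the paper does informally, so this is the same proof modulo the bookkeeping you correctly flag as the remaining work.
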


\subsection{Maximum Likelihood Facets} \label{Appendix-ml-facets}

In this section, we discuss the conditions for which maximum likelihood games (see main text) are tight Bell inequalities.
The maximum likelihood Bell inequality $(\mbf{G}_{\text{ML}}^{n'}, d)$ is a $(k=1)$-guessing game where $\mbf{G}_{\text{ML}}^{n'} = \mbf{G}_{\text{K}}^{n',1}$.
For simplicity, this section considers unlifted forms of $\mbf{G}_{\text{ML}}^{n'}$ is a $n'\times n'$ doubly stochastic matrix with 0/1 elements such as the $n'\times n'$ identity matrix.
For any vertex $\mbf{V}\in\mc{V}_d^{n \to n'}$,

\begin{align} \label{Eq:ML-appendix}
    \langle\mbf{G}_{\text{ML}}^{n'},\mbf{V}\rangle\leq d,
\end{align}

\noindent is satisfied because $\rank(\mbf{V}) \leq d$ and $\mbf{G}_{\text{ML}}^{n'}$ is doubly stochastic.
By the convexity of $\mc{C}_d^{n \to  n'}$, inequality \eqref{Eq:ML-appendix} must hold for all $\mbf{P}\in\mc{C}_d^{n\to n'}$.
We now discuss the conditions for  which $(\mbf{G}_{\text{ML}}^{n'},d)$ is a tight Bell inequality.

\begin{proposition} \label{prop-ml-facet}

The \textit{maximum likelihood} (ML) Bell inequality $(\mbf{G}_{\text{ML}}^{n'},d)$ is a facet of all signaling polytopes $\mc{C}_d^{n \to n'}$ with $n = n'$ and $n'>d>1$.

\begin{proof}
    To prove that $(\mbf{G}_{\text{ML}}^{n'},d)$ is a tight Bell inequality of $\mc{C}_d^{n \to n'}$ we construct a set of $\dim(\mc{C}_d^{n' \to n'})=n'(n'-1)$ affinely independent vertices $\{\mbf{V}\in\mc{V}_d^{n \to n'}|\; \langle\mbf{G}_{\text{ML}}^{n'},\mbf{P}\rangle = d\}$. 
    Taking $\mbf{G}_{\text{ML}}^{n'}$ to be the $n'\times n'$ identity matrix, a vertex $\mbf{V}$ satisfies $d = \langle \mbf{G}_{\text{ML}}^{n'}, \mbf{V}\rangle$ when $d$ unit elements of $\mbf{V}$ lie along the diagonal.
    In this case, $(n'-d)$ unit elements of $\mbf{V}$ can be freely distributed in the remaining columns of the $d$ selected rows.
    For simplicity, we place all free elements in a single row with index $y\in[n']$ which we refer to as the target row.
    In the target row, we set $V(y|y) =1$  while the off-diagonals, $V(y|x\neq y)$ with $x\in[n']$ contain $(n'-d)$ unit elements and $(d-1)$ null elements.
    Lemma \ref{lemma-aff-ind-enumeration} describes a construction of $(n'-1)$ affinely independent vectors $\{\vec{b}_k\}_{k\in[n'-1]}$ to set as the off-diagonals in the target row.
    Then, for each $x\in[n']$ where $V(y|x\neq y) = 0$, we set $V(x|x)=1$.
    This procedure obtains the upper bound in Eq. \eqref{Eq:ML-appendix} and constructs an affinely independent vertex for each of the $(n'-1)$ binary vectors  in $\{\vec{b}_k\}_{k\in[n'-1]}$.
    For example, targeting row $y=3$ of $\mbf{G}_{\text{ML}}^5$ when $d=3$ yields four vertices,
    
    \begin{equation}
        \mbf{V} \in \left\{\begin{bmatrix}
            1 & 0 & 0 & 0 & 0 \\
            0 & 1 & 0 & 0 & 0 \\
            0 & 0 & 1 & 1 & 1 \\
            0 & 0 & 0 & 0 & 0 \\
            0 & 0 & 0 & 0 & 0 \\
        \end{bmatrix},\quad\begin{bmatrix}
            1 & 0 & 0 & 0 & 0 \\
            0 & 0 & 0 & 0 & 0 \\
            0 & 1 & 1 & 0 & 1 \\
            0 & 0 & 0 & 1 & 0 \\
            0 & 0 & 0 & 0 & 0 \\
        \end{bmatrix},\quad
        \begin{bmatrix}
            1 & 0 & 0 & 0 & 0 \\
            0 & 0 & 0 & 0 & 0 \\
            0 & 1 & 1 & 1 & 0 \\
            0 & 0 & 0 & 0 & 0 \\
            0 & 0 & 0 & 0 & 1 \\
        \end{bmatrix}, \quad
        \begin{bmatrix}
            0 & 0 & 0 & 0 & 0 \\
            0 & 1 & 0 & 0 & 0 \\
            1 & 0 & 1 & 0 & 1 \\
            0 & 0 & 0 & 1 & 0 \\
            0 & 0 & 0 & 0 & 0 \\
        \end{bmatrix}\right\}.
    \end{equation}
    
    \noindent Repeating the procedure for each $y\in[n']$ results in $n'(n'-1)$ affinely independent vertices.
    The vertices enumerated for each target row $y$ are affinely independent from all other target rows because the free unit elements are only allowed in the target row.
    As a final note, this procedure does not work in the case where $d=1$ because there are only $n'$ vertices in $\mc{V}_d^{n \to n'}$ or the case where $d=n'$ because only one vertex $\mbf{V}=\mbf{G}_{\text{ML}}^{n'}$ maximizes Eq. \eqref{Eq:ML-appendix}.
    Since $n'(n'-1) = \dim(\mc{C}_d^{n'\to n'})$ affinely independent vertices are constructed, $(\mbf{G}_{\text{ML}}^{n'},d)$ is proven to be a tight Bell inequality of all signaling polytopes with $n'> d > 1$.
\end{proof}
\end{proposition}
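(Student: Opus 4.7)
The plan is to verify the two conditions of Lemma \ref{lemma-facet-proof}. First I would exploit the permutation symmetry discussed in Appendix \ref{Appendix-permutation-symmetry}: since $\mbf{G}_{\text{ML}}^{n'}$ is an $n' \times n'$ doubly stochastic 0/1 matrix, it is a permutation matrix, and permutations of rows/columns map facets of $\mc{C}_d^{n'\to n'}$ to facets. Thus without loss of generality I would take $\mbf{G}_{\text{ML}}^{n'} = \mbb{I}_{n'}$, so that the inequality asserts $\sum_{x=1}^{n'} P(x|x) \leq d$ for all $\mbf{P} \in \mc{C}_d^{n' \to n'}$.

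For the validity condition, it suffices to check vertices $\mbf{V} \in \mc{V}_{d}^{n'\to n'}$. Such a vertex has 0/1 entries, exactly one 1 per column, and at most $d$ nonzero rows. The diagonal sum $\sum_x V(x|x)$ counts the columns whose unique 1 sits on the diagonal, which is bounded by the number of distinct nonzero rows, hence by $d$; convexity extends the bound to all of $\mc{C}_d^{n'\to n'}$.

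The bulk of the work is the dimensional condition: I must exhibit $n'(n'-1)$ affinely independent vertices saturating $\sum_x V(x|x) = d$. The natural construction parameterizes vertices by a \emph{target row} $y \in [n']$ in which the $(n'-d)$ ``off-diagonal slack'' ones will live. Fixing $y$, I force $V(y|y) = 1$ (contributing to the diagonal count), then partition the remaining $(n'-1)$ columns $x \neq y$ into a $(d-1)$-subset on which $V(x|x) = 1$ and an $(n'-d)$-subset on which $V(y|x) = 1$. Any such partition gives a rank-$d$ vertex attaining the bound. Applying Lemma \ref{lemma-aff-ind-enumeration} to the $(n'-1)$-dimensional binary indicator of which off-diagonal columns land in row $y$ produces $(n'-1)$ affinely independent vertices per target row, for a total of $n'(n'-1) = \dim(\mc{C}_d^{n'\to n'})$.

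The main obstacle will be showing global affine independence across target rows, not just within a single block. I would argue from the structural asymmetry that within the $y$-block, the vertex has the distinctive feature that off-diagonal ones occur \emph{only} in row $y$, while every other nonzero row of the vertex is a standard basis row supported on its own diagonal column. This lets any hypothesized affine relation be projected row-by-row and decomposed across the target-row blocks, reducing to the already-verified intra-block independence. Finally, I would note the edge cases: at $d=1$ there are only $n'$ vertices in total and at $d=n'$ only $\mbb{I}_{n'}$ itself saturates the bound, so the construction genuinely requires the hypothesis $n' > d > 1$.
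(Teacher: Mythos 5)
Your proposal follows the paper's proof essentially line-for-line: you reduce to $\mathbf{G}_{\text{ML}}^{n'} = \mathbb{I}_{n'}$, verify the inequality on vertices as before, use the same target-row parametrization of optimal vertices (with $V(y|y)=1$, a $(d-1)$-subset of diagonal ones, and an $(n'-d)$-subset of off-diagonal ones in row $y$), invoke Lemma \ref{lemma-aff-ind-enumeration} for the $(n'-1)$ intra-block vectors, argue cross-block independence from the fact that only the target row carries off-diagonal support, and close with the same edge-case discussion for $d=1$ and $d=n'$. This is the paper's argument.
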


\subsection{Ambiguous Guessing Facets} \label{Appendix-ambiguous-guessing-facets}

In this section we discuss the conditions for which ambiguous guessing games (see main text) are tight Bell inequalities.
Consider the \textit{ambiguous guessing} Bell inequality $(\mbf{G}_?^{n',d}, \gamma_?^{n',d})$ where $\mbf{G}_{?}^{n',d}\in\mbb{R}^{n'\times (n'-1)}$,

\begin{equation}\label{Eq:ambiguous-game}
    \mbf{G}_?^{n',d} = \sum_{x\in[n'-1]} (n'-d)\op{x}{x} + \op{n'}{x},\quad \text{and} \quad \gamma^{n',d}_? = d(n'-d).
\end{equation}

\noindent This Bell inequality is  best considered as a combination between a 1-guessing game for which  a correct answer provides $(n'-d)$ points extended with an ambiguous row for  which 1 point is scored for choosing the ambiguous output. 
For example when $n' = 6$ and $d=2$ we have

\begin{equation}
    \mbf{G}_?^{n',d} = \left[\begin{array}{c}
        (n'-d)\mbf{G}_{\text{ML}}^{(n'-1)} \\
        \vec{1}
    \end{array} \right], \quad \text{e.g.} \quad \mbf{G}_?^{6,2} = \begin{bmatrix}
        4 & 0 & 0 & 0 & 0 \\
        0 & 4 & 0 & 0 & 0 \\
        0 & 0 & 4 & 0 & 0 \\
        0 & 0 & 0 & 4 & 0 \\
        0 & 0 & 0 & 0 & 4 \\
        \hline
        1 & 1 & 1 & 1 & 1 \\
    \end{bmatrix},
\end{equation}

\noindent where we refer to rows of the $\mbf{G}_{\text{ML}}^{(n'-1)}$ block as \textit{guessing rows} and $\vec{1}$ is a row vector of ones which we refer to as the \textit{ambiguous row}.
Note that $\mbf{G}_?^{n',d}$ is a special case of the ambiguous guessing game $\mbf{G}^{(k)}$ (see main text), and without loss of generality, we express $\mbf{G}_?^{n',d}$ in a normal form where all elements $G_{y,x}$ are non-negative integers.
For any vertex $\mbf{V}\in\mc{V}_d^{n \to n'}$, the inequality

\begin{equation}\label{Eq:ambiguous-bell-inequality}
    \langle \mbf{G}_?^{n',d},  \mbf{V}\rangle \leq  d(n'-d)
\end{equation}

\noindent is satisfied.
We now prove the conditions for which inequality $(\mbf{G}_?^{n',d}, \gamma_?^{n',d})$ is a facet of $\mc{C}_d^{n\to n'}$. 

\begin{proposition} \label{prop-ambiguous-facet-tight}
    The inequality $(\mbf{G}_?^{n',d}, \gamma_?^{n',d})$ is a facet of $\mc{C}_d^{n \to n'}$ when  $n = n'-1$ and $n'-2 \geq d \geq 2$.
    
    \begin{proof}
        To prove that $(\mbf{G}_{?}^{n',d},\gamma_{?}^{n',d})$ is a facet of $\mc{C}_d^{n \to n'}$ we construct a set of $\dim(\mc{C}_d^{(n'-1) \to n'})=(n'-1)^2$ affinely independent vertices $\{\mbf{V}\in\mc{V}_d^{(n'-1) \to n'}|\; d(n'-d) = \langle\mbf{G}_?^{n',d}, \mbf{V}\rangle\}$.
        Using Proposition \ref{Appendix-ml-facets} we can easily enumerate $(n'-1)(n'-2)$ affinely independent vertices that optimize the $\mbf{G}_{\text{ML}}^{(n'-1)}$ block.
        The remaining vertices are constructed using the ambiguous row and $(d-1)$ guessing rows.
        In these vertices, the ambiguous row has $(d-1)$ null elements and $(n'-d)$ unit elements, hence, Lemma \ref{lemma-aff-ind-enumeration} can be used to $(n'-1)$ affinely independent arrangements of the ambiguous row.
        For each of the $(n'-1)$ arrangements, a vertex $\mbf{V}_?$ is constructed by setting each $V_?(x|x) = 1$ where $x\in[n'-1]$ and $V_?(n'|x) = 0$.
        Combining the $(n'-1)(n'-2)$ vertices from the $\mbf{G}_{\text{ML}}^{(n'-1)}$ block and the $(n'-1)$ vertices from the ambiguous row, a total of $(n'-1)^2$ affinely independent vertices are found.
        Therefore, $(\mbf{G}_?^{n',d}, \gamma_?^{n',d})$ is a tight Bell inequality of $\mc{C}_d^{(n'-1) \to n'}$.
        The upper bound $n'-2\geq d$ follows from the fact that if $d \geq (n'-1)$, then no optimal vertices would ever use the ambiguous row resulting in an insufficient number of vertices to justify the facet.
    \end{proof}
\end{proposition}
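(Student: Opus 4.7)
The plan is to verify the two conditions of Lemma~\ref{lemma-facet-proof}. The containment $\mc{C}_d^{(n'-1)\to n'}\subset\mc{C}(\mbf{G}_?^{n',d}, d(n'-d))$ comes for free: after dividing $\mbf{G}_?^{n',d}$ by $(n'-d)$, it becomes an ambiguous-game matrix $\mbf{G}^{n,n'}_{k,d}$ with $n=n'-1$ and $k=n'-1$, since its single ambiguous row then has entries $1/((n'-1)-d+1)=1/(n'-d)$. Equation~\eqref{Eq:Ambiguous} therefore gives $\langle \mbf{G}_?^{n',d},\mbf{P}\rangle\leq d(n'-d)$ for every $\mbf{P}\in\mc{C}_d^{(n'-1)\to n'}$.

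The substantive task is to exhibit $\dim(\mc{C}_d^{(n'-1)\to n'})=(n'-1)^2$ affinely independent saturating vertices in $\mc{V}_d^{(n'-1)\to n'}$. My plan is to partition such vertices into two natural groups according to whether they place any mass in the ambiguous (last) row. For vertices with $V(n'|x)=0$ for every $x$, saturation of $\mbf{G}_?^{n',d}$ is equivalent, on the upper $(n'-1)\times(n'-1)$ block, to saturation of the ML inequality $(\mbf{G}_{\text{ML}}^{n'-1},d)$ for the smaller signaling polytope $\mc{C}_d^{(n'-1)\to(n'-1)}$. This is precisely where the hypothesis $n'-2\geq d\geq 2$ is used, since it guarantees $n'-1>d>1$ and so Proposition~\ref{prop-ml-facet} is applicable; it supplies $(n'-1)(n'-2)$ affinely independent first-group vertices, each trivially extended by a zero final row.

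For the second group I would activate the ambiguous row by routing exactly $n'-d$ of the $n'-1$ input columns to output $n'$ while sending the remaining $d-1$ columns to their diagonal output. A quick computation of the score, $(n'-d)(d-1)+(n'-d)\cdot 1=d(n'-d)$, confirms saturation, and the rank of such a vertex is $d$ as required. The choice of which $n'-d$ input columns route to output $n'$ corresponds to a binary vector of length $n'-1$ with $n'-d$ ones; Lemma~\ref{lemma-aff-ind-enumeration} produces $(n'-1)$ affinely independent such vectors, yielding $(n'-1)$ saturating vertices of the second type. Adding the two groups gives $(n'-1)(n'-2)+(n'-1)=(n'-1)^2$ vertices in total, matching the dimension.

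The main obstacle I anticipate is establishing affine independence of the combined list. My plan is to exploit the structural dichotomy: first-group vertices have an identically zero final row, while second-group vertices have a nonzero final row. Given any alleged affine dependence, I would first project onto the final row to isolate the second-group coefficients; column-stochasticity of the second-group vertices forces the sum of these coefficients to vanish, after which affine independence of the generating binary vectors (from Lemma~\ref{lemma-aff-ind-enumeration}) kills them all, leaving only a first-group affine dependence, which is ruled out by Proposition~\ref{prop-ml-facet}.
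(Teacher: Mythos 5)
Your proposal is correct and follows essentially the same route as the paper's proof: the first group of $(n'-1)(n'-2)$ saturating vertices comes from Proposition~\ref{prop-ml-facet} applied to the $\mbf{G}_{\text{ML}}^{(n'-1)}$ block (zero final row), and the second group of $(n'-1)$ vertices routes $n'-d$ columns to the ambiguous output using Lemma~\ref{lemma-aff-ind-enumeration}, exactly as in the paper. Your treatment is somewhat more explicit than the paper's, supplying the score computation $(d-1)(n'-d)+(n'-d)=d(n'-d)$ and a clean affine-independence argument (project to the last row to isolate second-group coefficients, use that each second-group last row sums to $n'-d$ to force their affine sum to vanish, then invoke the independence of Lemma~\ref{lemma-aff-ind-enumeration} and Proposition~\ref{prop-ml-facet} separately), whereas the paper asserts these points without elaboration.
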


\subsubsection{Rescalings of Ambiguous Guessing facets}

An ambiguous guessing facet $(\mbf{G}_?^{n',d}, \gamma_?^{n',d})$ as defined in Proposition \ref{prop-ambiguous-facet-tight} can be rescaled to $\mbf{G}_?^{\prime 5,2}\in\mbb{R}^{n'\times (n+1)}$ by taking a guessing row $y$ where $G_{y,x}=(n'-d)$ distributing the value between between two columns such that $G'_{y,x}=1$ and $G'_{y,x'} = (n'-d)-1$ where $x'=n'$ is a new column.
This rescaling is a non-trivial input lifting rule.
The bound of the input-lifted facet is the same as the unlifted version.
For example, when $n' = 5$ and $d=2$, the $\mbf{G}_?^{5,2}$ is rescaled along the 4th row as,
    
    \begin{equation}
        \mbf{G}_?^{5,2} = \begin{bmatrix}
            3 & 0 & 0 & 0 \\
            0 & 3 & 0 & 0 \\
            0 & 0 & 3 & 0 \\
            0 & 0 & 0 & 3 \\
            1 & 1 & 1 & 1 \\
        \end{bmatrix} \rightarrow \mbf{G}_?^{\prime 5,2} = \begin{bmatrix}
            3 & 0 & 0 & 0 & 0 \\
            0 & 3 & 0 & 0 & 0 \\
            0 & 0 & 3 & 0 & 0 \\
            0 & 0 & 0 & 1 & 2 \\
            1 & 1 & 1 & 1 & 0 \\
        \end{bmatrix}.
    \end{equation}
    
\noindent This rescaling input lifting is a general trend observed in our computed signaling polytope facets \cite{SignalingDimension.jl}, however, it is not clear how broadly this lifting rule applies or generalizes.

\subsection{Anti-Guessing Facets} \label{appendix-anti-guessing-facets}

Another special case of the $k$-guessing game is the \textit{anti-guessing game} Bell inequality $(\mbf{G}_{\text{A}}^{n'},n')$ where $\mbf{G}_{\text{A}}^{n'} = \mbf{G}_{\text{K}}^{n',(n'-1)}$.
For any channel $\mbf{P}\in\mc{P}^{n \to n'}$ with $n=n'$ The anti-guessing Bell inequality $\langle \mbf{G}_{\text{A}}^{n'}, \mbf{P}\rangle \leq n'$ is satisfied, therefore anti-guessing games are not very useful for witnessing signaling dimension.
That said, the anti-guessing game is significant because it can be combined with a maximum likelihood game in block form to construct a facet of the $d=(n'-2)$ signaling polytope $\mc{C}_{(n'-2)}^{n \to n'}$.
We denote these anti-guessing facets by $\mbf{G}_{\text{A}}^{\varepsilon,m'}$ where the facet is constructed as

\begin{equation}
    \mbf{G}_{\text{A}}^{\varepsilon,m'} = \left[\begin{array}{c|c}
            \mbf{G}_{\text{A}}^{\varepsilon} & \hat{0} \\
            \hline
            \hat{0} & \mbf{G}_{\text{ML}}^{m'}
    \end{array}\right], \quad e.g. \quad     \mbf{G}_{\text{A}}^{4,2} = \left[\begin{array}{cccc|cc}
        1 & 1 & 1 & 0 & 0 & 0 \\
        1 & 1 & 0 & 1 & 0 & 0 \\
        1 & 0 & 1 & 1 & 0 & 0 \\
        0 & 1 & 1 & 1 & 0 & 0 \\
        \hline
        0 & 0 & 0 & 0 & 1 & 0 \\
        0 & 0 & 0 & 0 & 0 & 1 \\
    \end{array}\right]
\end{equation}

\noindent where $\mbf{G}_{\text{A}}^{\varepsilon,m'}\in\mbb{R}^{n'\times n'}$, $n' = \varepsilon + m'$, and $\hat{0}$ is a matrix block of zeros.
For channel $\mbf{P}\in\mc{C}_d^{n' \to n'}$,

\begin{equation} \label{Eq:anti-guessing-inequality}
    \langle \mbf{G}_{\text{A}}^{\varepsilon,m'},\mbf{P}\rangle \leq \varepsilon + d - 2 = \gamma_{\text{A}}^{\varepsilon,d}.
\end{equation}

\noindent This upper bound follows from the fact that no more than two rows are required to score $\varepsilon$ in the $\mbf{G}_{\text{A}}^{\varepsilon}$ block and the remaining $d-2$ rows  score one point each against the $\mbf{G}_{\text{ML}}^{m'}$ block.

\begin{proposition} \label{prop-antidistinguishability-famliy}
    The inequality $(\mbf{G}_{\text{A}}^{\varepsilon,m'}, \gamma_{\text{A}}^{\varepsilon,d})$ is a facet of $\mc{C}_d^{n \to n'}$ where $n=n'$, $n'-2 \geq d \geq 2$, and $n'-d+1 \geq \varepsilon \geq 3$.
    
    \begin{proof}
        To prove the tightness of the anti-guessing Bell inequality we show a row-by-row construction of $\dim(\mc{C}_d^{n\to n'})=n'(n'-1)$ affinely independent vertices $\{\mbf{V}\in\mc{V}_d^{n \to n'}|\langle\mbf{G}_{\text{A}}^{\varepsilon,m'}, \mbf{V}\rangle = \gamma_{\text{A}}^{\varepsilon,d}\}$.
        For convenience, we refer to the first $\varepsilon$ rows of $\mbf{G}_{\text{A}}^{\varepsilon,m'}$ as \textit{anti-guessing} rows and the remaining $m'$ rows as \textit{guessing} rows.
        We treat anti-guessing  and guessing rows individually because each admits its own vertex construction.
        To help illustrate this proof, we draw upon the example where $\varepsilon=m'=d=3$,
        
        \begin{equation}
            \mbf{G}_{\text{A}}^{3,3} = \left[\begin{array}{ccc|ccc}
                    1 & 1 & 0 & 0 & 0 & 0 \\
                    1 & 0 & 1 & 0 & 0 & 0 \\
                    0 & 1 & 1 & 0 & 0 & 0 \\
                    \hline
                    0 & 0 & 0 & 1 & 0 & 0 \\
                    0 & 0 & 0 & 0 & 1 & 0 \\
                    0 & 0 & 0 & 0 & 0 & 1 \\
            \end{array}\right] \text{and} \quad \gamma^{3,3} = 4.
        \end{equation}
        
        For a target anti-guessing row $y\in[1,\varepsilon]$ we construct $(n'-1)$ vertices where $(\varepsilon-1)$ vertices are constructed using the $\mbf{G}_A^{\varepsilon}$ block and $m'$ vertices are constructed using the $\hat{0}$ block in the top right.
        Note that a vertex achieves the upper bound $\gamma_{\text{A}}^{\varepsilon,d}$ only if two or less anti-guessing rows are used.
        A vertex $\mbf{V}_{\text{A}}$ is constructed using the $\mbf{G}_{\text{A}}^{\varepsilon,m'}$ block by setting $V_{\text{A}}(y|x) = 1$ for all $x$ that satisfy $G^{\varepsilon,m'}_{y,x}=1$ and selecting a secondary row $y'\neq y$ with $y'\in[1,\varepsilon]$ and setting $V_{\text{A}}(y'|x') = 1$ where $x'$ is the index of the null element in the target row $G^{\varepsilon,m'}_{y,x'} = 0$.
        All remaining elements of $\mbf{V}_{\text{A}}$ are set so that the first $(d-2)$ diagonal elements of the $\mbf{G}_{\text{ML}}^{m'}$ block are selected and any remaining terms are set as unit elements in the target row.
        An affinely independent vertex is constructed for each of the $(\varepsilon-1)$ choices of secondary row $y'$.
        For example, when targeting row $y=1$  we enumerate two vertices
        
        \begin{equation}
            \mbf{V}_{\text{A}}\in\left\{ \left[\begin{array}{ccc|ccc}
                1 & 1 & 0 & 0 & 1 & 1 \\
                0 & 0 & 1 & 0 & 0 & 0 \\
                0 & 0 & 0 & 0 & 0 & 0 \\
                \hline
                0 & 0 & 0 & 1 & 0 & 0 \\
                0 & 0 & 0 & 0 & 0 & 0 \\
                0 & 0 & 0 & 0 & 0 & 0 \\
            \end{array}\right],  \left[\begin{array}{ccc|ccc}
                1 & 1 & 0 & 0 & 1 & 1 \\
                0 & 0 & 0 & 0 & 0 & 0 \\
                0 & 0 & 1 & 0 & 0 & 0 \\
                \hline
                0 & 0 & 0 & 1 & 0 & 0 \\
                0 & 0 & 0 & 0 & 0 & 0 \\
                0 & 0 & 0 & 0 & 0 & 0 \\
            \end{array}\right]\right\}.
        \end{equation}
        
        For a target anti-guessing row $y$, an additional $m'$ vertices with form $\mbf{V}_{\text{A},0}$ are constructed using the $\hat{0}$ block in the top right.
        If $m'>(d-1)$, we set the target row as $V_{\text{A},0}(y|x) =1$ where $x\in[1,\varepsilon]$.
        The remaining $(d-1)$ rows are then used to maximize the $\mbf{G}_{\text{ML}}^{m'}$ block.
        Using Lemma \ref{lemma-aff-ind-enumeration} a set of $m'$ affinely independent vectors $\{\vec{b}_k\}_{k=1}^{m'}\}$ with $(d-1)$ null elements and $(m'-d+1)$ unit elements can be constructed and used in the $\hat{0}$ block of $\mbf{V}_{\text{A},0}$ by setting $V_{\text{A},0}(y|[\varepsilon+1,n']) = \vec{b}_k$.
        All remaining null elements in the target row of $\mbf{V}_{\text{A},0}$ are then set along the diagonal of the $\mbf{G}_{\text{ML}}^{m'}$ block.
        Since there are $m'$ choices of $\vec{b}_k$, that many affinely independent vertices can be constructed.
        For example, when targeting row $y=1$ we enumerate 3 vertices,
        
        \begin{equation}
            \mbf{V}_{\text{A},0}\in\left\{ \left[\begin{array}{ccc|ccc}
                1 & 1 & 1 & 0 & 0 & 1 \\
                0 & 0 & 0 & 0 & 0 & 0 \\
                0 & 0 & 0 & 0 & 0 & 0 \\
                \hline
                0 & 0 & 0 & 1 & 0 & 0 \\
                0 & 0 & 0 & 0 & 1 & 0 \\
                0 & 0 & 0 & 0 & 0 & 0 \\
            \end{array}\right],  \left[\begin{array}{ccc|ccc}
                1 & 1 & 1 & 0 & 1 & 0 \\
                0 & 0 & 0 & 0 & 0 & 0 \\
                0 & 0 & 0 & 0 & 0 & 0 \\
                \hline
                0 & 0 & 0 & 1 & 0 & 0 \\
                0 & 0 & 0 & 0 & 0 & 0 \\
                0 & 0 & 0 & 0 & 0 & 1 \\
            \end{array}\right], \left[\begin{array}{ccc|ccc}
                1 & 1 & 1 & 1 & 0 & 0 \\
                0 & 0 & 0 & 0 & 0 & 0 \\
                0 & 0 & 0 & 0 & 0 & 0 \\
                \hline
                0 & 0 & 0 & 0 & 0 & 0 \\
                0 & 0 & 0 & 0 & 1 & 0 \\
                0 & 0 & 0 & 0 & 0 & 1 \\
            \end{array}\right]\right\}.
        \end{equation}
        
        \noindent If $m'=(d-1)$, a secondary anti-guessing row $y'$ is selected where the anti-guessing rows are set as $V_{\text{A},0}(y|x) =1$ and $V_{\text{A},0}(y'|x')$ where $x,x'\in[1,\varepsilon]$ and $G_{y,x}^{\varepsilon,m'} =1$ and $G_{y,x'}^{\varepsilon,m'} =0$.
        The remainder of the procedure is the same as the $m'>(d-1)$ case.
        Note that in the $m'=(d-1)$ case one of the $\mbf{V}_{\text{A},0}$ vertices is redundant of a $\mbf{V}_{\text{A}}$ vertex.
        To reconcile this conflict another vertex must be added which maximizes $\mbf{G}_{\text{ML}}^{m'}$ with $V(y|x)=1$ for all $x\in[1,\varepsilon]$ and $V(x'|x') = 1$ for all $x'\in[\varepsilon+1,n']$. 
        By this procedure $(\varepsilon-1)+m'=(n'-1)$ affinely independent vertices are constructed for each target row $y\in[1,\varepsilon]$.
        Thus, $\varepsilon(n'-1)$ affinely independent vertices are constructed for the anti-guessing rows of $\mbf{G}_{\text{A}}^{\varepsilon,m'}$.
        
        For a target guessing row $y\in[\varepsilon+1,n']$ we construct $(n'-1)$ vertices where $\varepsilon$ are constructed using the $\hat{0}$ block in the lower left and $(m'-1)$ vertices using the $\mbf{G}_{\text{ML}}^{m'}$ block.
        Starting with the lower left $\hat{0}$ block we construct a vertex $\mbf{V}_{\text{ML},0}$ for each $x\in[1,\varepsilon]$ by setting $V_{\text{ML},0}(y|x) =1$ and $V_{\text{ML},0}(y|y)=1$.
        Of the remaining $(d-1)$ rows one is used to maximize the $\mbf{G}_{\text{A}}^{\varepsilon}$ block and $(d-2)$ rows maximize the $\mbf{G}_{\text{ML}}^{m'}$ block.
        Any unspecified unit terms of $\mbf{V}_{\text{ML},0}$ are set in the target row $y$.
        Since there are $\varepsilon$ values of $x$ to consider, this procedure produces $\varepsilon$ affinely independent vertices. For example, when targeting row $y=4$ we enumerate 3 vertices,
        
        \begin{equation}
            \mbf{V}_{\text{ML},0}\in\left\{ \left[\begin{array}{ccc|ccc}
                0 & 0 & 0 & 0 & 0 & 0 \\
                0 & 0 & 0 & 0 & 0 & 0 \\
                0 & 1 & 1 & 0 & 0 & 0 \\
                \hline
                1 & 0 & 0 & 1 & 0 & 1 \\
                0 & 0 & 0 & 0 & 1 & 0 \\
                0 & 0 & 0 & 0 & 0 & 0 \\
            \end{array}\right],  \left[\begin{array}{ccc|ccc}
                0 & 0 & 0 & 0 & 0 & 0 \\
                1 & 0 & 1 & 0 & 0 & 0 \\
                0 & 0 & 0 & 0 & 0 & 0 \\
                \hline
                0 & 1 & 0 & 1 & 0 & 1 \\
                0 & 0 & 0 & 0 & 1 & 0 \\
                0 & 0 & 0 & 0 & 0 & 0 \\
            \end{array}\right], \left[\begin{array}{ccc|ccc}
                1 & 1 & 0 & 0 & 0 & 0 \\
                0 & 0 & 0 & 0 & 0 & 0 \\
                0 & 0 & 0 & 0 & 0 & 0 \\
                \hline
                0 & 0 & 1 & 1 & 0 & 1 \\
                0 & 0 & 0 & 0 & 1 & 0 \\
                0 & 0 & 0 & 0 & 0 & 0 \\
            \end{array}\right]\right\}.
        \end{equation}
        
        Next, we use the $\mbf{G}_{\text{ML}}^{m'}$ block to a vertex $\mbf{V}_{\text{ML}}$.
        If $m'>(d-1)$, then we set $V_{\text{ML}}(1|x) = 1$ for all $x\in[1,\varepsilon]$ and use the procedure in Proposition \ref{prop-ml-facet} to enumerate $(m'-1)$ affinely independent vertices that optimize the $\mbf{G}_{\text{ML}}^{m'}$ block in the target row.
        If $m'=(d-1)$, then two anti-guessing rows are selected to maximize the $\mbf{G}_{\text{A}}^{\varepsilon}$ block while the procedure in Proposition \ref{prop-ml-facet} is used for the remaining $(d-2)$ rows are used to construct $(m'-1)$ affinely independent vertices that optimize the $\mbf{G}_{\text{ML}}^{m'}$ block in the target row.
        For example, when targetinng row $y=4$ we enumerate 2 vertices,
        
        \begin{equation}
            \mbf{V}_{\text{ML}}\in\left\{ \left[\begin{array}{ccc|ccc}
                1 & 1 & 1 & 0 & 0 & 0 \\
                0 & 0 & 0 & 0 & 0 & 0 \\
                0 & 0 & 0 & 0 & 0 & 0 \\
                \hline
                0 & 0 & 0 & 1 & 0 & 1 \\
                0 & 0 & 0 & 0 & 1 & 0 \\
                0 & 0 & 0 & 0 & 0 & 0 \\
            \end{array}\right],  \left[\begin{array}{ccc|ccc}
                1 & 1 & 1 & 0 & 0 & 0 \\
                0 & 0 & 0 & 0 & 0 & 0 \\
                0 & 0 & 0 & 0 & 0 & 0 \\
                \hline
                0 & 0 & 0 & 1 & 1 & 0 \\
                0 & 0 & 0 & 0 & 0 & 0 \\
                0 & 0 & 0 & 0 & 0 & 1 \\
            \end{array}\right]\right\}.
        \end{equation}
        
        \noindent Each guessing row produces $\varepsilon + (m'-1) = (n'-1)$ affinely independent vertices, thus in total, we have $m'(n'-1)$ vertices enumerated for the guessing rows. 
        
        Combining the procedures for the guessing and anti-guessing rows, we construct a total of $\varepsilon(n'-1) + m'(n'-1) = n'(n'-1)$ affinely independent vertices.
        Therefore, we prove that  $(\mbf{G}_{\text{A}}^{\varepsilon,m'}, \gamma_{\text{A}}^{\varepsilon,d})$ is a tight Bell inequality.
        We now address the bounds on $d$ and $\varepsilon$.
        The lower bound $\varepsilon\geq 3$ follows from the fact that $\mbf{G}_{\text{A}}^{2} = \mbf{G}_{\text{ML}}^2$ meaning the anti-guessing game is indistinguishable from the maximum likelihood game.
        The upper bound $n'-d+1 \geq \varepsilon$ follows from the fact that $m'\geq (d-1)$ must be satisfied or $n'(n'-1)$ affinely independent vertices cannot be found because the entire diagonal of the $\mbf{G}_{\text{ML}}^{m'}$ block must be used by every vertex to satisfy $\langle\mbf{G}_{\text{A}}^{\varepsilon,m'},\mbf{V}\rangle = \varepsilon + d -2$.
        The upper bound $n'-2\geq d$ results from the lower bound on $\varepsilon$ and the fact that $d$ cannot be so large the $n'-d+1 < 3$.
    \end{proof}
\end{proposition}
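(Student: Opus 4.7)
The plan is to invoke Lemma \ref{lemma-facet-proof}. Validity of $\langle\mbf{G}_{\text{A}}^{\varepsilon,m'},\mbf{P}\rangle \leq \varepsilon+d-2$ on $\mc{C}_d^{n\to n'}$ has already been established via Eq. \eqref{Eq:anti-guessing-inequality}, so only the dimension condition remains: I must exhibit $\dim(\mc{C}_d^{n'\to n'}) = n'(n'-1)$ affinely independent optimal vertices in $\mc{V}_d^{n'\to n'}$.

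First I would characterize the optimal $0/1$ rank-$\leq d$ vertices. Writing $\mc{R}_A\subseteq[1,\varepsilon]$ and $\mc{R}_M\subseteq[\varepsilon+1,n']$ for the row supports on the two diagonal blocks, one has $|\mc{R}_A|+|\mc{R}_M|\leq d$. The $\mbf{G}_{\text{A}}^{\varepsilon}$ block contributes at most $\varepsilon$ when $|\mc{R}_A|\geq 2$ and at most $\varepsilon-1$ when $|\mc{R}_A|=1$, while $\mbf{G}_{\text{ML}}^{m'}$ contributes at most $|\mc{R}_M|$. Combining these, optimality forces $|\mc{R}_A|\in\{1,2\}$: either a single AG row contributing $\varepsilon-1$ is paired with $d-1$ ML rows, or two AG rows contributing $\varepsilon$ are paired with $d-2$ ML rows. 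Both cases saturate $\varepsilon+d-2$ and thus yield tight vertices.

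The construction then mirrors the row-by-row strategy used in Propositions \ref{prop-ml-facet} and \ref{prop-ambiguous-facet-tight}: for each target row $y\in[n']$ I produce $(n'-1)$ tight vertices that vary only within row $y$, filling the other rows in a canonical way, which automatically enforces affine independence between families targeting different rows. When $y\in[1,\varepsilon]$ is an anti-guessing row, I build $(\varepsilon-1)$ vertices with $|\mc{R}_A|=2$ by pairing $y$ with each other AG row $y'$ and using the leading diagonal of $\mbf{G}_{\text{ML}}^{m'}$ canonically, plus $m'$ vertices with $|\mc{R}_A|=1$ in which row $y$ absorbs the $(\varepsilon-1)$ available AG columns together with one ML column, that column being chosen to range over the $m'$ affinely independent binary patterns from Lemma \ref{lemma-aff-ind-enumeration}. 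When $y\in[\varepsilon+1,n']$ is a guessing row, I dually build $\varepsilon$ vertices that plant a single 1 from the AG columns into row $y$ (the AG block being covered by two other rows) and $(m'-1)$ vertices that vary row $y$ across the ML block using the enumeration already established in Proposition \ref{prop-ml-facet}. Summing over $\varepsilon$ AG and $m'$ guessing target rows gives $n'(n'-1)$ tight vertices.

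The hard part will be the bookkeeping at the boundary $m'=d-1$ (equivalently $\varepsilon=n'-d+1$), where the $|\mc{R}_A|=1$ and $|\mc{R}_A|=2$ subfamilies for an AG target row collide at one vertex, forcing a replacement in which row $y$ is spread across all of $[1,\varepsilon]$ while the full diagonal of $\mbf{G}_{\text{ML}}^{m'}$ is used. I would also verify that the hypotheses $\varepsilon\geq 3$ and $n'-d+1\geq\varepsilon$ are genuinely needed: at $\varepsilon=2$, $\mbf{G}_{\text{A}}^{\varepsilon}$ collapses to $\mbf{G}_{\text{ML}}^2$ and the inequality degenerates into a maximum likelihood facet, while $n'-d+1\geq\varepsilon$ guarantees $m'\geq d-1$, leaving room in the ML block for the canonical filling of every vertex constructed above. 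Affine independence inside each target-row family then reduces to checking that differences between vertices project to linearly independent vectors in row $y$, which follows immediately from the structure of Lemma \ref{lemma-aff-ind-enumeration}.
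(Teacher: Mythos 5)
Your proposal follows the paper's proof almost exactly: the same row-by-row construction of $n'(n'-1)$ tight vertices, the same split of $(\varepsilon-1)+m'$ per anti-guessing target row and $\varepsilon+(m'-1)$ per guessing target row, the same reliance on Lemma~\ref{lemma-aff-ind-enumeration}, the same special handling at $m'=d-1$, and the same justification for the bounds on $\varepsilon$ and $d$. Your explicit up-front characterization of optimal vertices via $|\mc{R}_A|\in\{1,2\}$ is a nice supplement that the paper leaves implicit.

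There is, however, a concrete slip in your guessing-row construction. For the $\varepsilon$ vertices that plant a single anti-guessing column $x$ into the target guessing row $y$, you write that the remaining anti-guessing columns are ``covered by two other rows.'' That cannot yield a tight vertex: once column $x$ is sent to the lower-left zero block, the anti-guessing block contributes at most $\varepsilon-1$ regardless of how many anti-guessing rows are active, so using two of them wastes a row and leaves only $d-3$ rows (plus the diagonal of $y$) for the ML block, giving a score of at most $\varepsilon+d-3$. The paper instead uses exactly one anti-guessing row $y_A$, carefully chosen so that its unique zero column is the planted column $x$ (this choice is essential — any other single anti-guessing row would only contribute $\varepsilon-2$), together with $d-2$ ML rows and the diagonal entry of $y$. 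This is in fact forced by your own structural observation: sending one anti-guessing column outside the anti-guessing rows means $|\mc{R}_A|$ must equal $1$ for optimality. Apart from this one error, which is straightforwardly repaired, the proposal is essentially the paper's argument.
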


\section{Proof of Proposition \ref{prop-ambiguous-guessing-facets}} \label{proof-of-prop-ambiguous-guessing-facets}

In this section we prove the conditions for which the ambiguous guessing game $(\mbf{G}^{n,n'}_{k,d},d)$ is a facet of $\mc{C}_d^{n \to n'}$.

\subsection{Proof of Proposition \ref{prop-ambiguous-guessing-facets}(i)}

\begin{proof}
    To prove Proposition \ref{prop-ambiguous-guessing-facets}(i), we consider the general form of an ambiguous guessing Bell inequality $(\mbf{G}^{n,n'}_{k,d},d)$ where $\mbf{G}^{n,n'}_{k,d}\in\mbb{R}^{n'\times n}$ is row stochastic and contains $k=n'$ \textit{guessing} rows (see main text).
    Note that matrix $\mbf{G}^{n,n'}_{k,d}$ is row stochastic and therefore describes any input/output lifting and permutation of the maximum likelihood game $\mbf{G}_{\text{ML}}^{m'} = \mbb{I}_{m'}$ where $\min\{n,n'\} \geq m' \geq 1$. For example,
    
    \begin{equation}
        \begin{bmatrix}
            1 & 0 & 0 & 0 & 0 \\
            1 & 0 & 0 & 0 & 0 \\
            1 & 0 & 0 & 0 & 0 \\
            1 & 0 & 0 & 0 & 0 \\
            1 & 0 & 0 & 0 & 0 \\
        \end{bmatrix}, \quad 
        \begin{bmatrix}
            1 & 0 & 0 & 0 & 0 \\
            1 & 0 & 0 & 0 & 0 \\
            0 & 1 & 0 & 0 & 0 \\
            0 & 1 & 0 & 0 & 0 \\
            0 & 0 & 1 & 0 & 0 \\
        \end{bmatrix}, \quad \text{and}\quad \begin{bmatrix}
            1 & 0 & 0 & 0 & 0 \\
            0 & 1 & 0 & 0 & 0 \\
            0 & 0 & 1 & 0 & 0 \\
            0 & 0 & 0 & 1 & 0 \\
            0 & 0 & 0 & 0 & 1 \\
        \end{bmatrix},
    \end{equation}
    
    \noindent are all instances of $\mbf{G}^{5,5}_{5,d}$.
    By Proposition \ref{prop-ml-facet}, $(\mbf{G}_{\text{ML}}^{m'},d)$ is a facet of $\mc{C}_d^{m'\to m'}$ \textit{iff} $m' > d > 1$, that is, $\rank(\mbf{G}_{\text{ML}}^{m'}) > d$.
    When the trivial lifting rules (see Fig. \ref{Fig:input-output-lifting}) are applied to $\mbf{G}_{\text{ML}}^{m'}$, the rank of the lifted matrix does not change.
    Therefore, any Bell inequality  $(\mbf{G}^{n,n'}_{n',d},d)$ with $\rank(\mbf{G}^{n,n'}_{n',d}) > d$ is a facet of $\mc{C}_d^{n\to n'}$ that has been lifted from $\mc{C}_d^{m' \to m'}$ where $\min\{n,n'\} \geq m' > d$.
    Conversely, if $\rank(\mbf{G}^{n,n'}_{n',d}) < d$, then $\langle \mbf{G}^{n,n'}_{n',d}, \mbf{V}\rangle < d$ for any $\mbf{V}\in\mc{V}_d^{n \to n'}$.
    Likewise, if $\rank(\mbf{G}^{n,n'}_{n',d}) = d$, then there are an insufficient number of affinely independent vertices $\mbf{V}\in\mc{V}_d^{n \to n'}$ which satisfy $\langle \mbf{G}^{n,n'}_{n',d}, \mbf{V}\rangle = d$ because $d$ columns must have fixed values in $\mbf{V}$.
    Thus we conclude that when $\min\{n,n'\}>d>1$ $(\mbf{G}^{n,n'}_{n',d},d)$ is a tight Bell inequality of $\mc{C}_d^{n \to n'}$ \textit{iff} $\rank(\mbf{G}^{n,n'}_{n',d}) > d$.
\end{proof}

\begin{remark}
    Proposition \ref{prop-ambiguous-guessing-facets}(i) is significant because it allows one to easily find a facet of any signaling polytope $\mc{C}_d^{n \to n'}$.
    This enables the use of adjacency decomposition for any signaling polytope (see Appendix \ref{Appendix-Adjacency}).
\end{remark}

\subsection{Proof of Proposition \ref{prop-ambiguous-guessing-facets}(ii)}

\begin{proof}
    To prove Proposition \ref{prop-ambiguous-guessing-facets}(ii), we consider the ambiguous guessing game Bell inequalities $(\mbf{G}^{n,n'}_{k,d},d)$ with $k$ guessing rows and $(n'-k)$ ambiguous rows (see main text).
    Note that the ambiguous rows of $\mbf{G}^{n,n'}_{k,d}$ span the entire width of the matrix. For example,
    
    \begin{equation}
        \begin{bmatrix}
            1 & 0 & 0 & 0 \\
            1 & 0 & 0 & 0 \\
            1 & 0 & 0 & 0 \\
            1 & 0 & 0 & 0 \\
            \frac{1}{3} & \frac{1}{3} & \frac{1}{3} & \frac{1}{3}\\
            \frac{1}{3} & \frac{1}{3} & \frac{1}{3} & \frac{1}{3}\\
        \end{bmatrix}, \quad 
        \begin{bmatrix}
            1 & 0 & 0 & 0 \\
            1 & 0 & 0 & 0 \\
            0 & 1 & 0 & 0 \\
            0 & 1 & 0 & 0 \\
            \frac{1}{3} & \frac{1}{3} & \frac{1}{3} & \frac{1}{3}\\
            \frac{1}{3} & \frac{1}{3} & \frac{1}{3} & \frac{1}{3}\\
        \end{bmatrix}, \quad \text{and}\quad \begin{bmatrix}
            1 & 0 & 0 & 0 \\
            0 & 1 & 0 & 0 \\
            0 & 0 & 1 & 0 \\
            0 & 0 & 0 & 1 \\
            \frac{1}{3} & \frac{1}{3} & \frac{1}{3} & \frac{1}{3}\\
            \frac{1}{3} & \frac{1}{3} & \frac{1}{3} & \frac{1}{3}\\
        \end{bmatrix},
    \end{equation}
    
    \noindent are all instances of $\mbf{G}^{4,6}_{4,2}$.
    Furthermore, any ambiguous guessing facet described by Proposition \ref{prop-ambiguous-facet-tight} $(\mbf{G}_?^{n',d}, d(n'-d))$ of $\mc{C}_d^{(n'-1) \to n'}$ can be converted into an inequality $(\mbf{G}^{(n'-1),n'}_{(n'-1),d},d)$ simply by dividing the inequality by $(n'-d)$, hence, these two matrices describe the same inequality.
    It follows that any ambiguous guessing facet $\mbf{G}_?^{m',d}$ can be input lifted from $\mc{C}_d^{(m'-1) \to m'}$ to $\mc{C}_d^{(m'-1)\to n'}$ where $n' \geq m'$.
    Since the rank of the $(m'-1)$ guessing rows of $\mbf{G}_?^{m',d}$ is $(m'-1)$  and input liftings do not affect the matrix rank, any $\mbf{G}^{(m'-1),n'}_{k,d}$ with $n' > k\geq (m'-1)$ with a similar rank for its guessing rows must be a facet of $\mc{C}_d^{(m'-1) \to n'}$.
    Finally, if the rank of the guessing rows of $\mbf{G}^{n,n'}_{k,d})$ is less than $n$, then $\mbf{G}^{n,n'}_{k,d}$ cannot be a facet of $\mc{C}_d^{n \to n'}$ because there is an insufficient number of affinely independent vertices in $\{\mbf{V}\in \mc{V}_d^{n \to n'}|\; \langle \mbf{G}^{n,n'}_{k,d}, \mbf{V}\rangle = d\}$.
    This is true because Proposition \ref{prop-ml-facet} implies that we can enumerate $(k-1)n$ affinely independent vertices using only guessing rows of $\mbf{G}^{n,n'}_{k,d}$. This requires that the remaining $(n'-k)n$ affinely independent vertices are enumerated using $(d-1)$ guessing rows and one ambiguous row.
    However, as exemplified in the proof of Proposition \ref{prop-ambiguous-facet-tight}, this cannot be done unless there is a nonzero element in each column of the $k$ guessing rows of $\mbf{G}^{n,n'}_{k,d}$.
    Thus we conclude that for $n' > k \geq n$ and $n > d$ $\mbf{G}^{n,n'}_{k,d}$ is a facet of $\mc{C}_d^{n \to n'}$ \textit{iff} the rank of the guessing rows is $n$.
\end{proof}

\begin{remark}
    In our proof, we do not consider input liftings of $\mbf{G}_?^{m',d}$ because it results in matrices which deviate in form from $\mbf{G}^{n,n'}_{k,d}$.
    Input lifting append an all-zero column to $\mbf{G}_?^{m',d}$ while $\mbf{G}^{n,n'}_{k,d}$ is defined to have a nonzero element in each column of an ambiguous row ambiguous.
    Therefore, input liftings of ambiguous guessing facets $\mbf{G}_?^{n',d}$ are incompatible with the ambiguous guessing games $\mbf{G}^{n,n'}_{k,d}$ described in the main text.
\end{remark}

\section{Proof of Theorem \ref{Thm:main1}} \label{appendix-proof-thm-main1}

Our proofs to parts (i) and (ii) of Theorem \ref{Thm:main1} follow the same approach.  In both cases we want to show that the signaling polytope $\mc{C}^{n\to n'}_d$ is equivalent to some convex polytope defined by certain Bell inequalities.  We establish this by showing that the extreme points of the latter are also extreme points of the former; the converse has already been shown in Eq. \eqref{Eq:Ambiguous}.   Recall that the extreme points of $\mc{C}^{n\to n'}_d$ consist of all extreme points of $\mc{P}^{n\to n'}$ having rank no greater than $d$.  In other words, $\mbf{P}$ is extremal in $\mc{C}^{n\to n'}_d$ \textit{iff} it is column stochastic with $0/1$ elements and at most $d$ nonzero rows.

We rely heavily on the following general characterization of extreme points.

\begin{proposition}
\label{Prop:extremal}
Let $\mc{S}\subset\mbb{R}^{n'\times n}$ be some convex polytope in $\mbb{R}^{n'\times n}$.  Then $\mbf{P}$ is an extreme point of $\mc{S}$ \textit{iff} there does not exist some $\mbf{D}\in\mbb{R}^{n'\times n}$ such that $\mbf{P}\pm \mbf{D}\in \mc{S}$.
\end{proposition}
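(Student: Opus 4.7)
My plan is to prove both directions by contrapositive, after observing that the proposition is implicitly about nonzero $\mbf{D}$ (since $\mbf{D}=0$ trivially satisfies $\mbf{P}\pm\mbf{D}\in\mc{S}$ whenever $\mbf{P}\in\mc{S}$). With that reading, the statement is simply the standard ``midpoint'' characterization of extreme points of a convex set.

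For the forward direction, I would argue by contraposition: suppose there exists some nonzero $\mbf{D}\in\mbb{R}^{n'\times n}$ with both $\mbf{P}+\mbf{D}$ and $\mbf{P}-\mbf{D}$ in $\mc{S}$. Then the identity
\begin{equation}
\mbf{P} = \tfrac{1}{2}(\mbf{P}+\mbf{D}) + \tfrac{1}{2}(\mbf{P}-\mbf{D})
\end{equation}
exhibits $\mbf{P}$ as a convex combination of two distinct points of $\mc{S}$ (they are distinct because $\mbf{D}\neq 0$), contradicting extremality.

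For the converse, I would again use contraposition. Assume $\mbf{P}$ is not extreme, so there exist $\mbf{P}_1,\mbf{P}_2\in\mc{S}$ with $\mbf{P}_1\neq\mbf{P}_2$ and $\alpha\in(0,1)$ satisfying $\mbf{P} = \alpha \mbf{P}_1 + (1-\alpha)\mbf{P}_2$. Set $\epsilon := \min\{\alpha,\,1-\alpha\}>0$ and $\mbf{D} := \epsilon(\mbf{P}_1-\mbf{P}_2)$. A direct computation gives
\begin{align}
\mbf{P}+\mbf{D} &= (\alpha+\epsilon)\mbf{P}_1 + (1-\alpha-\epsilon)\mbf{P}_2, \\
\mbf{P}-\mbf{D} &= (\alpha-\epsilon)\mbf{P}_1 + (1-\alpha+\epsilon)\mbf{P}_2,
\end{align}
and the choice of $\epsilon$ guarantees that all four coefficients lie in $[0,1]$ and each pair sums to one, so convexity of $\mc{S}$ places $\mbf{P}\pm\mbf{D}$ in $\mc{S}$. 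Since $\mbf{P}_1\neq\mbf{P}_2$ we have $\mbf{D}\neq 0$, producing the desired witness.

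There is no real obstacle here: the argument uses only the algebraic definition of an extreme point and convexity of $\mc{S}$, so the convex-polytope hypothesis is actually stronger than needed (any convex subset of $\mbb{R}^{n'\times n}$ works). The only care required is to fix the convention that $\mbf{D}$ is nonzero and to choose $\epsilon$ small enough that both $\mbf{P}+\mbf{D}$ and $\mbf{P}-\mbf{D}$ remain on the segment between $\mbf{P}_1$ and $\mbf{P}_2$.
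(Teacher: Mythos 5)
Your proof is correct. Worth noting for context: the paper states Proposition~\ref{Prop:extremal} without proof, treating it as the standard characterization of extreme points, so there is no in-paper argument to compare against. Your observation that the statement must be read with $\mbf{D}\neq 0$ implicit is the right one, and both directions of your argument are sound: the forward direction is the usual midpoint decomposition, and in the converse direction the choice $\epsilon=\min\{\alpha,1-\alpha\}$ does keep both $\mbf{P}+\mbf{D}$ and $\mbf{P}-\mbf{D}$ on the segment $[\mbf{P}_1,\mbf{P}_2]\subset\mc{S}$. You are also right that only convexity of $\mc{S}$ is used, not polytopality; the hypothesis in the paper is stronger than necessary because that is the setting in which the proposition is applied.
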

\noindent In our application of Proposition \ref{Prop:extremal}, we will refer to $\mbf{D}\in \mbb{R}^{n'\times n}$ as a ``valid'' perturbation of $\mbf{P}$ if $\mbf{P}\pm \mbf{D}\in\mc{S}$; hence if $\mbf{D}$ is a valid perturbation then $\mbf{P}$ cannot be extremal.

Some other terminology used in our proofs is the following.  For a channel $\mbf{P}\in\mc{P}^{n\to n'}$, an element $P(y|x)$ is called \textit{non-extremal} if it lies in the open interval $(0,1)$.  We say that $P(y|x)$ is a \textit{row maximizer} if it attains the largest value in row $y$ of $\mbf{P}$.  It is further called a \textit{unique row maximizer} if there are no other elements in row $y$ having this value.  Finally, we define the maximum likelihood estimation (ML) sum
\begin{align}
    \phi(\mbf{P}):=\sum_{x=1}^{n'}\Vert \mbf{r}_y\Vert_\infty,
\end{align}
where $\mbf{r}_y$ denotes row $y$ of $\mbf{P}$ and $\Vert \mbf{r}_y\Vert_\infty$ is its row maximizer.  Then the maximum likelihood estimation (ML) polytope can be expressed as
\begin{align}
    \mc{M}_d^{n\to n'}=\{\mbf{P}\in\mc{P}^{n\to n'}\;|\;\phi(\mbf{P})\leq d\}.
\end{align}
Note that $\phi$ is a convex function so that if $\phi(\mbf{P})=d$ with $\mbf{P}=\sum_\lambda p_\lambda \mbf{V}_\lambda$ for extreme points $\mbf{V}_\lambda\in\mc{M}_d^{n\to n'}$ and non-negative numbers $p_\lambda$, then necessarily $\phi(\mbf{V}_\lambda)=d$ for every $\lambda$.

\subsection{Proof of Theorem \ref{Thm:main1}(i)}

The proof of Theorem \ref{Thm:main1}(i) follows immediately from the following lemma due to the convexity of the ML and signaling polytopes.

\begin{lemma}
\label{lem:thm-1-i}
    For arbitrary $n$ and $n'$, the extreme points of $\mc{M}_{n'-1}^{n\to n'}$ are extreme points of $\mc{C}_{n'-1}^{n \to n'}$.
\end{lemma}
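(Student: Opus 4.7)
The plan is to apply Proposition~\ref{Prop:extremal} in contrapositive form: if $\mbf{P}\in\mc{M}_{n'-1}^{n\to n'}$ contains any non-extremal entry, I will construct a nonzero perturbation $\mbf{D}\in\mbb{R}^{n'\times n}$ such that both $\mbf{P}+\mbf{D}$ and $\mbf{P}-\mbf{D}$ lie in $\mc{M}_{n'-1}^{n\to n'}$. This forces every extreme point of $\mc{M}_{n'-1}^{n\to n'}$ to have entries in $\{0,1\}$, after which the ML bound $\phi(\mbf{P})\leq n'-1$ caps the number of nonzero rows at $n'-1$, matching exactly the characterization of extreme points of $\mc{C}_{n'-1}^{n\to n'}$ recalled above.

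Any valid perturbation must satisfy two immediate structural constraints: $D(y|x)=0$ whenever $P(y|x)\in\{0,1\}$ (to keep all entries in $[0,1]$ under both signs), and $\sum_y D(y|x)=0$ for each column $x$ (to preserve column stochasticity). Column stochasticity also implies a useful fact used repeatedly below: every column containing a non-extremal entry actually contains at least two such entries, since otherwise the remaining $\{0,1\}$-valued entries of that column would sum to a non-integer value.

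I would split the construction into two cases based on the ML slack. In Case~A, when $\phi(\mbf{P})<n'-1$, any sufficiently small perturbation leaves $\phi$ strictly below $n'-1$, so only the two constraints above matter; picking two non-extremal entries in a single column and assigning them opposite signs $\pm\epsilon$ already produces a valid $\mbf{D}$. In Case~B, when $\phi(\mbf{P})=n'-1$, a first-order analysis of $\phi(\mbf{P}\pm\mbf{D})$ imposes two further conditions on $\mbf{D}$: its value $D(y|x)$ must be constant across each row's tie set $T_y:=\{x:P(y|x)=\Vert\mbf{r}_y\Vert_\infty\}$, and $\sum_y D(y|x_y^*)=0$ for any selection $x_y^*\in T_y$. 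I would then build $\mbf{D}$ by classifying each non-extremal column according to whether its non-extremal entries lie in their row's tie set: two non-tie entries in a common column can be flipped freely without affecting $\phi$; two tie entries in a common column, in distinct rows, can be flipped so that their $+\epsilon$ and $-\epsilon$ contributions to $\phi$ cancel; and when every non-extremal column is ``mixed'' (one tie entry, one non-tie entry), two such columns admit a four-entry cycle whose cross-column contributions to $\phi$ cancel.

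The main obstacle will be Case~B, specifically ruling out the exceptional configuration where exactly one non-extremal column exists and it is of mixed type, so that none of the three intra- or cross-column constructions is available. I would handle this by combining the column-sum identity $\sum_y P(y|x_0)=1$ with the tight ML identity $\phi(\mbf{P})=n'-1$, together with the observation that outside the single non-extremal column every column is a $\{0,1\}$-valued standard basis vector; a short head-count of the rows saturating $\Vert\mbf{r}_y\Vert_\infty=1$ versus those whose maximum is achieved in the non-extremal column then shows that a simultaneous tie entry and non-tie entry in that column forces an infeasible balance between integer and non-integer contributions, so the exceptional configuration never arises and one of the earlier constructions always produces the required $\mbf{D}$.
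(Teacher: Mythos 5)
Your overall strategy—apply Proposition~\ref{Prop:extremal}, derive that a valid perturbation $\mbf{D}$ must vanish on $\{0,1\}$-entries and have zero column sums, handle the slack case $\phi(\mbf{P})<n'-1$ trivially, and in the tight case use the first-order conditions that $D(y|\cdot)$ is constant on each tie set $T_y$ with $\sum_y D_y=0$—is correct and tracks the paper's proof in spirit (both split on whether the ML sum is saturated and both perturb). Your case decomposition of non-extremal columns into non-tie/tie/mixed types, however, is a different slicing from the paper's, which splits on whether two non-extremal row maximizers share a column (their Case~(a), your type~(ii)) versus not (their Case~(b)). In Case~(b) the paper uses a single uniform scheme: perturb all row maximizers in each row $y$ with a non-extremal maximum by $\epsilon_y$, compensate in each column at a non-maximizer non-extremal element, and impose $\sum_y\epsilon_y=0$; this sidesteps the sub-case analysis you introduce.

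There is a genuine gap in your ``mixed columns'' step. You claim that whenever every non-extremal column is mixed, any two of them admit a four-entry cycle. This fails if the two chosen columns share the same tie row $a$: the constancy constraint on $T_a$ forces $D(a|x_1)=D(a|x_2)$, so the cycle $+\epsilon,-\epsilon,-\epsilon,+\epsilon$ cannot close, and the same obstruction persists however you pair columns as long as all their tie entries lie in a single row $a$. Consequently the ``exceptional configuration'' is not merely ``exactly one mixed column''; it is ``the set of distinct tie rows appearing among the mixed columns has cardinality at most one.'' As stated, your sketch silently assumes distinct tie rows when building the cycle and therefore does not cover all instances of Case~B. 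The good news is that your integrality argument for the single-column exception applies verbatim here: if all mixed columns tie in row $a$, then $\Vert\mbf{r}_a\Vert_\infty\in(0,1)$ while every other row contributes an integer (each such row either is all-zero or contains a $1$ outside the mixed columns), so $\phi(\mbf{P})$ is non-integer, contradicting $\phi=n'-1$. If instead there are two distinct tie rows $a\neq a'$, set $D_a=+\epsilon$, $D_{a'}=-\epsilon$, propagate over $T_a$ and $T_{a'}$, and compensate at the unique non-tie non-extremal entry in each affected mixed column. You should also note that your type~(ii) construction glosses over the same propagation issue: flipping two tie entries in a common column forces equal perturbations throughout $T_{y_1}$ and $T_{y_2}$, and each additional column so touched needs its own compensating entry; this does always exist (because all row maxes outside $y_1,y_2$ equal $1$ in that configuration, so no other tie entry competes), but it must be argued, as the paper does explicitly.
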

\begin{proof}

We first show the conclusion of Lemma \ref{lem:thm-1-i} is true for any extreme point $\mbf{V}$ of $\mc{M}_{n'-1}^{n\to n'}$ having ML sum $\phi(\mbf{V})<n'-1$.  If $\mbf{V}$ is not extremal in $\mc{P}^{n\to n'}$, then $\mbf{V}$ must have at least one column $x$ with two non-extremal elements $V(y_1|x)$ and $V(y_2|x)$.  However, we could then take two perturbations $V(y_1|x)\to V(y_1|x)\pm \epsilon$ and $V(y_2|x)\to V(y_2|x) \mp\epsilon$ with $\epsilon$ chosen sufficiently small so that the ML sum remains $<n'-1$ and the numbers remain non-negative.  Hence by contradiction, $\mbf{V}$ must be extremal in $\mc{P}^{n\to n'}$ with rank clearly $<n'-1$.
  
Let us then consider an extremal point $\mbf{V}$ of $\mc{M}_{n'-1}^{n\to n'}$ for which $\phi(\mbf{V})=n'-1$.  Since $\phi(\mbf{V})=n'-1$ is an integer and  $\mbf{V}$ has $n'$ rows, then $\mbf{V}$ must have at least two non-extremal row maximizers (possibly in different columns).  
We will again introduce perturbations, but care is needed to ensure that the perturbations are valid; \textit{i.e.} the perturbed channels must remain in $\mc{M}_{n'-1}^{n\to n'}$.  There are two cases to consider.

\medskip

\noindent\textbf{Case (a)}:
Suppose that two non-extremal row maximizers occur in the same column: say $V(y_1|x)$ and $V(y_2|x)$ are both row maximizers in column $x$.  Since these values will account for the contributions of rows $y_1$ and $y_2$ in the ML sum, and since there are only $n'$ total rows in this sum, we must have that all other row maximizers are $+1$.   Hence we introduce perturbations $V(y_1|x)\to V(y_1|x)\pm \epsilon$ and $V(y_2|x)\to V(y_2|x)\mp \epsilon$. If $V(y_1|x)$ and $V(y_2|x)$ are unique row maximizers, then this perturbation is valid.  On the other hand, if there are columns $x',x''$ such that $V(y_1|x)=V(y_1|x')$ and/or $V(y_2|x)=V(y_2|x'')$ (with possibly $x'=x''$), then we must also introduce a corresponding perturbation $V(y_1|x')\to V(y_1|x')\pm \epsilon$ and/or $V(y_2|x'')\to V(y_2|x'')\mp \epsilon$.  To preserve normalization in columns $x'$ and/or $x''$, we will have to introduce an off-setting perturbation to some other row in $x'$ and/or $x''$.  This can always be done since either $x'=x''$, or $x'$ and/or $x''$ have a non-extremal element in some other row which is not a row maximizer (since all other row maximizers are $+1$).

\medskip

\noindent\textbf{Case (b)}:  No column has two non-extremal row maximizers, and $\mbf{V}$ has at least two non-extremal row maximizers that belong to different columns.  For each row $y$ with a non-extremal row maximizer, add perturbations $\pm\epsilon_y$ to all the row maximizers in that row.  Since each column has at most one row maximizer, a normalization-preserving perturbation $\mp\epsilon_y$ can be added to another non-extremal element in any column having a row maximizer in row $y$.  Finally, choose the $\epsilon_y$ so that $\sum_{y=1}^{n'}\epsilon_y=0$.

\end{proof}

\subsection{Proof of Theorem \ref{Thm:main1}(ii)}

We now turn to the ambiguous polytopes $\mc{A}_\cap^{n\to n'}:=\cap_{k=n}^{n'}\mc{A}_{k,n-1}^{n\to n'}$.  Recall that $\mc{A}_{k,n-1}^{n\to n'}$ is the polytope of channels $\mbf{P}\in\mc{P}^{n\to n'}$ satisfying all Bell inequalities of the form
\begin{equation}
\label{Eq:appendix-ambiguous-ineq}
    \langle\mbf{G}^{n,n'}_{k,n-1},\mbf{P}\rangle\leq n-1,
\end{equation}

\noindent with $\mbf{G}^{n,n'}_{k,n-1}$ having $k$ guessing rows and $(n'-k)$ ambiguous rows.  In this case, all the elements in an ambiguous row are equal to $\frac{1}{n-d+1}=\frac{1}{2}$.  

To prove Theorem \ref{Thm:main1}(ii) we apply the following lemma to show that the extreme points of $\mc{A}_\cap^{n \to n'}$ are the same as those of $\mc{C}_{n-1}^{n \to n'}$.
Then by convexity of $\mc{A}_{\cap}^{n \to n'}$ and $\mc{C}_{n-1}^{n \to n'}$ we must have $\mc{A}_{\cap}^{n \to n'} = \mc{C}_{n-1}^{n \to n'}$.

\begin{lemma}
\label{lem:thm-1-ii}
      For arbitrary $n'\geq n$, the extreme points of $\mc{A}_{\cap}^{n\to n'}$ are extreme points of $\mc{C}_{n-1}^{n \to n'}$.  
  \end{lemma}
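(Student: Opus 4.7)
The plan is to apply Proposition~\ref{Prop:extremal} and combine it with the inclusion $\mc{C}_{n-1}^{n\to n'}\subset\mc{A}_\cap^{n\to n'}$, which already follows from Eq.~\eqref{Eq:Ambiguous}; by convexity, the desired equality $\mc{A}_\cap^{n\to n'}=\mc{C}_{n-1}^{n\to n'}$ will then reduce to showing that every extreme point $\mbf{V}$ of $\mc{A}_\cap^{n\to n'}$ is a vertex of $\mc{C}_{n-1}^{n\to n'}$, that is, a $0/1$ column-stochastic matrix with at most $n-1$ nonzero rows. The $0/1$ case is immediate: since $n'\geq n$, the index $k=n'$ belongs to the intersection defining $\mc{A}_\cap^{n\to n'}$, and $\mc{A}_{n',n-1}^{n\to n'}$ is the ML polytope $\mc{M}_{n-1}^{n\to n'}$, whose defining inequality $\sum_y\|\mbf{r}_y\|_\infty\leq n-1$ reduces, for a $0/1$ column-stochastic matrix, to (number of nonzero rows)~$\leq n-1$.

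The substantive work is to rule out non-$0/1$ extreme points. I would assume $\mbf{V}$ has some non-extremal entry $V(y|x)\in(0,1)$ and, using Proposition~\ref{Prop:extremal}, construct a valid perturbation $\mbf{D}\in\mbb{R}^{n'\times n}$ such that $\mbf{V}\pm\epsilon\mbf{D}\in\mc{A}_\cap^{n\to n'}$ for small $\epsilon>0$, contradicting extremality. The matrix $\mbf{D}$ must (i) be supported on non-extremal entries of $\mbf{V}$, (ii) have column sums zero, and (iii) satisfy $\langle\mbf{G},\mbf{D}\rangle=0$ for every $\mbf{G}=\mbf{G}^{n,n'}_{k,n-1}$ that attains the maximum in Proposition~\ref{Prop:ambiguous-condition} at $\mbf{V}$ with value $n-1$. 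Requirements (i)--(ii) already admit a nontrivial family of perturbations: column stochasticity forces any column of $\mbf{V}$ containing a non-extremal entry to contain at least two, giving at least one dimension of localized perturbations.

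Following the spirit of Lemma~\ref{lem:thm-1-i}, I would then proceed by case analysis. Let $x^\star$ be a column with non-extremal entries in rows $y_1,y_2$. For each tight $\mbf{G}^\star=\mbf{G}^{n,n'}_{k,n-1}$, the partition of $[n']$ into guess and ambiguous rows is (up to ties) dictated by the sorting procedure following Proposition~\ref{Prop:ambiguous-condition}: guess rows are those maximizing $\|\mbf{r}_y\|_\infty-\tfrac{1}{2}\|\mbf{r}_y\|_1$. Within a tight $\mbf{G}^\star$, each guess row picks one of its row-maximizing columns. I would distinguish cases by how $y_1,y_2$ are classified across the tight $\mbf{G}^\star$'s. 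The straightforward cases (both guess with consistent column choices, or both ambiguous) admit the two-entry perturbation $D(y_1|x^\star)=+1$, $D(y_2|x^\star)=-1$ since the corresponding contributions to $\langle\mbf{G}^\star,\mbf{D}\rangle$ cancel. The hard case is when, for some tight $\mbf{G}^\star$, one of $y_1,y_2$ is a guess row with $x_{y_i}=x^\star$ while the other is an ambig row, producing a residual $\pm\tfrac{1}{2}$; to cancel it, I would pair the column-$x^\star$ perturbation with a compensating move in a second column $x^{\star\star}$, producing a four-entry $\mbf{D}$ whose contributions cancel in every tight $\mbf{G}^\star$ simultaneously.

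The principal obstacle is verifying that the compensating column or row is always available when $\mbf{V}$ has a non-extremal entry. Concretely, one must argue that the linear constraints $\langle\mbf{G}^\star,\mbf{D}\rangle=0$, ranging over all tight $\mbf{G}^\star$ and all $k\in[n,n']$, cannot simultaneously kill every column-sum-zero perturbation supported on non-extremal entries. I would exploit the layered structure identified by the sorting procedure of Proposition~\ref{Prop:ambiguous-condition}: if many values of $k$ are simultaneously tight, the ordering of $\|\mbf{r}_y\|_\infty-\tfrac{1}{2}\|\mbf{r}_y\|_1$ forces a rigid pattern on row maximizers and row sums, which in turn forces the non-extremal support of $\mbf{V}$ to spread across multiple columns in a way that always admits a compensating perturbation. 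This combinatorial bookkeeping, rather than any single clean identity, is where the heavy lifting of the proof will lie.
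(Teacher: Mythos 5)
Your high-level reduction is sound and matches the paper's skeleton: an extreme point of $\mc{A}_\cap^{n\to n'}$ must be $0/1$ column-stochastic, and then the ML constraint ($k=n'$) forces at most $n-1$ nonzero rows, giving a vertex of $\mc{C}_{n-1}^{n\to n'}$. The step in which all the inequalities are strict also works exactly as you say: two non-extremal entries in a column admit a localized $\pm\epsilon$ perturbation that stays inside the polytope. But the substantive content of the lemma is exactly the ``combinatorial bookkeeping'' that you defer and do not carry out, so the proposal is a plan rather than a proof. You correctly observe that a valid perturbation must annihilate $\langle\mbf{G}^\star,\mbf{D}\rangle$ for \emph{every} simultaneously tight $\mbf{G}^{n,n'}_{k,n-1}$ across \emph{all} $k\in[n,n']$, and that the tension between guess and ambiguous classifications of $y_1,y_2$ produces residuals of $\pm\tfrac12$; you then assert, without argument, that a compensating column can always be found. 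That assertion is precisely what needs to be proved, and it is the part most likely to go wrong: as $k$ decreases the set of ambiguous rows grows, so the same pair of rows can be classified differently under different tight games, and a single four-entry $\mbf{D}$ may cancel one residual while creating another. Your proposal gives no mechanism for controlling this simultaneously over all tight $k$.

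The paper sidesteps exactly this difficulty with a technique your proposal does not anticipate. It splits the tightness analysis in two. When the ML game $k=n'$ is tight ($\phi(\mbf{V})=n-1$), it does carry out a perturbation case analysis in the spirit you describe (Propositions~\ref{Prop:partial-ML} and~\ref{Prop:full-ML}), carefully tracking unique versus non-unique row maximizers. But when $\phi(\mbf{V})<n-1$ and a game with $k<n'$ is tight, instead of juggling multiple tight constraints, the paper \emph{splits each ambiguous row} of $\mbf{V}$ into several rows (Eq.~\eqref{Eq:lemma-ambigious-matrix}), chosen so the enlarged matrix $\wt{\mbf{V}}\in\mc{P}^{n\to m}$ has ML sum exactly $n-1$. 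Convexity of $\phi$ then lets it apply the ML-tight result to the extreme points $\wt{\mbf{V}}_\lambda$ of its decomposition, and a coarse-graining argument pulls the $0/1$, rank-$\leq n-1$ structure back to $\mbf{V}$. This reduction replaces your open-ended multi-constraint perturbation problem with a single, already-solved ML problem in a larger ambient space. If you want to pursue your approach you would have to prove the claim that a compensating column always exists; alternatively, adopt the row-splitting reduction, which is the cleaner route.
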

\begin{proof}

We first argue that the conclusion of Lemma \ref{lem:thm-1-ii} is true for any extreme point $\mbf{V}$ of $\mc{A}_\cap^{n\to n'}$ such that $\langle \mbf{G}_{k,n-1}^{n,n'},\mbf{V}\rangle< n-1$ for all $\mbf{G}_{k,n-1}^{n,n'}$ and all integers $k\in[n,n']$.  Analogous to Lemma \ref{lem:thm-1-i}, if $\mbf{V}$ has at least one column $x$ with two non-extremal elements $V(y_1|x)$ and $V(y_2|x)$, we can take two sufficiently small perturbations $V(y_1|x)\to V(y_1|x)\pm \epsilon$ and $V(y_2|x)\to V(y_2|x)\mp\epsilon$ and still satisfy all the constraints of Eq. \eqref{Eq:appendix-ambiguous-ineq}. Hence, $\mbf{V}$ must be an extreme element of $\mc{P}^{n\to n'}$.  In this case,  $\rank(\mbf{V})< n-1$ since $\phi(\mbf{P})<n-1$, and so $\mbf{V}\in\mc{C}_{n-1}^{n\to n'}$.

It remains to prove the conclusion of Lemma \ref{lem:thm-1-ii} whenever Eq. \eqref{Eq:appendix-ambiguous-ineq} is tight for some $\mc{A}_{k,n-1}^{n\to n'}$.  The lengthiest part of this argument is when $k=n'$ and tightness in Eq. \eqref{Eq:appendix-ambiguous-ineq} corresponds to the ML sum equaling $n-1$.  In this case, Proposition \ref{Prop:full-ML} below shows that $\mbf{V}$ must be an extreme point of $\mc{C}_{n-1}^{n\to n'}$.  However, before proving this result, we apply it to show that Lemma \ref{lem:thm-1-ii} holds whenever Eq. \eqref{Eq:appendix-ambiguous-ineq} is tight for some other $\mbf{G}^{n,n'}_{k,n-1}$ with $k<n'$.  Specifically, we will perform a lifting technique on any vertex $\mbf{V}$ satisfying  $\langle\mbf{G}^{n,n'}_{k,n-1},\mbf{V}\rangle= n-1$ and reduce it to the case of the ML sum equaling $(n-1)$.

Suppose that $\phi(\mbf{V})<n-1$ yet there exists some $\mbf{G}^{n,n'}_{k,n-1}$ such that $\langle \mbf{G}^{n,n'}_{k,n-1},\mbf{V}\rangle=n-1$.  The matrix $\mbf{G}^{n,n'}_{k,n-1}$ identifies $(n'-k)$ ambiguous rows, and suppose that $y$ is an ambiguous row such that $\frac{1}{2}\Vert\mbf{r}_y\Vert_1>\Vert\mbf{r}_y\Vert_\infty$, with $\mbf{r}_y$ being the $y^{th}$ row of $\mbf{V}$.  To be concrete, let us suppose without loss of generality that the components of row $\mbf{r}_y$ are arranged in non-increasing order (\textit{i.e.} $V(y|x_{i})\geq V(y|x_{i+1})$), and let $k$ be the smallest index such that 
\begin{equation}
\label{Eq:lemma-ambiguous-step}
\frac{1}{2}\left(-\sum_{i=1}^{k-1}V(y|x_i)+\sum_{i=k}^{n}V(y|x_i)\right)\leq V(y|x_k).
\end{equation}
By the assumption $\frac{1}{2}\Vert\mbf{r}_y\Vert_1>\Vert\mbf{r}_y\Vert_\infty$, we have $k>1$.  Also, since $k$ is the smallest integer satisfying Eq. \eqref{Eq:lemma-ambiguous-step}, we have
\begin{equation}
    \frac{1}{2}\left(-\sum_{i=1}^{k-2}V(y|x_i)+\sum_{i=k-1}^{n}V(y|x_i)\right)> V(y|x_{k-1}).
\end{equation}
Subtracting $V(y|x_{k-1})$ from both sides of this equation implies that the LHS of Eq. \eqref{Eq:lemma-ambiguous-step} is strictly positive.  Hence, there exists some $\lambda\in (0,1]$ such that
\begin{equation}
    \lambda V(y|x_k)=\frac{1}{2}\left(-\sum_{i=1}^{k-1}V(y|x_i)+\sum_{i=k}^{n}V(y|x_i)\right).
\end{equation}
Consider then the new matrix $\wt{\mbf{V}}$ formed from $\mbf{V}$ by splitting row $y$ into $k$ rows as follows:
\begin{align}
\label{Eq:lemma-ambigious-matrix}
    \mbf{r}_y \to\begin{bmatrix}V(y|x_1)&0&\cdots &0 &(1-\lambda)V(y|x_k)&V(y|x_{k+1})&\cdots&V(y|x_n)\\
     0&V(y|x_2)&\cdots &0 &0&0&\cdots&0\\
     \vdots&&\vdots&&\vdots&&\vdots\\
     0&0&\cdots &V(y|x_{k-1}) &0&0&\cdots&0\\
     0&0&\cdots &0 &\lambda V(y|x_{k})&0&\cdots&0
     \end{bmatrix}.
    \end{align}
Notice that we can obtain $\mbf{V}$ from $\wt{\mbf{V}}$ by coarse-graining over these rows.  Moreover, this decomposition was constructed so that
\begin{equation}
   \sum_{i=1}^k\left\Vert\wt{\mbf{r}}_{y_i}\right\Vert_\infty=\sum_{i=1}^{k-1}V(y|x_i)+\lambda V(y|x_k)=\Vert\mbf{r}_y\Vert_1,
\end{equation}
where the $\wt{\mbf{r}}_{y_i}$ are the rows in Eq. \eqref{Eq:lemma-ambigious-matrix}.  Essentially this transformation allows us to replace an ambiguous row with a collection of guessing rows so that the overall guessing score does not change.

We perform this row splitting process on all ambiguous rows of $\mbf{V}$ thereby obtaining a new matrix $\wt{\mbf{V}}$ such that $\phi(\wt{\mbf{V}})=n-1$.  If $m$ is the total number of rows in $\wt{\mbf{V}}$, then $\wt{\mbf{V}}$ will be an element of $\mc{A}_\cap^{n\to m}$.  We decompose $\wt{\mbf{V}}$ into a convex combination of extremal points of $\mc{A}_\cap^{n\to m}$ as $\wt{\mbf{V}}=\sum_\lambda p_\lambda \wt{\mbf{V}}_\lambda$.  By the convexity of $\phi$, it follows that $\phi(\wt{\mbf{V}}_\lambda)=n-1$, and we can therefore apply Proposition \ref{Prop:full-ML} below on the channels $\wt{\mbf{V}}_\lambda$ to conclude that they are extreme points of $\mc{C}_{n-1}^{n \to m}$.  Consequently, each $\wt{\mbf{V}}_\lambda$ has only one nonzero element per row.  Let $\mbf{R}$ denote the coarse-graining map such that $\mbf{V}=\mbf{R}\wt{\mbf{V}}$, and apply
\begin{equation}
    \mbf{V}=\mbf{R}\wt{\mbf{V}}=\sum_\lambda p_\lambda\mbf{R}\wt{\mbf{V}}_\lambda.
\end{equation}
However, by the assumption that $\mbf{V}$ is extremal, this is only possible if $\mbf{R}\wt{\mbf{V}}_\lambda$ is the same for every $\lambda$.  As a result, any two $\wt{\mbf{V}}_\lambda$ and $\wt{\mbf{V}}_{\lambda'}$ can differ only in rows that coarse-grain into the same rows by $\mbf{R}$.  From this it follows that $\mbf{V}$ can have no more than one nonzero element per column and $\rank(\mbf{V}) \leq n-1$.  Hence we've shown that the extreme points of $\mc{A}_{\cap}^{n\to n'}$ are indeed extreme points  of the signaling polytope $\mc{C}_{n-1}^{n \to n'}$.

\medskip

To complete the proof of Lemma \ref{lem:thm-1-ii}, we establish the case when $\phi(\mbf{V})=n-1$, as referenced above.  We begin by proving the partial result provided by Proposition \ref{Prop:partial-ML} and then, use this result to prove Proposition \ref{Prop:full-ML}.

\begin{proposition}
\label{Prop:partial-ML}
If $\mbf{V}$ is an extreme point of $\mc{A}_\cap^{n\to n'}$ satisfying $\phi(\mbf{V})=n-1$, then each column of $\mbf{V}$ must have at least one unique row maximizer or it has only one nonzero element.  
\end{proposition}

\begin{proof}
Suppose on the contrary that some column $x$ has more than one nonzero element yet no unique row maximizer.  Let $\mc{S}_x\subset[n']$ be the set of rows for which column $x$ contains a row maximizer.  Since only one row maximizer per row contributes to the ML sum, and the elements of column $x$ sum to one, we can satisfy $\phi(\mbf{V})=n-1$ \textit{iff} both conditions hold:
\begin{enumerate}
    \item[(i)] each row $y$ in $\mc{S}_x$ has only two nonzero elements $V(y|x)$ and $V(y|x_y)$ for some column $x_y\not=x$;
    \item[(ii)] every other nonzero element in $\mbf{V}$ outside of column $x$ and the rows in $\mc{S}_x$ are unique row maximizers.
\end{enumerate}   

\noindent With this structure, we introduce three cases of valid perturbations. 

\medskip

\noindent\textbf{Case (a)}: $V(y_1|x)$ and $V(y_2|x)$ are non-extremal elements in column $x$ with $y_1,y_2\not\in \mc{S}_x$.  Then $V(y_1|x)\to V(y_1|x)\pm \epsilon$ and $V(y_2|x)\to V(y_2|x)\mp \epsilon$ is a valid perturbation.  Indeed, even if we consider $y_1$ or $y_2$ as ambiguous rows, there is at most one other element in each of these rows (property (i) above), and so this perturbation would not violate any of the inequalities in \eqref{Eq:appendix-ambiguous-ineq}.

\medskip

\noindent\textbf{Case (b)}: $V(y_1|x)$ and $V(y_2|x)$ are non-extremal elements in column $x$ with $y_1\in \mc{S}_x$ and $y_2\not\in\mc{S}_x$.  Then $V(y_1|x)=V(y_1|x_{y_1})$ for some other column $x_{y_1}\not=x$.  By normalization, there will be another element $V(y_3|x_{y_1})$ in column $x_{y_1}$ that by property (ii) is a unique row maximizer.  Hence, we introduce perturbations
\begin{align}
    V(y_1|x)&\to V(y_1|x)\pm \epsilon& V(y_1|x_{y_1})&\to V(y_1|x_{y_1})\pm\epsilon\notag\\
    V(y_2|x)&\to V(y_2|x)\mp\epsilon &&\notag\\
    &&V(y_3|x_{y_1})&\to V(y_3|x_{y_1})\mp\epsilon.
\end{align}
For clarity, the line spacing is chosen here so that elements on the same vertical line correspond to elements in the same row of $\mbf{V}$.  By properties (i) and (ii), these perturbations do not increase the ML sum, nor are they able to violate any of the other inequalities in \eqref{Eq:appendix-ambiguous-ineq}.

\medskip

\noindent\textbf{Case (c)}: $V(y_1|x)$ and $V(y_2|x)$ are non-extremal elements in column $x$ with $y_1,y_2\in \mc{S}_x$.  Then $V(y_1|x)=V(y_1|x_{y_1})$ and $V(y_2|x)=V(y_2|x_{y_2})$ for some other columns $x_{y_1},x_{y_2}\not=x$ (with possibly $x_{y_1}=x_{y_2})$.  By normalization, there will be elements $V(y_3|x_{y_1})$ and $V(y_4|x_{y_2})$ in columns $x_{y_1}$ and $x_{y_2}$ respectively that are unique row maximizers (again by property (ii)).  Note this requires that $y_1,y_2,y_3,y_4$ are all distinct rows.  Hence, we introduce perturbations
\begin{align}
    V(y_1|x)&\to V(y_1|x)\pm \epsilon& V(y_1|x_{y_1})&\to V(y_1|x_{y_1})\pm\epsilon &&\notag\\
    V(y_2|x)&\to V(y_2|x)\mp\epsilon&&& V(y_2|x_{y_2})&\to V(y_2|x_{y_2})\mp\epsilon\notag\\
    &&V(y_3|x_{y_1})&\to V(y_3|x_{y_1})\mp\epsilon\notag\\
    &&&&V(y_4|x_{y_2})&\to V(y_4|x_{y_2})\pm\epsilon,
\end{align}
Normalization is preserved under these perturbations and all the inequalities in \eqref{Eq:appendix-ambiguous-ineq} are satisfied.  

As we have shown valid perturbations in all three cases under the assumption that some column has non-extremal elements with no unique row maximizer, the proposition follows.

\end{proof}

\begin{proposition}
\label{Prop:full-ML}
If $\mbf{V}$ is an extreme point of $\mc{A}_\cap^{n\to n'}$ satisfying $\phi(\mbf{V})=n-1$, then $\mbf{V}$ is an extreme point of $\mc{C}_d^{n\to n'}$.
\end{proposition}

\begin{proof}
Suppose that $\mbf{V}$ has some column $x_1$ containing more than one nonzero element (if no such column can be found, then the proposition is proven).  Let $V(y_1|x_1)\in (0,1)$ denote a unique row maximizer, which is assured to exist by Proposition \ref{Prop:partial-ML}.  We again proceed by considering two cases.

\medskip

\noindent\textbf{Case (a)}:  Column $x_1$ contains only one row maximizer $V(y_1|x_1)$ and all other elements in the column are not row maximizers.  Then there must exist another column $x_1'$ that also contains at least two nonzero elements.  Indeed, if on the contrary all other columns only had one nonzero element each, then it would be impossible for $\phi(\mbf{V})=n-1$.  If $x_1'$ only contains row maximizers, then proceed to case (b) and replace $x_1$ with $x_1'$.  Otherwise, $x_1'$ does not only contain row maximizers; rather it has a unique row maximizer $V(y_3|x_1')$ in row $y_3$ and a nonzero element $V(y_4|x_1')$ in row $y_4$ that is not a row maximizer.  Thus, we can introduce the valid perturbations
\begin{align}
    V(y_1|x_1)&\to V(y_1|x_1)\pm \epsilon\notag\\
    V(y_2|x_1)&\to V(y_2|x_1)\mp\epsilon\notag\\
    &&V(y_3|x')&\to V(y_3|x_1')\mp\epsilon\notag\\
     &&V(y_4|x')&\to V(y_4|x_1')\mp\epsilon
\end{align}
where $V(y_2|x_1)$ denotes another nonzero element in $x_1$ (with possibly $y_2=y_3,y_4$ and/or $y_3=y_1$).  It can be verified that all inequalities in \eqref{Eq:appendix-ambiguous-ineq} are preserved under these perturbations.

\medskip

\noindent\textbf{Case (b)}:  Column $x_1$ only contains row maximizers, with  $V(y_2|x_1)$ being another one in addition to $V(y_1|x_1)$.  If $V(y_2|x_1)$ is a unique row maximizer, then valid perturbations can be made to both $V(y_1|x_1)$ and $V(y_2|x_1)$.  On the other hand, suppose  that $V(y_2|x_1)$ is a non-unique row maximizer, and let $V(y_2|x_2)=V(y_2|x_1)$ be another row maximizer in column $x_2$.  There can be no other nonzero elements in row $y_2$. Indeed,  if there were another column, say $x_3$, such that $V(y_2|x_3)>0$, then we would have 
\begin{equation}
\label{Eq:ML-to-ambiguous-1}
\frac{1}{2}\Vert\mbf{r}_{y_2}\Vert_1\geq \frac{1}{2}\Big(V(y_2|x_1)+V(y_2|x_2)+V(y_2|x_3)\Big)>V(y_2|x_2)=\Vert\mbf{r}_{y_2}\Vert_\infty,
\end{equation}
and so
\begin{equation}
\label{Eq:ML-to-ambiguous-2}
\langle\mbf{G}_{n'-1,n-1}^{n,n'},\mbf{V}\rangle>\phi(\mbf{V})=n-1,
\end{equation}
where the one ambiguous row in $\mbf{G}_{n'-1,n-1}^{n,n'}$ is $y_2$.  Hence, the only nonzero elements in row $y_2$ are $V(y_2|x_1)$ and $V(y_2|x_2)$.  Let $V(y_3|x_2)$ be a unique row maximizer in column $x_2$.  

We must be able to find another column $x_3$ with more than one nonzero element, one of which is a unique row maximizer and the other which is a non-unique row maximizer.  For if this were not the case, then any other column in $\mbf{V}$ would either have a unique row maximizer equaling one, or it would have at least two elements, one being a unique row maximizer and the others not being row maximizers.  However, the latter possibility was covered in case (a) and was shown to be impossible for an extremal $\mbf{V}$.  For the former, if all then other $n-2$ columns outside of $x_1$ and $x_2$ contain unique row maximizers equaling one, then they would collectively contribute an amount of $n-2$ to the ML sum.  Since every element in column $x_1$ is a row maximizer, and $V(y_3|x_2)$ is a row maximizer in column $x_2$, we would have $\phi(\mbf{V})>(n-2)+1+V(y_3|x_2)>n-1$.  Hence, there must exist another column $x_3$ with a non-unique row maximizer $V(y_5|x_3)$ that is shared with column $x_4$ (which may be equivalent to either $x_1$ or $x_2$).  Letting $V(y_4|x_3)$ and $V(y_6|x_4)$ denote unique row maximizers in columns $x_3$ and $x_4$, respectively, we can perform the valid perturbations
\begin{align}
    V(y_1|x_1)&\to V(y_1|x_1)\pm \epsilon\notag\\
    V(y_2|x_1)&\to V(y_2|x_1)\mp\epsilon&V(y_2|x_2)&\to  V(y_2|x_2)\mp\epsilon\notag\\
    &&V(y_3|x_2)&\to V(y_3|x_2)\pm\epsilon\notag\\
    &&&&V(y_4|x_3)&\to V(y_4|x_3)\pm \epsilon\notag\\
    &&&&V(y_5|x_3)&\to V(y_5|x_3)\mp\epsilon&V(y_5|x_4)&\to  V(y_2|x_2)\mp\epsilon\notag\\
    &&&&&&V(y_6|x_4)&\to V(y_6|x_4)\pm\epsilon.
\end{align}
Note that $y_1,y_2,y_3,y_4,y_5,y_6$ are all distinct rows since each row in $\mbf{V}$ can have at most one pair of non-unique row maximizers while rows $y_1,y_3,y_4,y_6$ contain unique row maximizers.  This assures that the perturbations do not violate the inequalities in \eqref{Eq:appendix-ambiguous-ineq}.

As cases (a) and (b) exhaust all possibilities, we see that $\mbf{V}$ can only have one nonzero element per column.  From this the conclusion of Proposition \ref{Prop:full-ML} follows.
\end{proof}

This completes the proof of Lemma \ref{lem:thm-1-ii}.
\end{proof}

\section{Proof of Theorem \ref{Thm:main2}} \label{appendix-proof-of-thm-2}

\label{Appendix-bit}

In this section we analyze the $\mc{C}_2^{n\to 4}$ signaling polytope to prove the Theorem \ref{Thm:main2}.  To begin we define the polyhedron of channels
\begin{equation}
    \mc{C}(\mbf{G},\gamma):=\{\mbf{P}\in\mc{P}^{n\to n'}\;|\;\langle\mbf{G},\mbf{P}\rangle=\sum_{x=1}^n\sum_{y=1}^{n'}G_{y,x}P(y|x)\leq \gamma\}
\end{equation}
for any Bell inequality $(\mbf{G},\gamma)$ with $\mbf{G}\in \mbb{R}^{n'\times n}$ and $\gamma\in\mbb{R}$.
Since $\mc{C}_d^{n\to n'}$ is a convex polytope, there exists a finite number of polyhedra $\{\mc{C}(\mbf{G}_m,\gamma_m)\}_{m=1}^r$ such that
\begin{align}
\label{Eq:polyheda-bound}
   \mc{C}_d^{n\to n'}=\bigcap_{m=1}^r\mc{C}(\mbf{G}_m,\gamma_m).
\end{align}
\begin{remark}
\label{Remark:non-negative}
Without loss of generality, we can assume that the matrices $\mbf{G}_m$ contain non-negative elements.  Indeed, if $G_{y,x}<0$ is the smallest element in column $x$ of $\mbf{G}_m$, then we replace each element in column $x$ as $G_{y',x} \to G_{y',x}+G_{y,x}$ and shift $\gamma\to \gamma+G_{y,x}$. Hence the smallest element in column $x$ of $\mbf{G}_m$ becomes $G_{y,x} = 0$.
\end{remark}

The proof of Theorem \ref{Thm:main2} is a consequence of Lemmas \ref{Lem:reduce} and \ref{lem:nonzero-column-limit} below and our numerical results for the $\mc{C}_2^{n \to 4}$ signaling polytope \cite{SignalingDimension.jl} (see Fig. \ref{Fig:6-2-4_facets}).
First, by Lemma \ref{Lem:reduce} we can reduce any Bell inequality $(\mbf{G},\gamma)$ bounding $\mc{C}_2^{n \to 4}$ to a new Bell inequality $(\hat{\mbf{G}},\hat{\gamma})$ having at most 2 nonzero elements in each column.
The reduced inequality $(\hat{\mbf{G}}, \hat{\gamma})$ satisfies $\mc{C}_2^{n \to 4}\subset\mc{C}(\hat{\mbf{G}},\hat{\gamma}) \subset (\mbf{G},\gamma)$ and thus bounds $\mc{C}_2^{n \to 4}$ more tightly than $(\mbf{G},\gamma)$.
Next, we use Lemma \ref{lem:nonzero-column-limit} to show that for any integer $n$ a tight Bell inequality of $\mc{C}_2^{n \to 4}$ has at most six nonzero columns.
The presence of all-zero columns implies that this inequality is simply an input lifting of a tight bell inequality of $\mc{C}_2^{6 \to 4}$.
Therefore, the complete set of tight Bell inequalities bounding $\mc{C}_2^{n \to 4}$ is the set of all input  liftings and permutations of the generator facets of $\mc{C}_2^{6 \to 4}$ shown in Fig. \ref{Fig:6-2-4_facets}.

\begin{lemma}
\label{Lem:reduce}
If $\mc{C}_d^{n\to n'}\subset\mc{C}(\mbf{G},\gamma)$, then there exists a polyhedron $\mc{C}(\hat{\mbf{G}},\hat{\gamma})$ with $\hat{\mbf{G}}$ having at most $(n'-d)$ nonzero elements in each column and satisfying 
 \begin{equation}
 \label{Eq:theorem-reduce}
     \mc{C}_d^{n\to n'}\subset \mc{C}(\hat{\mbf{G}},\hat{\gamma})\subset \mc{C}(\mbf{G},\gamma).
 \end{equation}
\end{lemma}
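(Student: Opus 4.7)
The plan is to reduce $\mbf{G}$ column by column via a normalization-preserving shift followed by truncation of negative entries. For each column $x\in[n]$, let $c_x$ denote the $d$-th smallest entry of column $x$ of $\mbf{G}$ (counted with multiplicity), and set
\[\hat{G}_{y,x}:=\max\{G_{y,x}-c_x,\,0\},\qquad \hat{\gamma}:=\gamma-\sum_{x=1}^n c_x.\]
By the choice of $c_x$, at most $d-1$ entries of column $x$ lie strictly below $c_x$, so shifting makes at least $d$ entries nonpositive and truncation sends them to zero. Hence $\hat{\mbf{G}}$ has at most $n'-d$ nonzero elements per column, which is the required sparsity.

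For the inclusion $\mc{C}(\hat{\mbf{G}},\hat{\gamma})\subset\mc{C}(\mbf{G},\gamma)$, note that $\hat{G}_{y,x}\geq G_{y,x}-c_x$ entrywise, so for any column-stochastic $\mbf{P}\in\mc{P}^{n\to n'}$ we have $\langle\hat{\mbf{G}},\mbf{P}\rangle\geq\langle\mbf{G},\mbf{P}\rangle-\sum_x c_x$. Therefore $\langle\hat{\mbf{G}},\mbf{P}\rangle\leq\hat{\gamma}$ immediately implies $\langle\mbf{G},\mbf{P}\rangle\leq\gamma$.

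The harder inclusion is $\mc{C}_d^{n\to n'}\subset\mc{C}(\hat{\mbf{G}},\hat{\gamma})$: raising negative entries to zero strengthens the inequality, so a priori it could fail on the signaling polytope. By convexity it suffices to check vertices. Let $\mbf{V}\in\mc{V}_d^{n\to n'}$ correspond to a function $f:[n]\to[n']$ whose image $S$ has $|S|\leq d$, and extend $S$ to a $d$-element set $S'\supset S$. The crucial pigeonhole step is that, because $|\{y:G_{y,x}<c_x\}|\leq d-1$ and $|S'|=d$, there must exist $y^{*}\in S'$ with $G_{y^{*},x}\geq c_x$; thus $\max_{y\in S'}G_{y,x}\geq c_x$ and
\[\max_{y\in S'}\hat{G}_{y,x}=\max_{y\in S'}\max\{G_{y,x}-c_x,0\}=\max_{y\in S'}G_{y,x}-c_x.\]
Since $f(x)\in S\subset S'$, this yields
\[\langle\hat{\mbf{G}},\mbf{V}\rangle=\sum_x\hat{G}_{f(x),x}\leq\sum_x\max_{y\in S'}\hat{G}_{y,x}=\sum_x\max_{y\in S'}G_{y,x}-\sum_x c_x,\]
and the first sum on the right equals $\langle\mbf{G},\mbf{V}'\rangle$ for the $S'$-supported vertex $\mbf{V}'\in\mc{V}_d^{n\to n'}$ that in each column selects a row attaining $\max_{y\in S'}G_{y,x}$. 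The hypothesis $\mc{C}_d^{n\to n'}\subset\mc{C}(\mbf{G},\gamma)$ therefore gives $\langle\mbf{G},\mbf{V}'\rangle\leq\gamma$, and hence $\langle\hat{\mbf{G}},\mbf{V}\rangle\leq\hat{\gamma}$.

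The delicate piece is precisely this pigeonhole bound $\max_{y\in S'}G_{y,x}\geq c_x$: it is what couples the sparsification threshold to the rank-$d$ support constraint of signaling-polytope vertices, and it forces $c_x$ to be exactly the $d$-th smallest entry of column $x$ -- any larger choice would break this step, while any smaller one would leave fewer than $d$ zeros per column.
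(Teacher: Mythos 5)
Your construction of $\hat{\mbf{G}}$ --- subtracting the $d$-th smallest entry $c_x$ of each column and truncating at zero --- coincides exactly with the paper's, which sorts each column, zeroes out the $d$ smallest entries, and shifts the remainder by the $d$-th smallest. The verification that $\mc{C}_d^{n\to n'}\subset\mc{C}(\hat{\mbf{G}},\hat{\gamma})$ is also the same in spirit: your pigeonhole argument on a $d$-element superset $S'$ of the vertex support compactly packages the paper's case split on whether the column-$x$ support index $d'$ satisfies $d'\leq d$ or $d'>d$, both resting on the fact that among any $d$ rows at least one has column-$x$ entry $\geq c_x$; the proof is correct.
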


\begin{proof}
Suppose  $\mc{C}_d^{n\to n'}\subset\mc{C}(\mbf{G},\gamma)$ and consider an arbitrary $x\in[n]$.  For convenience, let us relabel the elements of the $x^{th}$ column of $\mbf{G}$ in non-increasing order; \textit{i.e} $G_{y,x}\geq G_{y+1,x}$.  Every vertex $\mbf{V}$ of $\mc{C}_{d}^{n\to n'}$ will satisfy

\begin{align} \label{Eq:Vertex-bound}
     \gamma\geq \sum_{x',y}G_{y,x'}V(y|x')&=\sum_y G_{y,x}V(y|x)+\sum_{x'\not=x,y}G_{y,x'}V(y|x')\notag\\
     &=\sum_y G_{y,x}V(y|x)+f(\mbf{G},\mbf{V},x),
\end{align}

\noindent where $f(\mbf{G},\mbf{V},x):=\sum_{x'\not=x,y}G_{y,x'}V(y|x')$.
A key observation is

\begin{align}
 \label{Eq:Vertex-bound-main}
     \gamma\geq G_{d,x} +f(\mbf{G},\mbf{V},x) \quad\text{for every vertex $\mbf{V}$ of $\mc{C}_d^{n\to n'}$}.
\end{align}

We prove this observation using Eq. \eqref{Eq:Vertex-bound}.  First consider any vertex $\mbf{V}$ such that $V(y|x)=\delta_{d'y}$ with $d'\geq d$.  Then Eq. \eqref{Eq:Vertex-bound} shows that $\gamma\geq G_{d',x}+f(\mbf{G},\mbf{V},x)\geq G_{d,x}+f(\mbf{G},\mbf{V},x)$, since we have labeled the elements in non-increasing order.  On the other hand, consider a vertex $\mbf{V}$ for which $V(y|x)=\delta_{d'y}$ with $d'<d$.  Since vertices can be formed with $d$ nonzero rows, we can choose another vertex $\mbf{V}'$ that is identical to $\mbf{V}$ in all columns $x'\not=x$, and yet for column $x$ it satisfies $V'(y|x)=\delta_{d''y}$ with $d\leq d''$.  Hence applying Eq. \eqref{Eq:Vertex-bound} to vertex $\mbf{V}'$ yields
 \begin{equation}
     \gamma\geq G_{d'',x}+f(\mbf{V}',x)\geq G_{d,x}+f(\mbf{G},\mbf{V}',x)=G_{d,x}+f(\mbf{G},\mbf{V},x),
 \end{equation}
 where the last line follows from the fact that $\mbf{V}$ and $\mbf{V}'$ only differ in column $x$.
 
 Having established Eq. \eqref{Eq:Vertex-bound-main}, we next form a new matrix $\hat{\mbf{G}}$ which is obtained from $\mbf{G}$ by replacing its $x^{th}$ column with
 \begin{equation}
    (\hat{G}_{y,x})^T_y:= (\overbrace{0,0\cdots,0}^d,G_{d+1,x}-G_{d,x},\cdots,G_{n',x}-G_{d,x})^T.
 \end{equation}
 Letting $\hat{\gamma}=\gamma-G_{d,x}$, for any vertex $\mbf{V}$ we have
 \begin{align}
     \sum_{x',y}\hat{G}_{y,x'}V(y|x')&=\sum_y \hat{G}_{y,x}V(y|x)+f(\mbf{G},\mbf{V},x)\notag\\
     &=\begin{cases}f(\mbf{G},\mbf{V},x)\quad\qquad\qquad\qquad\text{if $V(y|x)=\delta_{d'y}$ with $d'\leq d$}\\
     G_{d',x}-G_{d,x}+f(\mbf{G},\mbf{V},x)\quad\text{if $V(y|x)=\delta_{d'y}$ with $d'>d$}
     \end{cases}\notag\\
     &\leq \hat{\gamma},
 \end{align}
 where the last inequality follows from Eq. \eqref{Eq:Vertex-bound-main} (in the first case) and Eq. \eqref{Eq:Vertex-bound} (in the second case).  Hence, we have that $\mc{C}_d^{n\to n'}\subset \mc{C}(\hat{\mbf{G}},\hat{\gamma})$.  Conversely, if $\mbf{P}\in  \mc{C}(\hat{\mbf{G}},\hat{\gamma})$, then
 \begin{align}
     \gamma-G_{d,x}\geq \sum_{x',y}\hat{G}_{y,x'}P(y|x')&=\sum_y\hat{G}_{y,x}P(y|x)+\sum_{x'\not=x,y}\hat{G}_{y,x'}P(y|x')\notag\\
     &=\sum_{y=d+1}^{n'}(G_{y,x}-G_{d,x})P(y|x)+\sum_{x'\not=x,y}\hat{G}_{y,x'}P(y|x')\notag\\
     &=-G_{d,x}(1-\sum_{y=1}^{d}P(y|x))+\sum_{y=d+1}^{n'}G_{y,x}P(y|x)+\sum_{x'\not=x,y}\hat{G}_{y,x'}P(y|x')\notag\\
     &\geq -G_{d,x}+\sum_{y=1}^{n'}G_{y,x}P(y|x)+\sum_{x'\not=x,y}\hat{G}_{y,x'}P(y|x')\notag\\
     &=-G_{d,x}+\sum_{x',y}G_{y,x'}P(y|x').
 \end{align}
 Therefore, $\mbf{P}\in\mc{C}(\mbf{G},\gamma)$ and so $\mc{C}_d^{n\to n'}\subset \mc{C}(\hat{\mbf{G}},\hat{\gamma})\subset\mc{C}(\mbf{G},\gamma)$.  Note that if $\mbf{G}$ has only non-negative elements then so will $\hat{\mbf{G}}$.
 
 \end{proof}

\begin{lemma}
 \label{lem:nonzero-column-limit}
 For any finite number of inputs $n$,
 \begin{equation}
 \mc{C}_2^{n\to 4}=\bigcap_{m=1}^s\mc{C}(\mbf{G}_m,\gamma_m)
 \end{equation}
 with each $\mbf{G}_m$ having at most six nonzero columns.
 \end{lemma}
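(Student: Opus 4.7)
The plan is to reduce general Bell inequalities bounding $\mc{C}_2^{n\to 4}$ to a normal form and then show, via a pigeonhole and perturbation argument, that every facet of $\mc{C}_2^{n\to 4}$ has at most six nonzero columns.

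First, I would apply Lemma \ref{Lem:reduce} with $d=2$ and $n'=4$, together with Remark \ref{Remark:non-negative}, to restrict attention to Bell inequalities $(\mbf{G},\gamma)$ in which every column of $\mbf{G}$ has at most two nonzero non-negative entries. Each nonzero column then has a support pattern equal to a nonempty subset $S\subset[4]$ with $|S|\leq 2$, giving $\binom{4}{1}+\binom{4}{2}=10$ possible nonzero patterns.

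Second, it suffices to prove the bound for facets of $\mc{C}_2^{n\to 4}$ in this normal form: any facet with $c$ nonzero columns corresponds, via input-lifting, to a facet of $\mc{C}_2^{c\to 4}$ whose columns are all nonzero, since the $n-c$ trivially-zero columns contribute $3(n-c)$ independent free directions to the tight vertex set. So the problem reduces to showing that, for $c\geq 7$, no facet of $\mc{C}_2^{c\to 4}$ in normal form has all $c$ columns nonzero.

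Third, I would argue by contradiction: assume such a facet exists for some $c\geq 7$. By pigeonhole on the ten support patterns, two nonzero columns $x_1, x_2$ of $\mbf{G}$ share a common support $S\subset[4]$. This shared support imposes a linear dependence among the tight vertices: for every tight vertex $\mbf{V}$ whose active row pair $\{z_1,z_2\}$ meets $S$, the optimal row choices at $x_1$ and $x_2$ are both determined by the common entries of $\mbf{G}$ on $S$, forcing a relation of the form $V(y|x_1)-V(y|x_2)=0$ (if the columns' preferred rows agree) or $V(y|x_1)-V(y'|x_2)=0$ (if they disagree). Combined with $\langle\mbf{G},\mbf{V}\rangle=\gamma$, this cuts the affine span of the tight vertices to codimension at least two, contradicting the facet property and allowing us to apply Proposition \ref{Prop:extremal} to produce a valid perturbation of $\mbf{G}$.

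The main obstacle is strengthening this argument from the naive bound of ten (the pigeonhole bound on distinct supports) to the sharp bound of six asserted by the lemma. This requires showing that even when all nonzero columns have distinct supports, certain configurations---for example, three size-$2$ supports sharing a common row combined with size-$1$ supports in the remaining rows---still force additional linear dependencies among tight vertices. The number six emerges naturally from the $\binom{4}{2}=6$ two-row sets $Z\subset[4]$ parametrizing vertex types in $\mc{C}_2^{c\to 4}$: this six-dimensional vertex-type space controls the independent constraints a facet can impose, so at most six linearly independent column "profiles" $x\mapsto (\max_{y\in Z} G_{y,x})_{|Z|=2}$ can appear in any facet. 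Establishing this fine bound is the combinatorial heart of the lemma and will require a careful case analysis over the support patterns and their row interactions.
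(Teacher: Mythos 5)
Your first step (invoking Lemma~\ref{Lem:reduce} with $d=2$, $n'=4$, plus the non-negativity normalization) matches the paper exactly. After that, the two arguments diverge and yours has a real gap that you partly acknowledge but also partly misdiagnose.

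The pigeonhole on the ten admissible support patterns only forces a repeated support when there are at least eleven nonzero columns, so it cannot by itself yield the bound of six, and the mechanism you propose for bridging that gap does not work as described. Two columns $x_1,x_2$ with the same two-row support $S$ do \emph{not} impose a relation $V(y|x_1)=V(y|x_2)$ on tight vertices: a vertex may activate different rows of $S$ in $x_1$ and in $x_2$ (and vertices whose active row pair only partially meets $S$, or misses it, contribute unconstrained directions). So there is no reason the tight set drops to codimension two, and therefore no contradiction with facet-ness. Proposition~\ref{Prop:extremal} is also the wrong tool here — it certifies non-extremality of a \emph{channel} $\mbf{P}$ inside a polytope, not the redundancy of an \emph{inequality} $(\mbf{G},\gamma)$ — so ``a valid perturbation of $\mbf{G}$'' is not something it produces. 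In short, the argument by contradiction does not go through, and you correctly flag that the passage from ten to six is missing; that missing piece is exactly the content of the lemma.

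The paper proceeds very differently: it makes no attempt to argue about facets or dimension-counting of tight vertex sets. Instead, after the Lemma~\ref{Lem:reduce} reduction, it gives two explicit column-merging moves on Bell inequalities (Propositions~\ref{Prop:reduction-1} and~\ref{Prop:reduction-2}) that collapse any two columns sharing a two-row support to a single two-entry column plus a one-entry column, and split one-entry columns sharing the same row. This brings every valid inequality, while preserving validity for $\mc{C}_2^{n\to 4}$, down to at most the six ``pair'' columns (one per two-element subset of $[4]$) plus four ``singleton'' columns. A final absorption step — adding each singleton entry into a pair column whose maximum sits in that row, with a dedicated $+2g$ shift of $\gamma$ when at most three absorptions are available — eliminates the singleton columns, leaving at most six nonzero columns. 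Your intuition that $\binom{4}{2}=6$ is the right combinatorial quantity is correct, but it enters as the number of distinct two-row supports after merging, not as a dimension bound on ``column profiles.'' If you want to salvage your approach, the thing to prove is not that a seven-column facet contradicts dimension counting, but that any seven-column valid inequality is \emph{dominated} (in the sense of the sandwich $\mc{C}_2^{n\to 4}\subset\mc{C}(\hat{\mbf{G}},\hat{\gamma})\subset\mc{C}(\mbf{G},\gamma)$) by inequalities with fewer columns — and that is precisely what the paper's merging moves accomplish constructively.
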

 \begin{proof}
 As a consequence of Lemma \ref{Lem:reduce}, we can always find a complete set of polyhedra $\{\mc{C}(\hat{\mbf{G}}_m,\hat{\gamma}_m)\}_{m=1}^s$ such that
 \[\mc{C}_2^{n\to 4}=\bigcap_{m=1}^s\mc{C}(\hat{\mbf{G}}_m,\hat{\gamma}_m)\]
 such that each $\hat{\mbf{G}}_m$ has no more than positive elements in each column and the rest being zero.  Our goal is to show that the number of such columns can be reduced to six.  The key steps in our reduction are given by the following two propositions.
 
   \begin{proposition}
 \label{Prop:reduction-1}
     Consider the matrices 
\begin{align}
\hat{\mbf{G}}= \begin{bmatrix} a&b&\cdot&\cdots\\c&d&\cdot&\cdots\\
     0&0&\cdot&\cdots\\
     0&0&\cdot&\cdots\end{bmatrix},\qquad   \hat{\mbf{G}}'= \begin{bmatrix} a-c&b+c&\cdot&\cdots\\0&d+c&\cdot&\cdots\\
     0&0&\cdot&\cdots\\
     0&0&\cdot&\cdots\end{bmatrix},\qquad a\geq c\geq 0,
\end{align}
which differ only in the first two columns.  Then $\mc{C}_{2}^{n\to 4}\in \mc{C}(\hat{\mbf{G}},\hat{\gamma})$ \textit{iff} $\mc{C}_{2}^{n\to 4}\in \mc{C}(\hat{\mbf{G}}',\hat{\gamma})$.
 \end{proposition}
 \begin{proof}
 Every vertex $\mbf{V}$ of $\mc{C}_{2}^{n\to 4}$ will have support in only two rows.  If $\mbf{V}$ has support in the first two rows, then its upper left corner will have one of the forms $\left(\begin{smallmatrix}1&1\\0&0\end{smallmatrix}\right)$, $\left(\begin{smallmatrix}1&0\\0&1\end{smallmatrix}\right)$, $\left(\begin{smallmatrix}0&1\\1&0\end{smallmatrix}\right)$, $\left(\begin{smallmatrix}0&0\\1&1\end{smallmatrix}\right)$.  In each of these cases, $\langle \hat{\mbf{G}},\mbf{V}\rangle\leq\hat{\gamma}$ $\Leftrightarrow$ $\langle \hat{\mbf{G}}',\mbf{V}\rangle\leq\hat{\gamma}$.  
 
 The other possibility is that $\mbf{V}$ has support in only one of the first two rows.  This leads to upper left corners of the form $\left(\begin{smallmatrix}1&1\\0&0\end{smallmatrix}\right)$, $\left(\begin{smallmatrix}0&0\\1&1\end{smallmatrix}\right)$, $\left(\begin{smallmatrix}1&0\\0&0\end{smallmatrix}\right)$, $\left(\begin{smallmatrix}0&1\\0&0\end{smallmatrix}\right)$, $\left(\begin{smallmatrix}0&0\\1&0\end{smallmatrix}\right)$, $\left(\begin{smallmatrix}0&0\\0&1\end{smallmatrix}\right)$.  Suppose now that $\mc{C}_{2}^{n\to 4}\in \mc{C}(\hat{\mbf{G}},\hat{\gamma})$.  If a vertex $\mbf{V}$ of $\mc{C}_{2}^{n\to 4}$ has form $\left(\begin{smallmatrix}1&0\\0&0\end{smallmatrix}\right)$ in the upper left corner, non-negativity of $c$ implies that $\langle \hat{\mbf{G}}',\mbf{V}\rangle\leq\hat{\gamma}$.  A somewhat less trivial case is any vertex $\mbf{V}_1$ having form $\left(\begin{smallmatrix}0&1\\0&0\end{smallmatrix}\right)$ in the upper left corner.  Here we need to use the fact that there exists a vertex $\mbf{V}_2$ with $\left(\begin{smallmatrix}1&1\\0&0\end{smallmatrix}\right)$ in the upper left corner but is identical to $\mbf{V}_1$ in all other columns.  Hence we have
 
\begin{align}
    \hat{\gamma}&\geq \langle \hat{\mbf{G}},\mbf{V}_2\rangle= a+b+\kappa\quad\Rightarrow\quad \langle \hat{\mbf{G}}',\mbf{V}_1\rangle=b+c+\kappa\leq a+b+\kappa\leq\hat{\gamma},
\end{align}

\noindent where $\kappa$ is the contribution of the other columns to the inner product, and we have used the assumption that $a\geq c$.  Similar reasoning shows that $\langle \hat{\mbf{G}}',\mbf{V}\rangle\leq\hat{\gamma}$ for all other vertices $\mbf{V}$.  Conversely, by an analogous case-by-case consideration, we can establish that $\mc{C}_{2}^{n\to 4}\in\mc{C}(\hat{\mbf{G}}',\hat{\gamma})$ implies  $\langle \hat{\mbf{G}},\mbf{V}\rangle\leq\hat{\gamma}$ for all vertices $\mbf{V}$ of $\mc{C}_{2}^{n\to 4}$.

\end{proof}

\begin{proposition}
 \label{Prop:reduction-2}
 Consider the matrices
 \begin{align}
     \hat{\mbf{G}}= \begin{bmatrix} a&b&\cdot&\cdots\\0&0&\cdot&\cdots\\
     0&0&\cdot&\cdots\\
     0&0&\cdot&\cdots\end{bmatrix},\qquad   \hat{\mbf{G}}'= \begin{bmatrix} a+b&0&\cdot&\cdots\\0&0&\cdot&\cdots\\
     0&0&\cdot&\cdots\\
     0&0&\cdot&\cdots\end{bmatrix},\qquad\hat{\mbf{G}}''= \begin{bmatrix} 0&a+b&\cdot&\cdots\\0&0&\cdot&\cdots\\
     0&0&\cdot&\cdots\\
     0&0&\cdot&\cdots\end{bmatrix}.
 \end{align}
 which differ only in the first two columns.  Then $\mc{C}_{2}^{n\to 4}\in \mc{C}(\hat{\mbf{G}},\hat{\gamma})$ \textit{iff} $\mc{C}_{2}^{n\to 4}\in \mc{C}(\hat{\mbf{G}}',\hat{\gamma})\cap\mc{C}(\hat{\mbf{G}}'',\hat{\gamma})$.
 \end{proposition}
 \begin{proof}
 This proof considers the vertices of $\mc{C}_2^{n\to 4}$ and applies the same reasoning as the proof of Proposition \ref{Prop:reduction-1}.
 \end{proof}
 
Continuing with the proof of Lemma \ref{lem:nonzero-column-limit}, suppose that $\mc{C}_2^{n\to 4}\in\mc{C}(\hat{\mbf{G}}_m,\hat{\gamma}_m)$ with each column of $\hat{\mbf{G}}_m$ having no more than two nonzero rows.   We can group the columns into six groups according to which two rows have zero (it may be that a column has more than two zeros, in which case we just select one group to place it in).  By repeatedly applying Proposition \ref{Prop:reduction-1}, we can replace $\hat{\mbf{G}}_m$ with a matrix $\hat{\mbf{G}}'_m$ such that each group has at most one column with two nonzero elements; the rest of the columns in that group have at most just one nonzero element.  We then repeatedly apply Proposition \ref{Prop:reduction-2} to remove multiple columns with the same single nonzero row.  In the end, we arrive at the following:
 \begin{equation}
     \mc{C}_{2}^{n\to 4}\in \mc{C}(\hat{\mbf{G}}_m,\hat{\gamma}_m) \qquad \Leftrightarrow \qquad  \mc{C}_{2}^{n\to 4}\in \bigcap_{j}\mc{C}(\hat{\mbf{G}}_{m,j},\hat{\gamma}_m),
 \end{equation}
 where each $\hat{\mbf{G}}_{m,j}$ has at most ten nonzero columns corresponding to the different ways that no more than two nonzero elements can occupy a column.  That is, up to a permutation of columns, each $\hat{\mbf{G}}_{m,j}$ will have the form
 \begin{equation}
     \hat{\mbf{G}}_{m,j}=\begin{bmatrix}
        a_1 & b_1 & c_1 & 0 & 0 & 0 & g & 0 & 0 & 0 & 0 & \cdots \\
        a_2 & 0 & 0 & d_1 & e_1 & 0 & 0 & h & 0 & 0 & 0 & \cdots \\
        0 & b_2 & 0 & d_2 & 0 & f_1 & 0 & 0 & i & 0 & 0 & \cdots \\
        0 & 0 & c_2 & 0 & e_2 & f_2 & 0 & 0 & 0 & j & 0 & \cdots \\
    \end{bmatrix}.
 \end{equation}
 The final step is to remove the block of diagonal elements $[g,h,i,j]$.  To do this, observe that we absorb any of these diagonal elements into an earlier column, provided that the row contains the largest element in that column.  For example, if $f_2> f_1$, then we can replace $\hat{\mbf{G}}_{m,j}$ with
 \begin{align}
     \hat{\mbf{G}}_{m,j}'=\begin{bmatrix}a_1&b_1&c_1&0&0&0&g&0&0&0&0&\cdots\\a_2&0&0&d_1&e_1&0&0&h&0&0&0&\cdots\\0&b_2&0&d_2&0&f_1&0&0&i&0&0&\cdots\\0&0&c_2&0&e_2&f_2+j&0&0&0&0&0&\cdots\end{bmatrix},
 \end{align}
 and we can easily see that $\mc{C}_{2}^{n\to 4}\in \mc{C}(\hat{\mbf{G}}_{m,j},\hat{\gamma}_m)$ \textit{iff} $\mc{C}_{2}^{n\to 4}\in \mc{C}(\hat{\mbf{G}}'_{m,j},\hat{\gamma}_m)$.  By considering the maximum element in each of the first six columns, we can perform this replacement for at least three of the four elements $[g,h,i,j]$.  If we can do this for all four elements, then the proof is complete.  On the other hand, if we can only remove three of these elements, then we will obtain a matrix $\hat{\mbf{G}}''_{m,j}$ of the form (up to row/column permutations)
  \begin{equation}
     \hat{\mbf{G}}''_{m,j}=\begin{bmatrix}a_1&b_1&c_1&0&0&0&g&0&0&0&0&\cdots\\a_2&0&0&d_1&e_1&0&0&0&0&0&0&\cdots\\0&b_2&0&d_2&0&f_1&0&0&0&0&0&\cdots\\0&0&c_2&0&e_2&f_2&0&0&0&0&0&\cdots\end{bmatrix}
 \end{equation}
 with $a_1,b_1,c_2$ not having the largest values in their respective columns.  In this case, we construct the matrix
   \begin{equation}
     \hat{\mbf{G}}'''_{m,j}=\begin{bmatrix}a_1+g&b_1+g&c_1+g&0&0&0&0&0&0&0&0&\cdots\\a_2+g&0&0&d_1&e_1&0&0&0&0&0&0&\cdots\\0&b_2+g&0&d_2&0&f_1&0&0&0&0&0&\cdots\\0&0&c_2+g&0&e_2&f_2&0&0&0&0&0&\cdots\end{bmatrix},
 \end{equation}
 from which it can be verified that $\mc{C}_2^{n\to 4}\subset \mc{C}(\hat{\mbf{G}}''_{m,j},\hat{\gamma}_m)$ \textit{iff} $\mc{C}_2^{n\to 4}\subset \mc{C}(\hat{\mbf{G}}'''_{m,j},\hat{\gamma}_m+2g)$.
 
 \end{proof}

\section{Proof of Theorem \ref{thm-replacer-bounds}} \label{Appendix-replacer}

In this section we provide two propositions that support the proof of Theorem \ref{thm-replacer-bounds}. Recall that a $d$-dimensional partial replacer channel is a quantum channel having the form

\begin{equation}
\label{Eq:replacer-channel-appendix}
    \mc{R}_\mu(X)=\mu X+(1-\mu)\tr[X]\sigma,
\end{equation}

\noindent where $1 \geq \mu \geq 0$, $\sigma$ is some fixed density matrix, and $X$ is a quantum state on a $d$-dimensional Hilbert space.  Note that the partial erasure channel $\mc{E}_\mu$ corresponds to $\sigma$ being an erasure flag $\op{E}{E}$, where $\ket{E}$ is orthogonal to $\{\ket{1},\cdots,\ket{d}\}$.
We first show that the lower bound of $\kappa(\mc{R}_\mu) \geq \lceil \mu d + (1-\mu)\rceil$ (see Eq. \eqref{Eq:replacer-bounds}) is not improved by any choice of states $\{\rho_x\}_x$, POVM $\{\Pi_y\}_y$, or ambiguous guessing game $\mbf{G}_{n',d}^{n,n'}$ with $k=n'$.

\begin{proposition} \label{prop-replacer-ml-bound}
    The maximum likelihood score for any classical channel $\mbf{P}_{\mc{R}_\mu}$ generated using a partial replacer channel $\mc{R}_\mu$ is bounded as
    
    \begin{equation}
        \langle\mbf{G}_{\ML},\mbf{P}_{\mc{R}_\mu}\rangle \leq \mu d + (1- \mu)
    \end{equation}
    
    \noindent where $\mbf{G}_{\ML}$ is any maximum likelihood facet satisfying Proposition \ref{prop-ambiguous-guessing-facets}(i).
    
    \begin{proof}
        In this proof, we first consider the unlifted maximum likelihood $\mbf{G}_{\text{ML}}^{n'} = \mbb{I}_{n'}$ where $n=n'$ (see Appendix \ref{Appendix-ml-facets}), and then generalize across all input/output liftings taking $\mbf{G}^{n'}_{\text{ML}}\to \mbf{G}_{\text{ML}}\in\mbb{R}^{m'\times m}$ where $m',m \geq n'$.
        To begin we maximize $\langle\mbf{G}^{n'}_{\text{ML}}, \mbf{P}_{\mc{R}_\mu}\rangle$ over the quantum states $\{\rho_x\}_x$ and POVM $\{\Pi_y\}_y$,
        
        \begin{align}
            \max\;\langle\mbf{G}^{n'}_{\text{ML}}, \mbf{P}_{\mc{R}_\mu}\rangle & = \max_{\{\rho_x\}_x,\{\Pi_y\}_y}\;\sum_{x=y}\tr\Big[\Pi_y\mc{R}_\mu\big(\rho_x\big) \Big]\\
            & = \max_{\{\rho_x\}_x,\{\Pi_y\}_y}\;\sum_{x=y} \mu \tr\Big[\Pi_y \rho_x\Big] + (1-\mu) \tr\Big[\Pi_y \sigma\Big] \\
            & \leq \max_{\{\Pi_y\}_y}\;\sum_{y} \mu \tr\Big[\Pi_y\Big] + (1-\mu) \tr\Big[\Pi_y \sigma\Big] \label{Eq:replacer-ml-lower-bound-proof-expanded}\\
            & = \mu d + (1 - \mu),\label{Eq:replacer-ml-lower-bound-proof}
        \end{align}
        
        \noindent where line \eqref{Eq:replacer-ml-lower-bound-proof-expanded} uses the fact that $\tr[\Pi_y \rho_x] \leq \tr[\Pi_y ]$ for any choice of $\Pi_y$ and $\rho_x$ while
        line \eqref{Eq:replacer-ml-lower-bound-proof} results from $\sum_y \tr[\Pi_y] = d$ and $\sum_y \tr[\Pi_y \sigma] = \tr[\sigma] = 1$.
        A simple example that achieves this bound is the scenario where Alice sends orthogonal states $\{\op{x}{x}\}_{x=1}^d$
        and Bob measures with a similar POVM $\{\op{y}{y}\}_{y=1}^d $, then
        
        \begin{align}
            \langle\mbf{G}^{d}_{\text{ML}},\mbf{P}_{\mc{R}_\mu}\rangle & = \sum_{y=x=1}^d \tr\Big[\op{y}{y}\mc{R}_\mu\big(\op{x}{x}\big) \Big]\\
            & = \mu \sum_{y=1}^d \tr\Big[\ket{y}\ip{y}{y}\bra{y} \Big] + (1-\mu)\sum_{y=1}^d \tr\Big[\op{y}{y}\sigma\Big] \\
            & = \mu d+(1-\mu).
        \end{align}
        
        \noindent In general, the upper bound is achieved whenever $\Pi_y \rho_x = \Pi_y$ for all $x\in[n]$ and $y\in[n']$.
        Note that this requires $\rank(\Pi_y)=\rank(\rho_x)=1$ and $\Pi_y || \rho_x$.
        
        To extend the bound $\langle\mbf{G}^{n'}_{\text{ML}}, \mbf{P}_{\mc{R}_\mu}\rangle \leq \mu d + (1-\mu)$ to all liftings of $\mbf{G}_{\text{ML}}^{n'}$, we make two observations.
        First, note that the input lifting taking $\mbf{G}_{\text{ML}}^{n'}\to \mbf{G}_{\text{ML}}^{\prime \prime }\in\mbb{R}^{n'\times m}$ contains $(m -n)$ all-zero columns.
        These all-zero columns of $\mbf{G}_{\text{ML}}^{\prime \prime }$ do not contribute to the inner product $\langle \mbf{G}_{\text{ML}}^{\prime \prime},\mbf{P}_{\mc{R}_\mu}\rangle$, and therefore, cannot increase the inner product beyond $\mu d + (1-\mu)$.
        Second, observe that the output lifting taking $\mbf{G}_{\text{ML}}^{n'}\to \mbf{G}_{\text{ML}}^{\prime}\in\mbb{R}^{(n'+1)\times n}$ requires a new POVM $\{\Pi'_{y}\}_{y=1}^{n'+1}$ which must satisfy $\sum_{y=1}^{n'+1} \Pi'_y = \mbb{I}_d$.
        Furthermore, one column $x$ of $\mbf{G}_{\text{ML}}^{\prime}$ has two nonzero elements in rows $y$ and $y'$ where $G'_{y,x}=G'_{y',x}=1$.
        In this case, two POVM elements $\Pi'_y$ and $\Pi'_{y'}$ are both optimized against the state $\rho_x$.
        However, the constraint $\tr[\Pi'_{y'}\rho_x]+\tr[\Pi'_y\rho_x] \leq 1$ holds for any choice of $\rho_x$ and POVM.
        Therefore, the inner product $\langle \mbf{G}_{\text{ML}}^{\prime},\mbf{P}_{\mc{R}_\mu}\rangle \leq \mu d + (1-\mu)$.
        The argument applied for the output lifting holds in general where one or more columns $x$ contain at least two non-zero elements.
        Thus, the upper bound in Eq. \eqref{Eq:replacer-ml-lower-bound-proof} holds for any input/output lifting taking $\mbf{G}_{\text{ML}}^{n'} \to \mbf{G}_{\text{ML}}\in\mbb{R}^{m'\times m}$ where $\min\{m,m'\}\geq n'$. This concludes the proof.
    \end{proof}
\end{proposition}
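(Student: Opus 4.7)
The plan is to decompose $P_{\mc{R}_\mu}(y|x)$ using the defining form of the partial replacer channel and bound the two resulting pieces separately. By Proposition \ref{prop-ambiguous-guessing-facets}(i), any ML facet $\mbf{G}_{\ML}$ is a matrix obtained from an $m\times m$ identity $\mbb{I}_m$ (with $m > d$) by row/column permutations together with input and output liftings. The crucial structural feature preserved under these operations is that every row of $\mbf{G}_{\ML}$ is a $0/1$ stochastic vector, i.e.\ contains exactly one $1$ and otherwise zeros, so in particular each row sums to $1$. This is the only property I plan to exploit.

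Substituting $\mc{R}_\mu(\rho_x) = \mu\rho_x + (1-\mu)\tr[\rho_x]\sigma$ into $P_{\mc{R}_\mu}(y|x) = \tr[\Pi_y \mc{R}_\mu(\rho_x)]$, the inner product splits as
\begin{equation}
\langle \mbf{G}_{\ML}, \mbf{P}_{\mc{R}_\mu}\rangle = \mu\sum_{y,x} G_{y,x}\tr[\Pi_y\rho_x] + (1-\mu)\sum_{y,x} G_{y,x}\tr[\Pi_y\sigma].
\end{equation}
For the first sum I would use that, because each row of $\mbf{G}_{\ML}$ has a single $1$ in some column $x_y$, the $y$-th contribution collapses to $\tr[\Pi_y\rho_{x_y}] \leq \tr[\Pi_y]$, the inequality following from $\rho_{x_y} \leq \mbb{I}_d$ and $\Pi_y \geq 0$. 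Summing over $y$ and using the POVM normalization $\sum_y \Pi_y = \mbb{I}_d$ yields $\sum_y \tr[\Pi_y] = d$, which bounds the first sum by $\mu d$. For the second sum, factoring out $\tr[\Pi_y\sigma]$ and using that rows of $\mbf{G}_{\ML}$ sum to one gives $\sum_y \tr[\Pi_y\sigma] = \tr[\sigma] = 1$, so the second term is exactly $(1-\mu)$.

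The only subtlety I anticipate is verifying that input and output liftings do not disturb the $0/1$ row-stochastic structure of the generating identity matrix. Input liftings simply append all-zero columns, which does not affect any row, and output liftings duplicate rows, each copy inheriting the single-$1$ structure from its parent. Hence the row-based argument applies uniformly to every ML facet, and combining the two bounds gives $\langle \mbf{G}_{\ML}, \mbf{P}_{\mc{R}_\mu}\rangle \leq \mu d + (1-\mu)$. Note that no specific choice of $\{\rho_x\}_x$ or $\{\Pi_y\}_y$ plays a role beyond the positivity and normalization constraints, so the bound holds uniformly, which is exactly the content of Proposition \ref{prop-replacer-ml-bound}.
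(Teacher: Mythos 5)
Your proposal is correct and rests on the same two ingredients as the paper: the decomposition $\mc{R}_\mu(\rho_x)=\mu\rho_x+(1-\mu)\tr[\rho_x]\sigma$, and the operator inequality $\tr[\Pi_y\rho_x]\leq\tr[\Pi_y]$ combined with POVM normalization. The difference is purely organizational, and it works in your favor. The paper first establishes the bound for the unlifted $\mbf{G}_{\text{ML}}^{n'}=\mbb{I}_{n'}$, then extends to input liftings (trivially, via zero columns), and finally to output liftings by an informal argument about the constraint $\tr[\Pi'_y\rho_x]+\tr[\Pi'_{y'}\rho_x]\leq 1$ when a column acquires two unit entries. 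You instead extract up front the single structural invariant that all ML facets share — by Proposition \ref{prop-ambiguous-guessing-facets}(i) every row of $\mbf{G}_{\ML}$ has exactly one $1$ and is otherwise zero, a property trivially preserved by permutations, appended zero columns, and row duplication — and this lets you bound $\sum_{y,x}G_{y,x}\tr[\Pi_y\rho_x]\leq\sum_y\tr[\Pi_y]$ and compute $\sum_{y,x}G_{y,x}\tr[\Pi_y\sigma]=\sum_y\tr[\Pi_y\sigma]=1$ for all liftings in a single step, dispensing with the case analysis. This is tighter bookkeeping on the same underlying proof. One shared detail worth making explicit (the paper also elides it): when the output Hilbert space is strictly larger than the $d$-dimensional input subspace (e.g.\ the erasure channel), one should use $\rho_x\leq P_A$ with $P_A$ the projector onto the input support, giving $\sum_y\tr[\Pi_y\rho_{x_y}]\leq\tr[P_A]=d$ rather than $\tr[\mbb{I}_B]$; otherwise the sum $\sum_y\tr[\Pi_y]$ equals the output dimension, which may exceed $d$.
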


The upper bound on the maximum likelihood score from Proposition \ref{prop-replacer-ml-bound} serves as a lower bound on the signaling dimension of the partial replacer channel $\kappa(\mc{R}_\mu)$.
This follows from the fact that if $\mbf{P}_{\mc{R}_\mu}\notin \mc{M}_r^{n \to n'}$, then $\kappa(\mc{R}_\mu) > r$
Furthermore, the integer nature of the signaling dimension implies that $\kappa(\mc{R}_\mu) \geq \lceil \mu d + (1-\mu) \rceil$.
We now turn to certify the signaling dimension of the partial erasure channel.

\begin{proposition} \label{prop-signaling-dimension-of-erasure-channel}
    The signaling dimension of of a $d$-dimensional partial erasure channel is,
    
    \begin{equation} \label{Eq:erasure-signaling}
        \kappa(\mc{E}_\mu) =  \min\{d,\lceil \mu d+1 \rceil\}.
    \end{equation}
    
    \begin{proof}
        Let the classical channel $\mbf{P}_{\mc{E}_\mu}$ be induced by the partial erasure channel $\mc{E}_\mu$ via Eq. \eqref{Eq:channel-induce} for any collection of quantum states $\{\rho_x\}_{x}$ and POVM $\{\Pi_y\}_{y}$.
        The transition probabilities are then expressed
        
        \begin{equation} \label{Eq:erasure-channel-correlations}
            P_{\mc{E}_\mu}(y|x)=\mu P_{\text{id}_d}(y|x)+(1-\mu)P_{\ket{E}}(y),
        \end{equation}
        
        \noindent where $P_{\text{id}_d}(y|x) = \tr[\Pi_y \rho_x]$ and $P_{\ket{E}}(y) = \tr[\Pi_y \op{E}{E}]$.
        Since the simulation protocol for partial replacer channels can faithfully simulate $\mbf{P}_{\mc{E}_\mu}$, the upper bound $\kappa(\mathcal{R}_\mu) \leq \lceil \mu d + 1\rceil$ holds (see the proof of Theorem \ref{thm-replacer-bounds} in the main text).  Therefore, $\min\{d,\lceil \mu d + 1 \rceil \} \geq \kappa(\mc{E}_\mu)$.
        To establish a lower bound on $\kappa(\mc{E}_\mu)$ we consider the channel $\mbf{P}_{\mc{E}_\mu}\in\mc{P}^{d \to (d+1)}$ generated by the scenario where Alice sends the computational basis states $\{\op{x}{x}\}_{i=1}^d$ and Bob measures with the POVM $\{ \op{y}{y}\}_{y=1}^{d+1}$ where $\ket{d+1}=\ket{E}$,
        
        \begin{align}
            \mbf{P}_{\mc{E}_\mu} & = \sum_{x=1}^d\sum_{y=1}^{d+1} \tr\Big[\op{y}{y}\mc{E}_\mu(\op{x}{x})\Big]\op{y}{x} \\
            & = \sum_{x=1}^d\sum_{y=1}^{d+1}\left( \mu \tr\Big[\ket{y}\ip{y}{x}\bra{x}\Big] + (1-\mu)\tr\Big[\ket{y}\ip{y}{E}\bra{E}\Big]\right)\op{y}{x} \\
            & = \mu\sum_{x=1}^d\op{x}{x} + (1-\mu)\sum_{x=1}^d \op{E}{x}\label{Eq:erasure-channel-probabilities}.
        \end{align}
        
        \noindent As demonstrated in Proposition \ref{prop-replacer-ml-bound},  $\mbf{P}_{\mc{E}_\mu}$ achieves the maximum likelihood upper bound for partial replacer channels, $\langle\mbf{G}_{\text{ML}},\mbf{P}_{\mc{E}_\mu} \rangle = \mu d + (1-\mu)$.
        In fact, this bound also holds for non-orthogonal quantum states  $\{\rho_x\}_{x\in[n]}$ where $n> d$.
        
        To improve the lower bound on $\kappa(\mc{E}_\mu)$ beyond Proposition \ref{prop-replacer-ml-bound}, we consider the ambiguous polytope $\mc{A}_{(n'-1),r}^{(n'-1)\to n'}$ with ambiguous guessing facets $\mbf{G}_?^{n',r}$ that are tight Bell inequalities of $\mc{C}_r^{n \to n'}$ (see Appendix \ref{Appendix-ambiguous-guessing-facets}). 
        Our goal is to find the smallest integer $r$ such that $\mbf{P}_{\mc{E}_\mu}\in\mc{A}_{(n'-1),r}^{(n'-1) \to n'}$, that is, $\langle\mbf{G}_?^{n',r},\mbf{P}_{\mc{E}_\mu}\rangle\leq r(n'-r)$ is satisfied.
        Consider the erasure channel $\mbf{P}_{\mc{E}_\mu}\in\mc{P}^{d\to (d+1)}$ described by Eq. \eqref{Eq:erasure-channel-probabilities}. We find that the inequality
        
        \begin{align} \label{Eq:erasure-inequality}
            r(n'-r) \geq \langle \mbf{G}_?^{n',r},\mbf{P}_{\mc{E}_\mu}\rangle  =  (n'-r)\mu d + (1-\mu)(n'-1),
        \end{align}
        
        \noindent is violated if $\mbf{P}_{\mc{E}_\mu}\notin\mc{A}_{(n'-1),r}^{(n'-1) \to n'}$ for $n'-2 \geq r \geq 2$.
        Note that in our example $n'=d+1$, however, this procedure holds for any $n' > d$.
        Rearranging inequality \eqref{Eq:erasure-inequality}  into the form,
        
        \begin{equation} \label{Eq:erasure-quadratic}
            0 \geq r^2 - r(\mu d + n') + \mu d n' + (1 - \mu)(n'-1),
        \end{equation}
        
        \noindent allows us to find the values of $r$ for which inequality \eqref{Eq:erasure-inequality} is satisfied by solving for the zeros $r_\pm$ of the quadratic on the RHS of Eq. \eqref{Eq:erasure-quadratic},
        
        \begin{equation}\label{Eq:erasure-zeros}
            r_{\pm} = \frac{1}{2}(\mu d + n') \pm \frac{1}{2}\sqrt{(n'-\mu d)^2 - 4(1-\mu)(n'-1)}.
        \end{equation}

        \noindent Since the parabola of Eq. \eqref{Eq:erasure-quadratic} is concave up, all integer values of $r\in[r_-,r_+]$ satisfy inequality \eqref{Eq:erasure-inequality}.
        Furthermore, the smallest integer for which the inequality is satisfied is $r = \lceil r_{-} \rceil$.
        Therefore, the signaling dimension is bounded as
        
        \begin{equation} \label{Eq:erasure-signaling-dim-lower-bound}
            \kappa(\mc{E}_\mu) \geq r = \left\lceil \frac{1}{2}(\mu d + n') - \frac{1}{2}\sqrt{(n'-\mu d)^2 - 4(1-\mu)(n'-1)} \right\rceil.
        \end{equation}
        
        \noindent The value of $r$ in Eq. \eqref{Eq:erasure-signaling-dim-lower-bound} satisfies the facet inequality \eqref{Eq:erasure-inequality} for all allowed values of $n'$, $\mu$, and $d$.
        Note that $n'$ is a free parameter which we can choose as any integer $n' \geq d+1$.
        In our example, $\mbf{P}_{\mc{E}_\mu}$ has $n'=(d+1)$ which obtains the lower bound $\kappa(\mc{E}_\mu) \geq \lceil\mu d + 1\rceil$.
        To see this, we substitute $n'= (d+1)$ into Eq. \eqref{Eq:erasure-signaling-dim-lower-bound} and perform some algebra,
        
        \begin{align}
            r =& \:\biggl\lceil\frac{1}{2} \Big(\mu d + d + 1\Big) \notag \quad- \frac{1}{2}\sqrt{\Big(d(1-\mu) +1\Big)^2 - 4d(1-\mu)} \biggr\rceil \\
            =& \left\lceil \frac{1}{2} \Big(\mu d + d + 1\Big) - \frac{1}{2}\sqrt{\Big(d(1-\mu) -1\Big)^2}  \right\rceil \\
            =& \left\lceil \frac{1}{2} \Big(\mu d + d + 1\Big) - \frac{1}{2}\Big(d(1-\mu) -1\Big) \right\rceil \\
            =& \left\lceil \frac{1}{2} (\mu d + 1) + \frac{1}{2}(\mu d +1) \right\rceil \\
            =& \;\lceil\mu d + 1 \rceil.
        \end{align}
        
        \noindent Hence $\kappa(\mc{E}_\mu) \geq r = \lceil \mu d + 1 \rceil$.
        Additionally, substituting $n' > d + 1$ into Eq. \eqref{Eq:erasure-signaling-dim-lower-bound} results in a necessarily smaller value or $r$ therefore $n' = d+1$ is a maximum.
        
        It is important to note that the lower bound $\kappa(\mc{E}_\mu) \geq \lceil \mu d + 1\rceil$ only holds for $r \leq n'-2$ because $\mbf{G}^{n',r}_?$ is not a facet for signaling polytopes $\mc{C}_r^{(n'-1)\to n'}$ with $r > n'-2$.
        Therefore, we must consider the edge case where $r=n'-1=n$, that is, the case where the trivial upper bound of Eq. \eqref{Eq:classical-cardinality-bound} is obtained.
        From Theorem \ref{Thm:main1} Condition (ii) we know that $\mc{C}_{n-1}^{n\to n'}=\cap{k=n}^{n'}\mc{A}_{k,n-1}^{n \to n'}$.
        It follows for the edge case $r=n'-1=n$ that if a channel $\mbf{P}_{\mc{E}_{\mu}}\notin\mc{A}_{(n'-1),r}^{(n'-1)\to n'}$, then $\kappa^{n\to n'}(\mc{E}_\mu)=\min\{n,n'\}=(n'-1)$.
        Hence $\kappa(\mc{E}_\mu)$ is proven to be tight with the upper bound.
        To illustrate this case we consider inequality \eqref{Eq:erasure-quadratic} and substitute $r= n'-2$, 
        
        \begin{align}
            0 & \geq (n'^2 - 4 n' + 4) - (n'-2)(\mu d + n') + \mu d n' + (1-\mu)(n'-1) \\
            & \geq -4 n' + 4 + 2 \mu d  + 2 n' + (1-\mu) n' - (1-\mu)\label{Eq:erasure-edge-case-1}.
        \end{align}
        
        \noindent Next, we substitute $d=n'-1$ into Eq. \eqref{Eq:erasure-edge-case-1} as this is the edge case we wish to consider,
        
        \begin{align}
            0 & \geq -4 n' + 4 + 2 \mu (n'-1)  + 2 n' + (1-\mu) n' - (1-\mu) \\
            & \geq n'(\mu - 1) + 3 - \mu \\
            & \geq 3 - n' + \mu(n' - 1) \label{Eq:erasure-edge-case}.
        \end{align}
        
        \noindent Rearranging inequality \eqref{Eq:erasure-edge-case}, we find that it is satisfied \textit{iff}, $\frac{n'-3}{n'-1} \geq \mu$.
        Therefore, when $\mu > \frac{n'-3}{n'-1}$, inequality \eqref{Eq:erasure-inequality} is violated and, by Theorem \ref{Thm:main1}(ii) we certify that $\kappa(\mc{E}_\mu) = d$.
        Considering this edge case, we arrive at the conclusion that $\kappa(\mc{E}_\mu) \geq \min\{d, \lceil \mu d + 1\rceil\}$ which is exactly the upper bound $\min\{d, \lceil \mu d + 1\rceil \}\geq \kappa(\mc{E}_\mu)$.
        That is, the signaling dimension of the erasure channel is bounded tightly from above and below from which it follows, $\kappa(\mc{E}_\mu) = \min\{d,\lceil \mu d + 1\rceil\}$.
    \end{proof}
\end{proposition}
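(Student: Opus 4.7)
The plan is to treat the upper and lower bounds of Eq.~\eqref{Eq:replacer-bounds} separately, and then sharpen the lower bound in the special case of the erasure channel by invoking an ambiguous guessing facet.

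For the upper bound, I first note the trivial $\kappa(\mc{R}_\mu) \leq d$ from Eq.~\eqref{Eq:bound-dimension}. To obtain the nontrivial upper bound $\lceil \mu d + 1\rceil$, I would exploit the convex decomposition
\begin{equation}
    P_{\mc{R}_\mu}(y|x)=\mu P(y|x)+(1-\mu)S(y),
\end{equation}
where $P(y|x)=\tr[\Pi_y\rho_x]$ is simulatable with $d$ classical messages by Frenkel--Weiner \cite{Frenkel2015} and $S(y)=\tr[\Pi_y\sigma]$ requires no communication at all. The idea is to interpolate using shared randomness: Alice and Bob sample a random subset $\nu\subseteq[d]$ of cardinality $r-1$ with $r=\lceil\mu d+1\rceil$; Alice forwards her Frenkel--Weiner message $m$ only when $m\in\nu$, otherwise she sends a null flag; Bob applies the corresponding decoder $\mbf{R}_\lambda$ with a probability calibrated to $\mu d/(r-1)$ and samples from $S(y)$ otherwise. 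The event $\Pr\{m\in\nu\}=(r-1)/d$ supplies exactly the $\mu$ weight needed, and the choice $r\geq\mu d+1$ is what guarantees $\mu d/(r-1)\leq 1$, justifying the value $r=\lceil\mu d+1\rceil$.

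For the lower bound, I would certify a violation of a maximum likelihood facet. Choose Alice to send orthonormal states $\{\ket{i}\}_{i=1}^d$ and Bob to measure in a projective basis extended (if necessary) to a complete POVM on the output Hilbert space so that $\sum_y\tr[\Pi_y\sigma]=1$. A short computation gives $\sum_{i=1}^d\bra{i}\mc{R}_\mu(\op{i}{i})\ket{i}=\mu d+(1-\mu)$, and by Proposition~\ref{prop-ambiguous-guessing-facets}(i) any channel in $\mc{C}_r^{n\to n'}$ satisfies $\sum_y\max_x P(y|x)\leq r$. Hence $r\geq\mu d+1-\mu$, yielding $\kappa(\mc{R}_\mu)\geq\lceil\mu d+1-\mu\rceil$.

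For the tightness on $\mc{E}_\mu$, note that the gap $\lceil\mu d+1\rceil-\lceil\mu d+1-\mu\rceil$ can be nonzero. To close it I would use the richer ambiguous guessing facets of Proposition~\ref{prop-ambiguous-guessing-facets}(ii). The channel generated by Alice sending $\{\op{i}{i}\}_{i=1}^d$ and Bob measuring in the basis $\{\ket{1},\ldots,\ket{d},\ket{E}\}$ has the form of $\mu\mbb{I}_d$ augmented by a uniform erasure row $[(1-\mu),\ldots,(1-\mu)]$. I would evaluate $\langle\mbf{G}_?^{n',r},\mbf{P}_{\mc{E}_\mu}\rangle$, which reduces to a quadratic in $r$ whose smaller root determines the smallest simulating dimension; setting $n'=d+1$ and simplifying algebraically should collapse the discriminant to a perfect square and yield exactly $\lceil\mu d+1\rceil$. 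The edge regime where $r$ saturates the trivial bound $\min\{n,n'\}$ (so that $\mbf{G}_?^{n',r}$ is no longer a facet) must be handled separately by appealing to Theorem~\ref{Thm:main1}(ii), which characterizes $\mc{C}_{n-1}^{n\to n'}$ as an intersection of ambiguous polytopes.

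The main obstacle is the construction of the upper-bound protocol: one must verify that the probabilistic routing through the random subset $\nu$ reproduces $P_{\mc{R}_\mu}$ exactly and not merely in expectation, and the constraint $\mu d/(r-1)\leq 1$ is precisely what forces the ceiling. A secondary difficulty is the algebra of the erasure tightness: it requires isolating the correct choice $n'=d+1$ for which the quadratic discriminant factors, together with a careful edge-case argument for $r=n'-1$ where the ambiguous guessing facets alone do not suffice and Theorem~\ref{Thm:main1}(ii) must be invoked.
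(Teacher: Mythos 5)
Your proposal matches the paper's proof essentially step for step: the subset-sampling protocol with random $\nu\subseteq[d]$ for the upper bound $\lceil\mu d+1\rceil$, the ambiguous guessing facet $\mbf{G}_?^{n',r}$ evaluated on the orthonormal-basis-plus-erasure-flag channel yielding a quadratic in $r$ whose discriminant becomes a perfect square at $n'=d+1$, and the separate appeal to Theorem~\ref{Thm:main1}(ii) for the edge case $r=n'-1$ where the ambiguous guessing inequality ceases to be a facet. This is the same argument as in the paper, stated a bit more compactly.
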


\end{document}